\newtheorem{theorem}{Theorem}[section]
\newtheorem{definition}{Definition}[section]
\newtheorem{example}[theorem]{Example}
\newtheorem{lemma}{Lemma}[section]
\newtheorem{proposition}{Proposition}[section]
\let\oldfootnote\footnote
\renewcommand\footnote[1]{%
\oldfootnote{\hspace{0.6mm}#1}}
\begin{document}

\title{Investor Experiences and Financial Market Dynamics\thanks{
		We thank Marianne Andries, Nick Barberis, Dirk Bergemann, Julien Cujean, Xavier Gabaix, Lawrence Jin, and workshop participants at LBS, LSE, NYU, Pompeu Fabra, Stanford, UC Berkeley, as well as the ASSA, NBER EFG Behavioral Macro, NBER Behavioral Finance, SITE (Psychology and Economics segment),
		SFB TR 15 (Tutzing, Germany) conferences for helpful comments.
		We also thank Felix Chopra, Marius Guenzel, Canyao Liu, Leslie Shen, and Jonas Sobott for excellent research assistance. 
}}
\author{Ulrike Malmendier\thanks{%
		Department of Economics and Haas School of Business, University of California, 501 Evans Hall, Berkeley, CA 94720-3880, ulrike@berkeley.edu} \\
	\emph{UC\ Berkeley, NBER, and CEPR} \and Demian Pouzo\thanks{%
		Department of Economics, University of California, 501 Evans Hall, Berkeley, CA 94720-3880, dpouzo@berkeley.edu} \\
	\emph{UC\ Berkeley} \and Victoria Vanasco\thanks{%
		CREI, UPF, and BGSE, Ramon Trias Fargas 25-27,
		Barcelona, 08005, Spain, vvanasco@crei.cat} \\
	\emph{CREI, UPF, and BGSE} \\ 
}
\date{\today}
\maketitle
\thispagestyle{empty}
\vspace{-.2in}

\onehalfspacing

\begin{abstract}

\medskip
	
How do macro-financial shocks affect investor behavior and market dynamics? Recent evidence on experience effects suggests a long-lasting influence of personally experienced outcomes on investor beliefs and investment, but also significant differences across older and younger generations.
We formalize experience-based learning in an OLG model, 
where different cross-cohort experiences generate persistent heterogeneity in beliefs, portfolio choices, and trade. The model allows us to characterize a novel link between investor demographics and the dependence of prices on past dividends, while also generating known features of asset prices, such as excess volatility and return predictability.
The model produces new implications for the cross-section of asset holdings, trade volume, and investors' heterogenous responses to recent financial crises, which we show to be in line with the data.


\end{abstract}

\doublespacing
\newpage








\section{Introduction}

\pagenumbering{arabic}

Recent crises in the stock and housing markets have stimulated a new wave of macro-finance models of risk-taking. A key challenge, and motivation, has been to find tractable models of investor expectations that account not only for asset-pricing puzzles such as return predictability (\citeN{CampbellShiller1988}, \citeN{fama1988dividend}) and excess volatility (\citeN{LeRoyPorter1981}, \citeN{Shiller1981}, \citeN{LeRoy2005}), but also for micro-level stylized facts such as investors chasing past performances. As argued by \citeN{Woodford2013}, the empirical evidence suggests a need for dynamic models 
that go beyond the rational-expectations hypothesis. 
In line with Woodford's proposal, models of natural expectation formation  (\citeN{FusterHebertLaibson2011}; \citeN{FusterLaibsonMendel2010}) and over-extrapolation (\citeN{barberis2015x}; \citeN{barberis2016extrapolation}) successfully capture a wide range of the stylized facts. A core feature of these models is that agents over-weigh recent realizations of the relevant economic variables when forming beliefs. 

Another set of emerging stylized facts which focuses on the long-lasting effects of macro-financial shocks and their systematic cross-sectional differences, has been harder to capture by these approaches. As conveyed by the notion of ``depression babies'' or the ``deep scars'' of the 2008 financial crisis  (\citeN{Blanchard2012}, \citeN{Malmendier_Shen2015}), macro-economic shocks appear to alter investment and consumption behavior for decades to come, 
beyond the time frame of existing models, and there is significant cross-sectional heterogeneity. Younger cohorts tend to react significantly more strongly than older cohorts.
The growing empirical literature on \textit{experience effects} documents, for example, that personal lifetime experiences in the stock-market predict future willingness to invest in the stock market (\citeN{Malmendier_Nagel2008}), and the same for IPO experiences and future IPO investment (\citeN{Kaustia_Knuepfer2008};  \citeN{HirshleiferEtAl2011}). There is also evidence of experience effects in non-finance settings, e.\,g., on the long-term effects of graduating in a recession on labor market outcomes (\citeN{OreopoulosEtAl2012}) or of living in (communist) Eastern Germany to political attitudes post-reunification (\citeN{AlesinaFuchs2007}).\footnote{\hspace{0mm} See also \citeN{Giuliano_Spilimbergo2014}, who relate the effects of growing up in a recession to redistribution preferences.} 
In all of these applications, researchers identify a long-lasting impact of crisis experiences on individual risk-taking and illustrate their cohort-specific impact.

Much of the evidence on experience effects pertains directly to stated beliefs, e.\,g., beliefs about future stock returns (in the UBS/Gallup data), about future inflation (in the Michigan Survey of Consumers), or about the outlook for durable consumption (also in the MSC).\footnote{\hspace{0mm} Cf. \citeN{Malmendier_Nagel2008}, \citeN{Malmendier_Nagel2013}, \citeN{Malmendier_Shen2015}.} A key difference relative to over-extrapolation and related approaches is that experience-based learning generates cohort-specific differences in beliefs and in their updating after a common shock.
While more evidence on the exact process of household-level learning is needed (see the discussions in \citeN{Campbell2008} and \citeN{AgarwalDriscollGabaixLaibson2013}), the over-weighing of personal experiences appears to be a pervasive and robust psychological phenomenon affecting belief formation, which is related to availability bias as first put forward by \citeN{TverskyKahneman1974}, as well as the extensive evidence on the different effects of description versus experience.\footnote{See, for example, \citeN{weber1993determinants}, \citeN{Hertwigetal2004}, and \citeN{simonsohn2008tree}.}  

This growing empirical literature on experience effects and its strong psychological underpinning raise the question whether experience-based learning and the implied dynamic cross-cohort differences
have the potential to explain aggregate dynamics.
For example, which generations invest in the stock market and how much? What are the dynamics of stock market investment? How will the market react to a macro-shock? 

Our paper develops an equilibrium model of asset markets that formalizes experience-based learning and the resulting belief heterogeneity across investors. The model clarifies the channels through which past realizations affect future market outcomes by pinning down the effect on investors' own belief formation and the interaction with other generations' belief formation. We derive the aggregate implications of learning from experience and the implied cross-sectional differences in investor behavior. To our knowledge, this model is the first to tease out the tension between experience effects and recency bias, including the stronger reactions of the young than the old to a given macro shock.
It aims to provide a guide for testing to what extent experience-based learning can enhance our understanding of market dynamics and of the long-term effect of demographic changes.

The key model features are as follows. We consider a stylized overlapping generations (OLG) equilibrium model. Agents have CARA preferences and live for a finite number of periods.\footnote{The use of CARA preferences with Normal shocks allows us to keep our theoretical analysis tractable, and is widely used in finance for this reason (see \citeN{Vives2008}).} During their lifetimes, they choose portfolios of a risky and a risk-free security. We assume that agents maximize their per-period payoffs, i.\,e., are myopic.\footnote{Myopic agents omit the correlation between their next-period payoff and their continuation value function. This yields behavior that is analogous to the commonly used assumption of short-term traders (see \citeN{Vives2008}). In a previous version of the paper, available on \href{url}{https://arxiv.org/pdf/1612.09553.pdf}, we show that, when the myopia assumption is removed, the first-order effects of experience-based learning are identical to those derived for myopic agents.}  The risky asset is in unit net supply and pays a random dividend every period. The risk-free asset is in infinitely elastic supply and pays a fixed return. Investors do not know the true mean of 
dividends, but learn about it by observing the history of dividends. 


We begin by characterizing the benchmark economy in which agents know the true mean of dividends. In this setting, there is no heterogeneity, and thus the demands of all active market participants are equal and constant over time. Furthermore, there is a unique no-bubble equilibrium with constant prices. 

We then introduce experience-based learning. The  assumed belief formation process captures the two main empirical features of experience effects: First, agents over-weigh their lifetime experiences. Second, their beliefs exhibit recency bias. We identify two channels through which past dividends affect market outcomes. The first channel is the belief-formation process: shocks to dividends shape agents' beliefs about future dividends. Hence, individual demands depend on personal experiences, and the equilibrium price is a function of the history of dividends observed by the oldest market participant. The second channel is the generation of cross-sectional heterogeneity: different lifetime experiences generate persistent differences in beliefs. Agents ``agree to disagree." Furthermore, younger cohorts react more strongly to a dividend shock than older cohorts as it makes up a larger part of their lifetimes. A positive shock induces younger cohorts to invest relatively more in the risky asset, while a negative shock tilts the composition towards older cohorts. 
Thus, the model has implications for the time series of trade volume: Changes in the level of disagreement between cohorts lead to higher trade volume in equilibrium. 

The model captures an interesting tension between heterogeneity in personal experiences (which generates belief heterogeneity across cohorts) and recency bias (which reduces belief heterogeneity).
When there is strong recency bias, all agents pay a lot of attention to the most recent dividends. Thus, their reactions to a recent shock are similar. 
Price volatility increases, while price auto-correlation and trade volume decrease. The opposite holds when the recency bias is weak, and agents form their beliefs using their experienced history. Hence, the reaction of prices and trade volume to changes in dividends is tightly linked to the relative extent of recency bias versus experience-based differences across cohorts in a given market, which are in turn influenced by demographics.

We explore the connection between market demographics and the dependence of prices on past dividends by analyzing the effect of a one-time change in the fraction of young agents that participate in the market. We find that the demographic composition of markets significantly influences the dependence of prices on past dividends. For example, when the market participation of the young relative to the old  increases, the relative reliance of prices on more recent dividends increases. This is in line with evidence in \citeN{cassella2015} who find that the level of extrapolation in markets is positively related to the fraction of young traders in that market, and with \citeN{collin2016asset} who find that the price-divided ratio is higher and more sensitive to macro-shocks when the ratio of young to old market participants is larger. 

We then turn to several tests of the empirical implications of our model. First, we show that the model accommodates several key asset pricing features identified in prior literature. 
We follow the approach in \citeN{Campbell_Kyle1993} and \citeN{barberis2015x} to contrast CARA-model moments with the data. We show the CARA-model analogues of return predictability (\citeN{CampbellShiller1988}) and of predictability of the dividend-price ratio.
This predictability of future price changes (and dividend-price ratio changes) stems solely from the experience-based learning mechanism rather than, say, a built-in dependence on dividends or past returns, and it depends on the demographic structure of the market. Similarly, the model generates excess volatility in prices and price changes as established by \citeN{LeRoyPorter1981}, \citeN{Shiller1981}, and  \citeN{LeRoy2005}, above and beyond the stochastic structure of the dividend process.

Experience-based learning generates new predictions for the cross-section of asset holdings and trade volume, which we 
test in the data. Using the representative sample of the \emph{Survey of Consumer Finance} (SCF), merged with data from the Center for Research in Security Prices (CRSP) and historical data on stock-market performance, we first replicate and extend the evidence in \citeN{Malmendier_Nagel2008} on stock-market participation. We show that cross-cohort differences in lifetime stock-market experiences predict cohort differences in stock-market participation and in the fraction of liquid assets invested in the stock market. In other words, cross-cohort differences both on the extensive and on the intensive margin of stock market participation vary over time as predicted by the time series of cross-cohort differences in lifetime experiences. We then 
turn to the predictions regarding trade volume, and 
show that the de-trended turnover ratio is strongly correlated with differences in lifetime market experiences across cohorts. That is, changes in the experience-based level of disagreement between cohorts predict higher abnormal trade volume, as predicted by the model.


Overall, experience-based learning offers a unifying explanation for financial-market features of both prices and trade volume. It also has novel implications for the cross-sectional differences in market participation and portfolio choice, which we show are consistent with the data. 


\medskip

\textbf{Related Literature}. There is a wide literature on the role of learning in explaining asset pricing puzzles. Most closely related, \citeN{cogley2008market} propose a model in which the representative consumer uses Bayes' theorem to update estimates of transition probabilities as realizations accrue. As in our paper, agents use less data than a ``rational-expectations-without-learning econometrician'' would give them. There are two important differences in our setup. First, agents are not Bayesian. Second, different cohorts have different, finite experiences. Consequently, observations during an agent's lifetime have a non-negligible effect on beliefs and generate cross-cohort heterogeneity. 

Our paper also relates to the work on extrapolation by \citeN{barberis2015x} and \citeN{barberis2016extrapolation}. They consider a consumption-based asset pricing model with both ``rational" and ``extrapolative" agents. The latter believe that positive price changes will be followed by positive changes. In contrast, the heterogeneity in extrapolation in our model is linked to the demographic structure of the market. In addition, while cross-sectional heterogeneity in their model arises from the presence of both ``rational" and ``extrapolative" infinitely-lived agents, in our model, it results from the different experiences of different finitely-lived cohorts. This allows us to generate predictions about the cross-section of asset holdings and the relation between extrapolation and demographics in line with the data.


More generally, our paper relates to the large asset-pricing literature that departs from the correct-beliefs paradigm. For instance, \citeN{BarskyDelong1993}, \citeN{Timmermann1993}, \citeN{Timmermann1996}, \citeN{Adametal} study the implications of learning and \citeN{CecchettiLamMark2000} and \citeN{jin2015speculative} of distorted beliefs for stock-return volatility and predictability, the equity premia, and booms and busts in markets. At the same time, our approach is different from asset pricing models with asymmetric information, as surveyed in \citeN{brunnermeier2001asset}. While in these models agents want to learn the information their counter-parties hold, in our model of experienced-based learning, information is available to all agents at all times.  


Finally, there are contemporaneous papers that also explore the macroeconomic effects of learning-from-experience in OLG models. \citeN{schraeder2015information} focuses on how it impacts high-frequency trading patters, such as overreaction and reversal, while \citeN{ehling2018asset} analyze the trend-chasing behavior of the young and its implications for risk-premia and the risk-free rate. More closely related to our work, \citeN{collin2016asset} explore the role of demographics on asset pricing features, such as return predictability and excess volatility. Our paper contributes to this literature in several respects. First, we allow for recency bias in the belief formation process, as both the underlying psychology literature on availability bias (\citeN{TverskyKahneman1974}) and the prior empirical literature on experience effects identify it as an important component of how individuals assign weights to previously experienced outcomes. Allowing for recency bias turns out to be also of interest theoretically, as the analysis reveals that an increase in recency bias reduce the cross-sectional heterogeneity driven by the experiential learning bias.
Second, our agents are not Bayesian and do not update their posterior variance as they gain experience, and thus our results do not depend on heterogeneous posterior variances.\footnote{Once we depart from the Bayesian paradigm, nothing guarantees that our agents understand that more information increases the precision of their beliefs. If this was the case, for example, one should expect agents to also incorporate past data.}  Third, our CARA-normal framework allows us to obtain closed-form solutions to clearly understand the link between demographics, experience, and recency. Finally, we consider our empirical approach more comprehensive, as we test the model predictions about portfolio holdings, asset pricing features, and trade volume. 

There is also a large literature that proposes other mechanisms, such as borrowing constraints or life-cycle considerations, as the link from demographics to asset prices and other equilibrium quantities. We view these other mechanisms as complementary to our paper. They are omitted for the sake of tractability of the model.

The remainder of the paper is organized as follows. First we present the model setup and the notion of experience-based learning in Section \ref{sec:Baseline}. We illustrate the mechanics of the model in a simplified setting in Section \ref{sec: ToyModel}. The main results are in Section \ref{sec:results}. In Section \ref{sec: demographics} we extend the model to study demographic shocks and in Section \ref{sec:EmpiricalImplications} we present empirical implications. Section \ref{sec: Conclusions} concludes. All proofs are 
in 
the Appendix.

\section{Model Set-Up}\label{sec:Baseline}

Consider an infinite-horizon economy with overlapping generations of a continuum of risk-averse agents. At each point in time $t \in \mathbb{Z}$, a new generation is born and lives for $q$ periods, $q \in \{1,2,3,...\}$. Hence, there are $q+1$ generations alive at any $t$. The generation born at $t=n$ is called generation $n$. Each generation has a mass of $q^{-1}$ identical agents.

Agents have CARA preferences with risk aversion $\gamma$. They 
can transfer resources across time by investing in financial markets. Trading takes place at the beginning of each period. At the end of the last period of their lives, agents consume the wealth they have accumulated. We use $n_q$ to indicate the last time at which generation $n$ trades, $n_q = n + q - 1$. (If the generation is denoted by $t$ we use $t_q$.)
Figure \ref{f: timeline} illustrates the time line of this economy for two-period lived generations ($q=2$).

There is a risk-free asset, which is in perfectly elastic supply and 
has a gross return of $R > 1$ at all times. And there is a single risky asset (a Lucas tree), which is in unit net supply and pays a random dividend $d_{t}\sim N\left(\theta,\sigma^{2}\right)$ at time $t$. 
To model uncertainty about fundamentals, we assume that agents do  not know the true mean of dividends $\theta$ and use past observations to estimate it. To keep the model tractable, we assume that the variance of dividends $\sigma^2$ is known at all times. 

For each generation $n \in \mathbb{Z}$, the budget constraint at any time $t \in \{n,...,n+q\}$ is 
\begin{align}\label{eqn:BC}
W^{n}_{t} = x^{n}_{t} p_{t}  + a^{n}_{t},
\end{align}
where $W_{t}^{n}$ denotes the wealth of generation $n$ at time $t$,  $x^{n}_{t}$ is the investment in the risky asset (units of Lucas tree output), $a^{n}_{t}$ is the amount invested in the riskless asset, and $p_{t}$ is the price of one unit of the risky asset at time $t$. As a result, wealth next period is
\begin{align}\label{eqn:LoM-W}
W^{n}_{t+1} = x^{n}_{t}( p_{t+1} + d_{t+1})  + a^{n}_{t} R =
x^{n}_{t} ( p_{t+1} + d_{t+1} - p_{t}R)+ W^{n}_{t}R.
\end{align}

We denote the excess payoff received in $t+1$ from investing at time $t$ in one unit of the risky asset, relative to the riskless asset, as $s_{t+1} \equiv p_{t+1} + d_{t+1} - p_{t} R$. This is analogous to the equity premium in our CARA-model. 
Using this notation, $W^{n}_{t+1} = x^{n}_{t} s_{t+1} + W^{n}_{t}R$. 

We assume that agents maximize their per-period utility (i.\,e., are myopic). This assumption simplifies the maximization problem considerably and highlights the main determinants of portfolio choice generated by experience-based learning. 

For a given initial wealth level $W^{n}_{n}$, the problem of a generation $n$ 
at each time $t \in \{ n,..., n_q \}$ is to choose $x^{n}_{t}$ to maximize $E_{t}^{n}[-\exp(-\gamma W_{t+1}^{n})]$, and hence
\begin{align}\label{eq: Generation n Problem}
x^{n}_{t} \in \arg\max_{x \in \mathbb{R}} E_{t}^{n}\left[-\exp(-\gamma x s_{t+1})\right].
\end{align}
where $E_{t}^{n}\left[\cdot\right]$ is the (subjective) expectation with respect to a Gaussian distribution with variance $\sigma^2$ and a mean denoted by $\theta^n_t$.
We call $\theta^n_t$ the subjective mean of dividends, and we define it below. Note that, when $x^{n}_{t}$ is negative, generation $n$ is short-selling.

\begin{figure}[t] 
	\centering
	\includegraphics[scale=1]{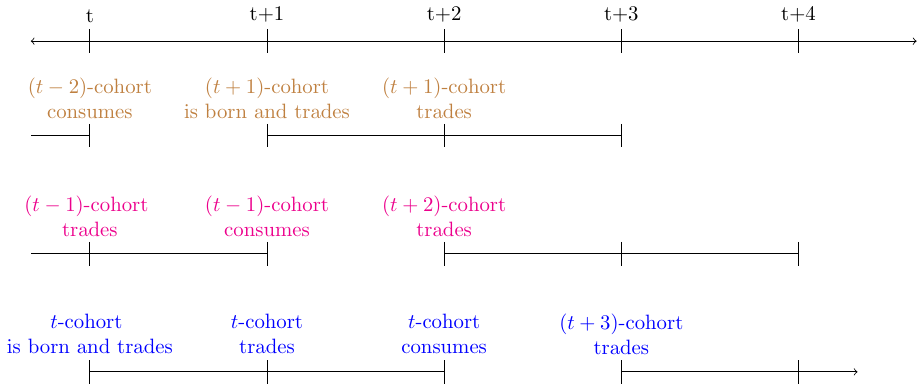} 
	\caption{A time line for an economy with two-period lived generations, $q=2$.} \label{f: timeline}
	\vspace{0.7cm}
\end{figure}


\subsection{Experience-Based Learning} \label{sec:EBL} 

In this framework, experienced-based learning (EBL) means that agents over-weigh realizations observed during their lifetimes when forecasting dividends, and that they may tilt the excess weights towards the most recent observations. For simplicity, we assume that agents \emph{only} use observations realized during their lifetimes.\footnote{We only need agents to discount pre-lifetime relative to lifetime observations 
	for our results to hold.} 
That is, even though they observe the entire history of dividends, they choose to disregard earlier observations.\footnote{In our full-information setting, prices do not add any additional information. While it is possible to add private information and learning from prices to our framework, these (realistic) feature would complicate matters without necessarily adding new intuition.} 

EBL differs from reinforcement learning-type models in two ways. First, as already discussed, EBL agents understand the model and know all the primitives except the mean of the dividend process. Hence, they do not learn \emph{about} the equilibrium, they learn \emph{in} equilibrium. Second, EBL is a passive learning problem in the sense that players' actions do not affect the information they receive. This would be different if we had, say, a participation decision that links an action (participate or not) to the type of data obtained for learning. We consider this to be an interesting line to explore in the future.

Let $m$ denote the prior belief about the mean of dividends that agents are born with, and where we restrict $m$ to be Gaussian with mean $\theta$ for tractability. With this, we construct the subjective mean of dividends of generation $n$ at time $t$ following the empirical evidence on Malmendier and Nagel (2011) as follows 
\begin{equation} \label{eq: PosteriorMeanFormula}
	\theta_{t}^{n}\equiv (1-\omega_{age}) \cdot m + \omega_{age} \cdot \sum_{k=0}^{age} w(k,\lambda,age) d_{t-k},
\end{equation}
where $age=t-n$, and where, for all $k \leq age$,
\begin{align}
	\label{eq:EBL-w}
	w(k,\lambda,age) =& \frac{(age+1- k)^{\lambda}}{\sum_{k'=0}^{age} (age+1- k')^{\lambda}} 
\end{align} 
denotes the weight an agent aged $age$ assigns to the dividend observed $k$ periods earlier, and $w(k,\lambda,age) \equiv 0$ for all $k>age$. 
The denominator in (\ref{eq:EBL-w}) is a normalizing constant that depends only on $age$ and on the parameter that regulates the recency bias, $\lambda$. 
For $\lambda>0$, more recent observations receive relatively more weight, whereas for $\lambda<0$ the opposite holds.  Finally, $\omega_{age} \equiv \frac{age+1}{\tau + age+1}$  
denotes the weight that agents assign to their experience beliefs, which increases with age and decreases with the relative importance agents assign to their prior beliefs, regulated by parameter $\tau \in [0,\infty)$. Since the presence of prior beliefs has no qualitative implications in our model, unless otherwise stated, we study the case of $\tau = 0$, i.e. $\omega_{age}=1$.\footnote{We solve the model for $\tau>0$ in Appendix \ref{app: PriorBeliefs}, and we discuss the quantitative implications of prior beliefs in our model in Section \ref{sec: quantitative}.}

Here are three examples of possible weighting schemes:  
\begin{example}[Linearly Declining Weights, $\lambda=1$]
	For $\lambda=1$, weights decay linearly as the time lag increases, i.\,e., for any $0\leq k,k+j \leq age$,  \begin{align*}
		w(k+j,1,age) - w(k,1,age) = -\frac{j}{\sum_{k'=0}^{age} (age+1- k')}.
	\end{align*}
\end{example}


\begin{example}[Equal Weights, $\lambda=0$]
	For $\lambda=0$, lifetime observations are equal-weighted, i.\,e., for any $0\leq k \leq age$,  
	$w(k,0,age)= \frac{1}{age+1}$.
\end{example}

\begin{example} For $\lambda \rightarrow \infty$, the weight assigned to the most recent observation converges to 1, and all other weights converge to 0, i.\,e., for any $0\leq k \leq age$,   
	$w(k,\lambda,age) \rightarrow 1_{\{ k = 0\}}$.
\end{example}


Observe that by construction, $\theta^{n}_{t} \sim N(\theta, \sigma^{2} \sum_{k=0}^{age} (w(k,\lambda,age))^{2}  )$. Hence, $\theta^{n}_{t}$ does not necessarily converge to the truth as $t \rightarrow \infty$; it depends on whether $\sum_{k=0}^{age} (w(k,\lambda,age))^{2} \rightarrow 0$. This in turn depends on how fast the weights for ``old'' observations decay to zero (i.\,e., how small $\lambda$ is). When agents have finite lives, convergence will not occur. 

We conclude this section by showing a useful property of the weights, which is used in the characterization of our results. 

\begin{lemma}[Single-Crossing Property]
	Let $age'<age$ and $\lambda>0$. Then the function $w(\cdot,\lambda,age) - w(\cdot,\lambda,age')$  changes signs (from negative to non-negative) exactly once over $\{ 0,..., age'+1 \}$. 
	\label{l: SingleCrossing} 
\end{lemma}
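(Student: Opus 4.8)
The plan is to track the sign of the difference $g(k) := w(k,\lambda,age) - w(k,\lambda,age')$ directly on the discrete set $\{0,\dots,age'+1\}$, splitting it into the range where both weights are strictly positive and the boundary point where the younger cohort's weight has already vanished. Throughout, write $S_{a} := \sum_{k'=0}^{a}(a+1-k')^{\lambda}=\sum_{j=1}^{a+1}j^{\lambda}$ for the normalizing denominator in (\ref{eq:EBL-w}), so that $w(k,\lambda,a)=(a+1-k)^{\lambda}/S_{a}$ for $0\le k\le a$. The first, easy observation is that for every $k\in\{age'+1,\dots,age\}$ we have $w(k,\lambda,age')=0$ by the convention $w(\cdot)\equiv 0$ past one's age, whereas $w(k,\lambda,age)=(age+1-k)^{\lambda}/S_{age}>0$ since $age>age'$ forces $age+1-k\ge 1$. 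Hence $g>0$ on all of $\{age'+1,\dots,age\}$; in particular $g(age'+1)>0$, which pins the sign at the right endpoint.

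Next, on $\{0,\dots,age'\}$ both weights are strictly positive, so I would study the ratio
\[
\rho(k) := \frac{w(k,\lambda,age)}{w(k,\lambda,age')} = \frac{S_{age'}}{S_{age}}\left(\frac{age+1-k}{age'+1-k}\right)^{\lambda}
\]
and show it is strictly increasing in $k$. The only $k$-dependent factor is the parenthesized base; writing $\frac{age+1-k}{age'+1-k}=1+\frac{age-age'}{age'+1-k}$, with $age-age'>0$ and the positive denominator $age'+1-k$ strictly decreasing on this range, the base is strictly increasing, and raising to the power $\lambda>0$ preserves this. Since $g(k)=w(k,\lambda,age')\,(\rho(k)-1)$ with $w(k,\lambda,age')>0$, the sign of $g(k)$ equals the sign of $\rho(k)-1$, which by strict monotonicity flips at most once and can only flip from negative to positive. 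Combined with the previous paragraph, this gives \emph{at most} one sign change over the whole set $\{0,\dots,age'+1\}$, necessarily in the claimed direction.

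It then remains to rule out zero sign changes, i.e.\ to certify that $g$ really is negative at the start. Here I would invoke the normalization property noted above, $\sum_{k=0}^{age}w(k,\lambda,age)=\sum_{k=0}^{age'}w(k,\lambda,age')=1$, which yields $\sum_{k=0}^{age}g(k)=0$. Because $g>0$ on $\{age'+1,\dots,age\}$, the partial sum $\sum_{k=0}^{age'}g(k)$ must be strictly negative, so at least one $g(k)$ with $k\le age'$ is strictly negative; by the monotone sign structure just established, the negative values form an initial segment, which forces $g(0)<0$. The sign pattern is therefore exactly ``negative, then positive,'' i.e.\ exactly one crossing, as claimed.

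I expect the strict monotonicity of $\rho$ to be the only genuine computation, and the main conceptual step is to combine the \emph{local} ``at most one crossing'' (from the ratio being monotone, with the point $k=age'+1$ handled separately because the denominator degenerates there) with the \emph{global} ``at least one crossing'' extracted cheaply from the sum-to-one constraint. This lets me avoid the more delicate route of proving $g(0)<0$ by directly comparing the two normalizing constants $S_{age}$ and $S_{age'}$, which would require showing that the weight on the most recent observation, $(a+1)^{\lambda}/S_{a}$, is decreasing in $a$.
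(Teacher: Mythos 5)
Your proof is correct and follows essentially the same route as the paper's: you establish strict monotonicity of the ratio $w(k,\lambda,age)/w(k,\lambda,age')$ in $k$ (the paper does this via the derivative of $x\mapsto \frac{age+1-x}{age'+1-x}$, you do it algebraically, which is equivalent), and you then use the sum-to-one normalization together with positivity of the difference on $\{age'+1,\dots,age\}$ to force a strictly negative value at the start. No substantive differences to report.
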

	
\subsection{Comparison to Bayesian Learning}


To better understand the experience-effect mechanism, we compare the subjective mean of EBL agents to the posterior mean of agents who update their beliefs using Bayes rule. 
We consider two cases: \emph{Full Bayesian Learning} (FBL), wherein agents use all the available observations to form their beliefs; and \emph{Bayesian Learning from Experience} (BLE), where agents only use data realized during their lifetimes. 

\medskip
\noindent \textbf{Full Bayesian Learners.} 
%
To illustrate the comparison of EBL and FBL in a common sample,
just for this analysis, we start the economy at an initial time $t=0$, since FBL use all the available observations since ``the beginning of time." Then, all generations of FBL agents consider all observations since time 0 to form their belief. We denote the prior of FBL agents as $N(m,\sigma_m^{2})$. For simplicity, all generations have the same prior, though the analysis can easily be extended to heterogeneous Gaussian priors across generations.\footnote{The assumption of Gaussianity is also not needed but simplifies the exposition greatly.} 

The posterior mean of any generation alive at time $t$, denoted by $\hat{\theta}_{t}$, is given by
\begin{align*}
	\hat{\theta}_{t} = \frac{\sigma_m^{-2}}{\sigma_m^{-2} + \sigma^{-2}t} m + \frac{\sigma^{-2}t}{\sigma_m^{-2} + \sigma^{-2}t} \left( \frac{1}{t} \sum_{k=0}^{t} d_{k}  \right).
\end{align*}
The belief of an FBL agent is a convex combination 
of the prior $m$ and the average of all observations $d_{k}$ realized since time $0$. The key difference to EBL agents is that differences in personal experiences do not play a role: there is no  heterogeneity in beliefs, and all generations alive in any given period have the same belief about the mean of dividends. 
In addition, beliefs of FBL agents are non-stationary, i.\,e., they depend on the time period. As $t \rightarrow \infty$, the posterior mean converges (almost surely) to the true mean. That is, with FBL the implications of learning vanish as time goes to infinity. With EBL, this is not true. Since agents have finite lives and learn from their own experiences, our model generates learning dynamics even as time diverges.

\medskip
\noindent \textbf{Bayesian Learners from Experience.}
For BLE agents, the situation is different. We assume again that each generation has a prior $N(m,\sigma_m^{2})$ when they are born. Here, the posterior mean of generation $n$ at period $t=n+age$, denoted by $\tilde{\theta}^{n}_{t}$, is given by
%
%
\begin{align*}
	\tilde{\theta}^{n}_{t} = \frac{\sigma_m^{-2}}{\sigma_m^{-2} + \sigma^{-2}(age+1)} m + \frac{\sigma^{-2}(age+1)}{\sigma_m^{-2} + \sigma^{-2}(age+1)} \left( \frac{1}{age+1} 	 \sum_{k=n}^{t} d_{k}  \right).
\end{align*}
The belief of a BLE generation is a convex combination of the prior $m$  and the average of (only) the lifetime observations $d_{k}$ available to date. The BLE belief coincides with belief $\theta^{n}_{t}$ of EBL when there is no recency bias, $\lambda =0$, and the importance assigned to prior belief is $\tau=\frac{\sigma^2}{\sigma_m^2}$. 
\medskip

\subsection{Equilibrium Definition} \label{sec: Equilibrium}

We now proceed to define the equilibrium of the economy with EBL agents. 

\begin{definition}[Equilibrium]
	An equilibrium is a demand profile for the risky asset 
	$\{x^{n}_{t}\}$, 
	a demand profile for the riskless asset 
	$\{a^{n}_{t}\}$, and a price schedule
	$\{ p_{t}\}$ such that: 
	\begin{enumerate}
		\item  Given the price schedule, $\{ (a^{n}_{t},x^{n}_{t}) :
		t \in \{ n,...,n_q \}\}$ solve the generation-$n$ problem. 
		\item The market clears in all periods: 
		$1 = \frac{1}{q} \sum_{n= t-q+1}^{t} x^{n}_{t}$ for all $t \in \mathbb{Z}$.
	\end{enumerate}
\end{definition}
\noindent We focus the analysis on the class of linear equilibria, i.\,e., equilibria with affine prices:

\begin{definition}[Linear Equilibrium]
	A linear equilibrium is an equilibrium wherein prices are an affine function of dividends. That is,  there exists a $K \in \mathbb{N}$, $\alpha \in \mathbb{R}$, and  $\beta_{k} \in \mathbb{R}$ for all $k \in \{0,...,K\}$ such that  
	\begin{equation} \label{e: LinearPrices}
	p_{t}=\alpha+\sum_{k=0}^{K}\beta_{k}d_{t-k}.
	\end{equation}
\end{definition}

\noindent {\bf Benchmark with known mean of dividends.} For the sake of benchmarking our results for EBL agents, we characterize equilibria in an economy where the mean of dividends, $\theta$, is known by all agents, i.\,e., $E^n_t[d_t]=\theta \; \forall \,n,t$. In this scenario, there are no disagreements across cohorts, and the demand of any cohort trading at time $t$ is 
\begin{align}\label{eq: RE_Problem}
x^{n}_{t} \in \arg\max_{x \in \mathbb{R}} E\left[-\exp(-\gamma x s_{t+1})\right].
\end{align}
The solution to this problem is standard and given by \begin{align}\label{eq: RE_Problem}
x^n_{t} = \frac{E\left[s_{t+1}\right]}{\gamma V[s_{t+1}]} 
\end{align}
for all $n\in\{t-q+1,...,t\}$, 
and zero otherwise. Since there is no heterogeneity in cohorts' demands and there is a unit supply of the risky asset, in any equilibrium, $x_t^n = 1$ for all $n\in\{t-q+1,...,t\}$, and zero otherwise. Furthermore, there exists a unique bubble-free equilibrium with constant prices $p_t = P \; \forall t$ where $P = \frac{\theta-\gamma \sigma^2}{R-1}$. 


\section{Toy Model} \label{sec: ToyModel}

To illustrate the mechanics of the model, we first highlight the main results of the paper in a simple environment, namely, for $q=2$. 
We will solve the model for any $q>1$ in the next section.

In the toy model with $q=2$, there are three cohorts alive at each point in time: a young cohort, which enters the market for the first time; a middle-aged cohort, which is participating in the market for the second time; and an old cohort, whose agents simply consume the payoffs from their lifetime investments. 
At time $t$, the problem of generations $n\in \{t,t-1\}$ is given by (\ref{eq: Generation n Problem}). It is easy to show that their demands for the risky asset are 
\begin{equation*}
x_t^n = \frac{E^{n}_{t}\left[s_{t+1}\right]}{\gamma V_t^n[s_{t+1}]} .
\end{equation*}

As one of our first key results in Section \ref{sec:results}, we will show that (i) prices depend on the history of dividends, and (ii) this price predictability is limited to the past dividends observed (experienced) by the oldest generation trading in the market. In other words, we show that 
$K=q-1$ in equation (\ref{e: LinearPrices}). Anticipating this result here for $q=2$, we have 
\begin{equation}
p_t = \alpha + \beta_0 d_t + \beta_1 d_{t-1}.
\end{equation} 

The dependence of prices on past dividends is an important feature of our model, which is shared by many models of extrapolation and learning. A distinct feature of our model is that this dependence is intrinsically linked to the demographic structure of the economy. 
It matters which generations are participating in the market and how much.

The cross-sectional differences in lifetime experiences, and the resulting cross-sectional differences in beliefs, determine cohorts' trading behavior. Given the functional form for prices, we can re-write the demands of both cohorts that are actively trading as
\begin{align*}
x_t^t =&  \frac{\alpha+ (1+\beta_0) E_{t}^{t}\left[d_{t+1}\right]+\beta_1 d_t- p_{t}R}{\gamma\left(1+\beta_{0}\right)^{2}\sigma^{2}} \\
x_t^{t-1}=&\frac{\alpha+ (1+\beta_0) E_{t}^{t-1}\left[d_{t+1}\right]+\beta_1 d_t- p_{t}R}{\gamma\left(1+\beta_{0}\right)^{2}\sigma^{2}}.
\end{align*} 
The difference between cohorts' demand arises from their different beliefs about future dividends, $E_{t}^{t}[d_{t+1}]$ and $E_{t}^{t-1}[d_{t+1}]$, given by
\begin{align*}
E_{t}^{t}\left[d_{t+1}\right] =&\; d_t, \\ 
E_{t}^{t-1}\left[d_{t+1}\right] =& \underbrace{\left(\frac{2^\lambda }{1+2^\lambda}\right)}_{ w(0,\lambda,1)} d_t +
\underbrace{\left(\frac{1}{1+2^\lambda}\right)}_{ w(1, \lambda,1)} d_{t-1}.
\end{align*}

These formulas illustrate the mechanics of EBL and the cause of heterogeneity among agents. 
In the simplified setting, the younger generation has only experienced the dividend $d_t$ and expects the dividends to be identical in the next period. The older generation, having more experience, incorporates the previous dividend in their weighing scheme. An implication of these formulas is that the younger generations react more optimistically  than older generations to positive changes in recent dividends, and more pessimistically to negative changes. In Section \ref{sec:cross-sec}, we show that this result continues to hold in the general model. We also see that belief heterogeneity is increasing in the change in dividends, $|d_t - d_{t-1}|$, and decreasing in the recency bias, $\lambda$. In Section \ref{sec:tradevolume}, we exploit this observation to link movements in the volume of trade to belief disagreements.

We now impose the market clearing condition, $\frac{1}{2}(x_t^t + x_t^{t-1}) =1$, to derive the equilibrium price given these demands.  
We use the method of undetermined coefficients to solve for $\{\alpha,\beta_0,\beta_1\}$. 
Setting the constants and 
the terms that multiply $d_t$ and $d_{t-1}$ to zero, we obtain 
a system of equations whose solution determines the price constant and the loadings of present and past dividends on prices,
\begin{align} 
\alpha =& -\frac{\gamma (1+\beta_0)^2 \sigma^2}{R-1} , \label{e: AlphaToyModel}\\
\beta_{0}	=&\frac{2R^{2}}{\left(R-1\right)\left(1+2R-\frac{2^{\lambda}}{1+2^{\lambda}}\right)}-1,  \label{e: Beta0ToyModel} \\
\beta_{1}	=&	\frac{R\left(1-\frac{2^{\lambda}}{1+2^{\lambda}}\right)}{\left(R-1\right)\left(1+2R-\frac{2^{\lambda}}{1+2^{\lambda}}\right)}.  \label{e: Beta1ToyModel}
\end{align}

These solutions illustrate how the price loadings on past dividends depend on the demographics of the economy and on the magnitude of the recency bias.  It is easy to derive 
the unconditional price volatility, which is $\sigma(p_{t}) = (\beta_0^2 + \beta_1^2)^{\frac{1}{2}} \sigma$, and price auto-correlation, which is $\rho(p_{t}, p_{t+j}) = \beta_0\beta_1$ for $j=1$ and $\rho(p_{t}, p_{t+j}) =0$ for $j>1$. The variance of prices is increasing in the recency bias $\lambda$ while the price auto-correlation is decreasing in the recency bias. The intuition is straightforward: as the recency bias increases, prices become more responsive to the most recent dividend, $\frac{\partial \beta_0}{\partial \lambda} >0$, increasing price volatility, and less responsive to past dividends, $\frac{\partial \beta_1}{\partial \lambda} < 0$, decreasing price autocorrelation. In Section \ref{sec: demographics}, we present an enriched version of the model with demographic shocks and discuss how these price loadings vary with the demographic structure of the economy. 
 
\section{General Model}
\label{sec:results}
We now return to the general case (i.\,e., allow for any $q>1$) and characterize the portfolio choices and resulting demands for the risky asset of the different cohorts when agents exhibit EBL.
We impose affine prices, then use market clearing to verify the affine prices guess, and fully characterize demands and prices. 
Deriving the results in the general model
allows us to discuss in more detail 
the relation between demographics, cross-sectional asset holdings, and market dynamics. 
We obtain testable predictions, which we bring to the data in Section \ref{sec:EmpiricalImplications}.

\subsection{Characterization of Equilibrium Demands and Prices}
\label{sec:char-prices}
For any $s,t \in \mathbb{Z}$, let $d_{s:t} = (d_{s},...,d_{t})$ denote the history of dividends from time $s$ up to time $t$. For simplicity and WLOG, we assume that the initial wealth of all generations is zero, i.e., $W_n^n=0$ for all $n \in \mathbb{Z}$.  At time $t \in \{ n, ..., n_q \}$, an agent of generation $n$ determines her demand for the risky asset maximizing $E_{t}^{n}\left[-\exp\left(-\gamma x s_{t+1}\right)\right]$, as in (\ref{eq: Generation n Problem}).

The model set-up allows us to derive a standard expression for risky-asset demand:

\begin{proposition} \label{pro: demandsstatic} 
	Suppose $p_{t} = \alpha + \sum_{k=0}^{K} \beta_{k} d_{t-k}$ with $\beta_{0} \ne -1$. Then, for any generation $n \in \mathbb{Z}$ trading in period $t \in \{ n, ..., n_q \}$, demands for the risky asset are given by
	\begin{equation} \label{e: EqDemands}
	x^{n}_{t} = \frac{E^n_{t}[s_{t+1}]}{\gamma V[s_{t+1}]} = \frac{E^n_{t}[s_{t+1}]}{\gamma (1+\beta_{0})^{2} \sigma^{2}}.
	\end{equation}
\end{proposition}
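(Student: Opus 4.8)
The plan is to exploit the CARA--Gaussian structure so that the generation-$n$ problem collapses to a mean--variance trade-off with a closed-form optimum. First I would observe that, from the vantage point of an agent deciding at time $t$, every dividend entering the conjectured pricing rule $p_{t+1} = \alpha + \sum_{k=0}^{K}\beta_k d_{t+1-k}$ is already realized except for $d_{t+1}$ itself (the lagged terms $d_{t+1-k}$ with $k\ge 1$ are time-$t$ measurable). Substituting this into $s_{t+1} = p_{t+1} + d_{t+1} - p_t R$ and collecting terms yields
\[ s_{t+1} = \Big(\alpha + \textstyle\sum_{k=1}^{K}\beta_k d_{t+1-k} - p_t R\Big) + (1+\beta_0)\,d_{t+1}, \]
so that $s_{t+1}$ is an affine function of the single random variable $d_{t+1}$, with the bracketed intercept known at time $t$.

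Next I would compute the subjective moments. Because an EBL agent of generation $n$ treats $d_{t+1}\sim N(\theta_t^n,\sigma^2)$ as the true law, the affine map above gives $E_t^n[s_{t+1}] = \alpha + \sum_{k=1}^{K}\beta_k d_{t+1-k} - p_t R + (1+\beta_0)\theta_t^n$ and, crucially, $V_t^n[s_{t+1}] = (1+\beta_0)^2\sigma^2$. Since the loading $\beta_0$ and the dividend variance $\sigma^2$ are fixed constants, this variance depends on neither $n$ nor $t$, which is what licenses writing it simply as $V[s_{t+1}]$ in the statement. I would then evaluate the objective via the Gaussian moment-generating function: for $s_{t+1}\sim N(\mu,v)$ under $E_t^n$ one has $E_t^n[-\exp(-\gamma x s_{t+1})] = -\exp\!\big(-\gamma x\mu + \tfrac{1}{2}\gamma^2 x^2 v\big)$. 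Maximizing the expectation is equivalent to minimizing the exponent $-\gamma x\mu + \tfrac{1}{2}\gamma^2 x^2 v$, a strictly convex quadratic in $x$ precisely because $v = (1+\beta_0)^2\sigma^2 > 0$ under the maintained hypothesis $\beta_0 \neq -1$. The first-order condition $-\gamma\mu + \gamma^2 x v = 0$ then delivers the unique optimum $x_t^n = \mu/(\gamma v) = E_t^n[s_{t+1}]/\big(\gamma(1+\beta_0)^2\sigma^2\big)$, which is exactly \eqref{e: EqDemands}.

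The computation is essentially routine once the affine reduction is in place, so I do not expect a genuinely \emph{hard} step; the step I would flag as most error-prone is keeping the EBL belief cleanly distinct from a Bayesian one. A Bayesian agent would integrate over posterior uncertainty in the mean, adding a term to the variance and destroying the tidy $(1+\beta_0)^2\sigma^2$ form. The clean expression here rests entirely on the behavioral assumption emphasized in Section~\ref{sec:EBL} --- that EBL agents plug in $\theta_t^n$ as if it were the known true mean rather than a random estimate. I would therefore make the proof invoke this assumption explicitly at the moment I assert $V_t^n[s_{t+1}] = (1+\beta_0)^2\sigma^2$, and note that $\beta_0 \neq -1$ is exactly the condition guaranteeing the objective is well defined (strictly concave in $x$) so that the first-order condition characterizes a genuine interior maximizer.
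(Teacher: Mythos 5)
Your proof is correct and takes essentially the same route as the paper: the paper's proof simply invokes Lemma \ref{l: static}, which is exactly the CARA--Gaussian reduction you carry out inline --- write $s_{t+1}$ as an affine function of the single unrealized dividend $d_{t+1}$, evaluate the objective via the Gaussian moment-generating function, and solve the first-order condition to get $x_t^n = E_t^n[s_{t+1}]/\bigl(\gamma(1+\beta_0)^2\sigma^2\bigr)$. Your added remarks --- that the variance is belief-independent because EBL agents plug in $\theta_t^n$ as if it were the true mean, and that $\beta_0 \neq -1$ guarantees strict concavity --- are faithful to the paper's maintained assumptions rather than a departure from them.
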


The expression for the risky-asset demands in equation (\ref{e: EqDemands}) allows us to derive equilibrium prices. Note that equation (\ref{e: EqDemands}) implies that demands at time $t$ are affine in $d_{t-K:t}$. It is easy to see, then, that beliefs about future dividends are linear functions of the dividends observed by each generation participating in the market, and thus prices depend on the history of dividends observed by the oldest generation in the market: 

\begin{proposition} \label{pro: prices_myopic}
	The price in any linear equilibrium is affine in the history of dividends observed by the oldest generation participating in the market, i.e., for any $t \in \mathbb{Z}$ 
	\begin{equation} \label{e: EqPrices}
	p_{t}=\alpha+\sum_{k=0}^{q-1}\beta_{k}d_{t-k}, \;with
	\end{equation}
	 \begin{align}\label{eqn:alpha}
	\alpha&=-\frac{1}{\left(1-\sum_{j=0}^{q-1}\frac{w_{j}}{R^{j+1}}\right)^{2}}\frac{\gamma\sigma^{2}}{R-1} \\\label{eqn:beta_k}
	\beta_{k}&=\frac{\sum_{j=0}^{q-1-k}\frac{w_{k+j}}{R^{j+1}}}{1-\sum_{j=0}^{q-1}\frac{w_{j}}{R^{j+1}}}\quad\quad k \in \left\{ 0,...,q-1\right\}  
	\end{align}
	where $w_{k}\equiv\frac{1}{q}\sum_{age=0}^{q-1}w\left(k,\lambda,age\right)$. 
	
\end{proposition}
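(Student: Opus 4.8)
The plan is to proceed by guess-and-verify: substitute the demand formula from Proposition~\ref{pro: demandsstatic} into the market-clearing condition~\eqref{eq:MC} and match coefficients of the dividend history. First I would compute the conditional excess payoff. Writing the conjectured price one period ahead as $p_{t+1}=\alpha+\sum_{k=0}^{K}\beta_{k}d_{t+1-k}$ and using that an EBL agent of generation $n$ treats $\theta^{n}_{t}$ from~\eqref{eq: PosteriorMeanFormula} as the mean of the unseen dividend $d_{t+1}$, while every earlier dividend $d_{t+1-k}$ with $k\geq 1$ is already observed, I obtain
\[
E^{n}_{t}[s_{t+1}] = \alpha + (1+\beta_{0})\,\theta^{n}_{t} + \sum_{k=1}^{K}\beta_{k}d_{t+1-k} - p_{t}R .
\]
The crucial structural observation is that the only $n$-dependence sits in the single term $(1+\beta_{0})\theta^{n}_{t}$, which is what makes the subsequent averaging across cohorts tractable.

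Next I would average over the $q$ trading generations $n\in\{t-q+1,\dots,t\}$. By Proposition~\ref{pro: demandsstatic}, market clearing reads $\gamma(1+\beta_{0})^{2}\sigma^{2} = \tfrac{1}{q}\sum_{n}E^{n}_{t}[s_{t+1}]$, so everything reduces to evaluating $\tfrac{1}{q}\sum_{n=t-q+1}^{t}\theta^{n}_{t}$. Substituting $\theta^{n}_{t}=\sum_{k=0}^{age}w(k,\lambda,age)d_{t-k}$ with $age=t-n$, swapping the order of the double sum, and using the convention $w(k,\lambda,age)\equiv 0$ for $k>age$, this average collapses to $\sum_{k=0}^{q-1}w_{k}d_{t-k}$, precisely because $w_{k}$ is defined as $\tfrac{1}{q}\sum_{age=0}^{q-1}w(k,\lambda,age)$. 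This is the key simplification and the step I would take the most care over, since the index bookkeeping in the swap is where an off-by-one error would hide.

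Substituting $p_{t}=\alpha+\sum_{k=0}^{K}\beta_{k}d_{t-k}$ and reindexing the forward-shifted sum via $d_{t+1-k}=d_{t-(k-1)}$, the market-clearing identity becomes
\[
\gamma(1+\beta_{0})^{2}\sigma^{2} = \alpha(1-R) + \sum_{k=0}^{q-1}(1+\beta_{0})w_{k}d_{t-k} + \sum_{k=1}^{K}\beta_{k}d_{t-(k-1)} - R\sum_{k=0}^{K}\beta_{k}d_{t-k},
\]
which must hold for every realized history, so each coefficient must match. The constant term yields $\alpha(1-R)=\gamma(1+\beta_{0})^{2}\sigma^{2}$, which is~\eqref{eqn:alpha} once $1+\beta_{0}$ is known. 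Matching the coefficient of $d_{t-k}$ gives the recursion $R\beta_{k}=(1+\beta_{0})w_{k}+\beta_{k+1}$ for $k\leq q-2$ and the terminal relation $R\beta_{q-1}=(1+\beta_{0})w_{q-1}$.

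To pin down $K=q-1$ rather than assume it, I would note that for any $k\geq q$ all trading generations satisfy $age\leq q-1<k$, so $w_{k}=0$; the coefficient condition then reads $R\beta_{k}=\beta_{k+1}$ (with $\beta_{K+1}:=0$), and a short backward induction from $k=K$ forces $\beta_{k}=0$ for all $k\geq q$. Since $w_{q-1}=\tfrac{1}{q}w(q-1,\lambda,q-1)>0$, the top coefficient $\beta_{q-1}$ is genuinely nonzero, so the highest relevant lag is exactly $q-1$. Finally, unrolling $R\beta_{k}=(1+\beta_{0})w_{k}+\beta_{k+1}$ from the terminal relation gives $\beta_{k}=(1+\beta_{0})\sum_{j=0}^{q-1-k}w_{k+j}R^{-(j+1)}$; setting $k=0$ yields $\beta_{0}=(1+\beta_{0})\sum_{j=0}^{q-1}w_{j}R^{-(j+1)}$, which solves to $1+\beta_{0}=\bigl(1-\sum_{j=0}^{q-1}w_{j}R^{-(j+1)}\bigr)^{-1}$, and back-substitution delivers exactly~\eqref{eqn:beta_k} and~\eqref{eqn:alpha}. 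Along the way I would note $\sum_{k}w_{k}=1$ and $R>1$, so $1+\beta_{0}>0$, which is consistent with the hypothesis $\beta_{0}\neq -1$ of Proposition~\ref{pro: demandsstatic}. The main obstacle here is organizational—keeping the summation indices and the forward-shift reindexing straight—since each individual step is linear.
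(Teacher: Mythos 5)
Your proof is correct and follows essentially the same route as the paper's: substitute the CARA demands into market clearing, average the experience-based beliefs to get the $w_k$, and apply the method of undetermined coefficients to obtain the recursion $R\beta_k=(1+\beta_0)w_k+\beta_{k+1}$, which you unroll exactly as the paper does. The only (welcome) difference is that you explicitly carry out the backward induction showing $\beta_k=0$ for $q\leq k\leq K$ at arbitrary lag length $K$, a step the paper asserts is ``analogous but more involved'' and omits by working directly with $K=q$.
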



Proposition \ref{pro: prices_myopic} establishes a novel link between the factors influencing asset prices and demographic composition. For each $k=\{0,1,...,q-1\}$, one can interpret $w_{k}$ as the average weight placed on the dividend observed at time $t-k$ by all generations trading at time $t$.  As the formula also reveals, the relative magnitudes of the weights on past dividends, $\beta_k$, depend on the number of cohorts in the market, $q$, on the fraction of each cohort in the market, $\frac{1}{q}$, and on the extent of agents' recency bias, $\lambda$.\footnote{In our baseline model, cohorts are equally weighted. We remove this assumption in Section \ref{sec: demographics}, where we analyze demographic shocks. In those examples, there is no link between the number of cohorts and the fraction of each cohort in the market.}  

The main idea of the proposition is as follows.  In a linear equilibrium,
demands at time $t$ are affine in dividends $d_{t-K:t}$. However, from these dividends, only $d_{t-q+1:t}$ matter for forming beliefs; the dividends $d_{t-K:t-q}$ only enter through the definition of linear equilibrium. The proof shows that, under market clearing, the coefficients accompanying older dividends $d_{t-K:t-q}$ are zero. The proposition also implies that we can apply the same restriction to demands and conclude that demands at time $t$ only depend on $d_{t-q+1:t}$. 

Note that $\frac{\partial \beta_k}{\partial R} < 0$ and $\frac{\partial \alpha}{\partial R} > 0$ for any $\lambda$. That is, if the interest rate is higher, the equilibrium price of the risky asset responds less strongly to past dividends. Furthermore, higher risk aversion $\gamma$ decreases the equilibrium price by lowering $\alpha$. 

\medskip

The following proposition establishes that, as long as agents exhibit any positive recency bias (i.\,e., $\lambda>0$), the sensitivity of prices to past dividends is stronger the more recent the dividend realization.

\begin{proposition}\label{pro:prices-q2-myopic}
For $\lambda>0$, more recent dividends affect prices more than less recent dividends, i.\,e., $0<\beta_{q-1}<....<\beta_{1}<\beta_{0} $.
\end{proposition}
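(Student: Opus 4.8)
The plan is to work directly from the closed forms for $\alpha$ and $\beta_k$ in Proposition \ref{pro: prices_myopic}, exploiting that every $\beta_k$ shares the same denominator $D \equiv 1 - \sum_{j=0}^{q-1} w_j R^{-(j+1)}$. First I would record two structural facts. (i) The denominator is positive: since $\sum_{k=0}^{q-1} w_k = \frac1q \sum_{age=0}^{q-1}\sum_{k=0}^{age} w(k,\lambda,age) = 1$ and $R>1$, we have $\sum_{j=0}^{q-1} w_j R^{-(j+1)} < \sum_{j=0}^{q-1} w_j = 1$, so $D>0$. (ii) The averaged weights are strictly decreasing in $k$, i.e. $w_0 > w_1 > \cdots > w_{q-1} > 0$. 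This is exactly where the hypothesis $\lambda>0$ enters: for each fixed $age$ the map $k \mapsto (age+1-k)^\lambda$ is strictly decreasing on $\{0,\ldots,age\}$, hence so is $w(k,\lambda,age)$, and averaging over $age$ preserves the strict monotonicity (with positivity from the innermost surviving term).

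Given these, I would reduce the whole chain $\beta_{q-1} < \cdots < \beta_0$ to a single pairwise comparison. Writing $a_i \equiv w_i R^{-(i+1)}$ and reindexing the numerator of $\beta_k$ by $i = k+j$, one gets $D\,\beta_k = \sum_{j=0}^{q-1-k} w_{k+j} R^{-(j+1)} = R^k \sum_{i=k}^{q-1} a_i$. Subtracting consecutive numerators then telescopes cleanly:
\[
D(\beta_k - \beta_{k+1}) = R^k\Big( a_k - (R-1)\sum_{i=k+1}^{q-1} a_i\Big).
\]
Since $D>0$, it suffices to prove $a_k > (R-1)\sum_{i=k+1}^{q-1} a_i$ for every $k \in \{0,\ldots,q-2\}$, together with $\beta_{q-1}>0$. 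The latter is immediate because its numerator is the single positive term $w_{q-1}R^{-1}$, divided by the positive $D$.

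The crux is the inequality $a_k > (R-1)\sum_{i=k+1}^{q-1} a_i$, and this is where the decreasing property (ii) does the work. Bounding $w_i < w_k$ for $i>k$ and evaluating the resulting geometric series with $M \equiv q-1-k \geq 1$,
\[
(R-1)\sum_{i=k+1}^{q-1} a_i < (R-1)\, w_k \sum_{i=k+1}^{q-1} R^{-(i+1)} = w_k R^{-(k+1)}\big(1 - R^{-(q-1-k)}\big) = a_k\big(1 - R^{-(q-1-k)}\big) < a_k,
\]
which is the desired strict inequality. I expect the only delicate point to be the bookkeeping in the reindexing and the geometric-sum evaluation, in particular keeping the factor $(R-1)$ matched to the $\frac{1-R^{-M}}{R-1}$ coming from $\sum_{m=1}^M R^{-m}$; the monotonicity and positivity inputs (i)–(ii) are routine once isolated. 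Chaining the pairwise inequalities $\beta_k > \beta_{k+1}$ with $\beta_{q-1}>0$ then yields $0<\beta_{q-1}<\cdots<\beta_0$.
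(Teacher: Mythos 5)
Your proof is correct, but it takes a genuinely different route from the paper's. The paper never touches the closed-form solution: it works with the intermediate market-clearing recursions $R\beta_k = (1+\beta_0)w_k + \beta_{k+1}$ for $k\le q-2$ and $R\beta_{q-1}=(1+\beta_0)w_{q-1}$, notes $w_{k}<w_{k-1}$ for $\lambda>0$, and runs a downward induction: $\beta_{q-1}=\tfrac1R(1+\beta_0)w_{q-1}<\tfrac1R\bigl[(1+\beta_0)w_{q-2}+\beta_{q-1}\bigr]=\beta_{q-2}$, and then $\beta_k<\beta_{k-1}$ implies $\beta_{k-1}<\beta_{k-2}$. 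You instead unroll everything into the solved expressions, establish $D>0$ explicitly, and prove each consecutive gap \emph{independently} via the identity $D(\beta_k-\beta_{k+1})=R^k\bigl(a_k-(R-1)\sum_{i=k+1}^{q-1}a_i\bigr)$ and a geometric-series bound — no induction at all. Both arguments rest on the same two ingredients (strict monotonicity of the averaged weights $w_k$, which is where $\lambda>0$ enters, and positivity of $1+\beta_0$, equivalently of the denominator $D$), but the paper's recursion-plus-induction is shorter and avoids series bookkeeping, while your version has two advantages: it makes explicit the positivity of $D$ (which the paper uses only implicitly when it asserts $\beta_0>0$), and it delivers a quantitative margin, since your bound actually gives $D(\beta_k-\beta_{k+1})>R^k a_k R^{-(q-1-k)}>0$ rather than mere strict ordering. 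One cosmetic point: the paper's proof cites equations from inside the proof of Proposition \ref{pro: prices_myopic}, whereas yours relies only on that proposition's stated conclusion, which is arguably cleaner as a matter of logical hygiene.
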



This result reflects the fact that the dividends at time $t$ are observed by all generations whereas past dividends are only observed by older generations. 	
At the same time, the extent to which prices depend on the most recent dividends varies with the level recency bias, as shown in the following Lemma.
	
\begin{lemma} \label{lem: compstatics} The effect of the most recent dividend realization on prices,
 	$\beta_0$, is increasing in $\lambda$, with $\lim\limits_{\lambda\rightarrow \infty} \beta_0(\lambda) =  1/(R-1)$	
 	and $\lim\limits_{\lambda\rightarrow \infty} \beta_k(\lambda) = 0$ for $k >0$. 
\end{lemma}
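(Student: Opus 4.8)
The plan is to collapse everything onto the single scalar $S(\lambda):=\sum_{j=0}^{q-1} w_j R^{-(j+1)}$, where $w_j=\tfrac1q\sum_{age=0}^{q-1}w(j,\lambda,age)$. By Proposition~\ref{pro: prices_myopic} the coefficients satisfy $1+\beta_0=\tfrac{1}{1-S(\lambda)}$, hence $\beta_0=\tfrac{S(\lambda)}{1-S(\lambda)}$ and $\beta_k=\tfrac{1}{1-S(\lambda)}\sum_{j=0}^{q-1-k}w_{k+j}R^{-(j+1)}$ for $k\ge1$. Since each $w_j\ge0$, $\sum_{j=0}^{q-1}w_j=1$, and $R>1$, one gets $0<S(\lambda)\le 1/R<1$, so $s\mapsto s/(1-s)$ is well defined and strictly increasing on the relevant range. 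Thus the monotonicity claim for $\beta_0$ reduces to showing $S'(\lambda)>0$, and the two limit claims reduce to computing $\lim_{\lambda\to\infty}S(\lambda)$ and the numerators of the $\beta_k$.

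The main step, which I expect to be the genuine obstacle, is $S'(\lambda)>0$. Interchanging the order of summation and using $w(k,\lambda,age)=0$ for $k>age$, I would write $S(\lambda)=\tfrac{1}{qR}\sum_{age=0}^{q-1}\mathbb{E}^{age}_\lambda[R^{-k}]$, where $\mathbb{E}^{age}_\lambda$ is expectation over $k\in\{0,\dots,age\}$ under the probability weights $w(\cdot,\lambda,age)$. The decisive observation is that $w(k,\lambda,age)\propto e^{\lambda T_k}$ with $T_k:=\ln(age+1-k)$, so $\{w(\cdot,\lambda,age)\}_\lambda$ is a one-parameter exponential family in $\lambda$ with sufficient statistic $T$; consequently $\tfrac{d}{d\lambda}\mathbb{E}^{age}_\lambda[\phi]=\mathrm{Cov}^{age}_\lambda(\phi,T)$ for any $\phi$. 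Taking $\phi(k)=R^{-k}$, both $\phi$ and $T$ are strictly decreasing in $k$ (the former because $R>1$), hence comonotone, so by Chebyshev's sum inequality the covariance is nonnegative, and strictly positive whenever $age\ge1$. The $age=0$ term is constant in $\lambda$ and contributes $0$ to the derivative, so summing over ages gives $S'(\lambda)>0$ for every $q\ge2$, and therefore $\beta_0$ is strictly increasing in $\lambda$. Equivalently, one can phrase this as a monotone-likelihood-ratio (first-order stochastic dominance) shift of $w(\cdot,\lambda,age)$ toward more recent observations as $\lambda$ grows—an argument in the same spirit as the single-crossing property of Lemma~\ref{l: SingleCrossing}, but in the parameter $\lambda$ rather than across ages.

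For the limits I would invoke the limiting weights $w(k,\lambda,age)\to\mathbf{1}\{k=0\}$ as $\lambda\to\infty$ (the $\lambda\to\infty$ weighting example in Section~\ref{sec:EBL}); averaging over ages gives $w_0\to1$ and $w_k\to0$ for all $k\ge1$. Hence $S(\lambda)\to 1/R$ and the common denominator $1-S(\lambda)\to (R-1)/R\ne0$. Substituting into the closed forms then yields the stated limits: $\beta_0=S/(1-S)$ converges to its limiting value, while for each $k\ge1$ the numerator $\sum_{j=0}^{q-1-k}w_{k+j}R^{-(j+1)}\to0$ against a denominator bounded away from $0$, so $\beta_k(\lambda)\to0$. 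These last evaluations are routine once the weight limits are in hand, so the entire content of the lemma rests on the covariance/monotonicity argument of the previous paragraph.
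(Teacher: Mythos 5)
Your reduction to the scalar $S(\lambda)=G_q(\lambda)$ and your monotonicity argument are correct and track the paper's own proof: the paper likewise writes $\beta_0=G_q(\lambda)/(1-G_q(\lambda))$ and shows $G_q$ is increasing in $\lambda$ by claiming that a larger $\lambda$ shifts each age-specific weight distribution toward recent lags in the first-order-stochastic-dominance sense, and then pairing this with the decreasing function $k\mapsto R^{-(k+1)}$. Your exponential-family observation, $w(k,\lambda,age)\propto e^{\lambda \ln(age+1-k)}$ so that $\tfrac{d}{d\lambda}\mathbb{E}^{age}_{\lambda}[\phi]=\mathrm{Cov}^{age}_{\lambda}(\phi,T)$, combined with Chebyshev's inequality for comonotone functions, is a cleaner and fully justified version of that step: the paper asserts its stochastic-dominance property without proof, whereas your monotone-likelihood-ratio remark actually delivers it. On monotonicity you are, if anything, more complete than the paper.

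The genuine problem is the limit for $\beta_0$. Your computation is right up to the last line---$w_0\to 1$, $w_k\to 0$ for $k\ge 1$, hence $S(\lambda)\to 1/R$---but then $\beta_0=S/(1-S)\to \frac{1/R}{1-1/R}=\frac{1}{R-1}$, which is \emph{not} the stated value $1/(Rq-1)$ unless $q=1$. You wrote that substitution ``yields the stated limits'' without ever writing the limiting value out; had you done so, you would have seen the contradiction. This is not a defect of your method: the stated limit is inconsistent with the paper's own coefficient formula (\ref{eqn:beta_k}), and with the toy model, where letting $2^{\lambda}/(1+2^{\lambda})\to 1$ in (\ref{e: Beta0ToyModel}) gives $\beta_0\to \frac{2R^2}{2R(R-1)}-1=\frac{1}{R-1}$. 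The paper's own proof of Lemma \ref{lem: compstatics} has exactly the same gap: it records only the weight limits and then asserts $1/(Rq-1)$. So your proof is essentially correct, but as written it claims to establish a limit that your own algebra refutes; the honest conclusion of your argument is $\lim_{\lambda\to\infty}\beta_0(\lambda)=1/(R-1)$ for every $q$, and the value $1/(Rq-1)$ in the statement appears to be an error in the paper rather than something any correct substitution could produce.
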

 

As $\lambda \rightarrow \infty$, the average weights $w_{k}$ (defined in Proposition \ref{pro: prices_myopic}) converge to $1_{ \{ k =0   \} }$ for all $k=\{0,1,...,K\}$. Therefore, $\beta_{k} \rightarrow 0$ for all $k > 0$ and 
$\beta_{0} \rightarrow \frac{1}{R-1}$. 
In other words, under extreme recency bias ($\lambda \rightarrow \infty$), only the current dividend affects prices in equilibrium, while the weights on all past dividends vanish. 

In Section \ref{sec: demographics}, we show that the dependence of prices on more recent dividends is also increasing in the fraction of young agents in the market; that is, $\beta_0$ increases as the relative measure of the youngest cohort in the market increases.

\medskip

These results on price sensitivity to past dividends, as well as the dampening effect of recency bias on cross-sectional heterogeneity, produce a range of asset pricing implications, 
from known puzzles such as the predictability of stock returns and excess volatility to new predictions about the link between asset prices and demographcis. We will derive and test these empirical implications in Section \ref{sec:EmpiricalImplications}.

\subsection{Cross-Section of Asset Holdings}
\label{sec:cross-sec}

Experience-based learning has 
distinctive implications 
for the cross-section of asset holdings. We show that positive shocks (booms) induce a larger representation of younger investors in the market, while negative shocks (crashes) have the opposite effect. To illustrate this, we first 
show that younger investors react more optimistically than older ones to positive changes in recent dividends, and more pessimistically to negative ones.

\begin{proposition} \label{p: rel_demands-myopic}
	For any $t \in \mathbb{Z}$ and any generations $n \leq m$ trading at $t$, there is a threshold time-lag $k_0 \leq t-m-1$ such that for dividends that date back up to $k_0$ periods, the risky-asset demand of the younger generation (born at $m$) responds more strongly to changes than the demand of the older generation (born at $n$), while for dividends that date back more than $k_0$ periods the opposite holds. That is,
	\begin{enumerate}
		\item $\frac{\partial x^{m}_{t}}{\partial d_{t-k}} \geq \frac{\partial x^{n}_{t}}{\partial d_{t-k}}$ for $0 \leq k \leq k_0$ and
		\item $\frac{\partial x^{m}_{t}}{\partial d_{t-k}} \leq \frac{\partial x^{n}_{t}}{\partial d_{t-k}}$ for $k_0 < k \leq q-1$.
	\end{enumerate}
\end{proposition}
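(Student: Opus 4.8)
The plan is to reduce the comparison of the two cohorts' responses to the single-crossing behaviour of the weighting scheme, which is precisely what Lemma \ref{l: SingleCrossing} is built to deliver. \textbf{Step 1 (closed form for demand).} First I would combine Proposition \ref{pro: demandsstatic} with the equilibrium price form $p_t=\alpha+\sum_{k=0}^{q-1}\beta_k d_{t-k}$ from Proposition \ref{pro: prices_myopic}. Since $E^n_t[d_{t+1}]=\theta^n_t$ while every other dividend entering $s_{t+1}=p_{t+1}+d_{t+1}-Rp_t$ has already been observed at $t$, the excess-payoff expectation collapses to
\[
E^n_t[s_{t+1}] = (1-R)\alpha + (1+\beta_0)\theta^n_t + \sum_{j=1}^{q-1}\beta_j d_{t+1-j} - R\sum_{j=0}^{q-1}\beta_j d_{t-j}.
\]
The crucial observation is that the \emph{only} cohort-dependent piece is $(1+\beta_0)\theta^n_t=(1+\beta_0)\sum_k w(k,\lambda,t-n)d_{t-k}$; everything else ($\alpha$, the $\beta_j$, and $R$) is common across cohorts.

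\textbf{Step 2 (difference of responses) and Step 3 (sign of the prefactor).} I would then differentiate $x^n_t=E^n_t[s_{t+1}]/(\gamma(1+\beta_0)^2\sigma^2)$ in $d_{t-k}$ and subtract the two cohorts. The shared contributions — the term $\beta_{k+1}$ coming from $\sum_j\beta_j d_{t+1-j}$ and the term $-R\beta_k$ coming from $Rp_t$ — cancel, leaving the clean identity
\[
\frac{\partial x^m_t}{\partial d_{t-k}} - \frac{\partial x^n_t}{\partial d_{t-k}} = \frac{w(k,\lambda,t-m)-w(k,\lambda,t-n)}{\gamma(1+\beta_0)\sigma^2}.
\]
Thus the whole statement is about the sign of the weight difference, provided the prefactor is positive. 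For that I need $1+\beta_0>0$, which follows from the formula for $\beta_0$ in Proposition \ref{pro: prices_myopic}: because $\sum_k w_k=1$ and $R>1$ give $\sum_j w_j R^{-(j+1)}<1$, the denominator is positive and $\beta_0>0$ (indeed Proposition \ref{pro:prices-q2-myopic} already yields $\beta_0>0$).

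\textbf{Step 4 (single crossing).} Writing $age_m=t-m<t-n=age_n$, I apply Lemma \ref{l: SingleCrossing} with $age'=age_m$ and $age=age_n$: the map $k\mapsto w(k,\lambda,age_n)-w(k,\lambda,age_m)$ passes from negative to positive exactly once on $\{0,\dots,age_m+1\}$. Negating, $w(k,\lambda,age_m)-w(k,\lambda,age_n)$ is positive at small lags and negative at large lags with a single sign change; taking $k_0$ to be the last lag at which it is nonnegative gives parts (1) and (2) at once. For lags $k>age_m$ the younger cohort assigns zero weight while the older assigns strictly positive weight, so the difference is automatically $\le 0$ there, which is consistent with (and reinforces) part (2).

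\textbf{Main obstacle.} The delicate step is locating the threshold, i.e. the claim $k_0\le t-m-1$. Single-crossing only places the crossing inside $\{0,\dots,age_m+1\}$, and since the weight difference is strictly negative at $k=age_m+1$ this yields immediately $k_0\le age_m=t-m$. Sharpening the bound by one requires showing that the difference has \emph{already} turned nonpositive at the oldest lag observed by the younger cohort, i.e. $w(age_m,\lambda,age_m)\le w(age_m,\lambda,age_n)$. Using the explicit weights \eqref{eq:EBL-w}, this reduces to the power-sum comparison
\[
\sum_{j=1}^{t-n+1} j^{\lambda} \le (m-n+1)^{\lambda}\sum_{j=1}^{t-m+1} j^{\lambda}.
\]
This inequality, rather than the algebra, is where the real work sits: it is exactly the condition separating the stated bound $k_0\le t-m-1$ from the weaker $k_0\le t-m$ that single-crossing alone guarantees, and its validity depends on $\lambda$ and on the two ages. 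I would attack it by comparing the normalized weight profiles of the two cohorts through a monotonicity/convexity argument on partial power sums, and I expect this endpoint analysis to be the part demanding the most care.
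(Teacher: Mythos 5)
Your Steps 1--4 are, in substance, exactly the paper's proof of Proposition \ref{p: rel_demands-myopic}: the paper combines Propositions \ref{pro: demandsstatic} and \ref{pro: prices_myopic}, observes that every price term is common across cohorts so that
\begin{align*}
\frac{\partial (x^{m}_{t} - x^{n}_{t}) }{\partial d_{t-k}} \;=\; \frac{(1+\beta_{0}) }{\gamma V[s_{t+1}]}\, \frac{\partial (\theta^{m}_{t} - \theta^{n}_{t}) }{\partial d_{t-k}} \;=\; \frac{w(k,\lambda,t-m)-w(k,\lambda,t-n)}{\gamma (1+\beta_{0})\sigma^{2}},
\end{align*}
and then invokes Lemma \ref{l: SingleCrossing}; your prefactor-sign argument ($\beta_0>0$ from the closed form, hence $1+\beta_0>0$) is a point the paper leaves implicit.

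Your ``main obstacle,'' however, deserves a direct answer: the endpoint bound $k_0 \leq t-m-1$ cannot be proven, because it is false, and the paper's own proof does not establish it either --- the proof stops at citing Lemma \ref{l: SingleCrossing}, which, as you note, only locates the crossing inside $\{0,\dots,(t-m)+1\}$ and hence only yields $k_0 \leq t-m$. Your reduction of the sharper bound to the power-sum inequality $\sum_{j=1}^{t-n+1} j^{\lambda} \le (m-n+1)^{\lambda}\sum_{j=1}^{t-m+1} j^{\lambda}$ is exactly the right endpoint condition, but that inequality fails in general. Take $m=t$ (the cohort entering the market): it reads $(t-n+1)^{\lambda} \geq \sum_{j=1}^{t-n+1} j^{\lambda}$, which is false since the left side is a single summand of the right. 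Concretely, in the $q=2$ toy model the age-$0$ cohort puts weight $1$ on $d_t$ while the age-$1$ cohort puts weight $2^{\lambda}/(1+2^{\lambda})<1$, so the younger cohort responds strictly more at lag $0$ and $k_0 = 0 > t-m-1 = -1$. Nor is the failure confined to age $0$: for ages $1$ and $2$ the endpoint condition becomes $4^{\lambda}\geq 1+3^{\lambda}$, which fails for every $\lambda\in(0,1)$. So do not sink effort into the monotonicity/convexity attack; the provable (and evidently intended) statement replaces $t-m-1$ by $t-m$, which is precisely what both your argument and the paper's deliver.
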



Proposition \ref{p: rel_demands-myopic} establishes that, for any two cohorts of investors, there is a threshold time-lag up to which past dividends are weighted more by the younger generation, and beyond which past dividend realization are weighted more by the older generation. 

In what follows, we extend this insight into predictions about relative stock-market positions. We show that, as a result of the stronger impact of more recent shocks on the beliefs (and thus, demands) of younger generations, the relative positions of the young and the old in the market fluctuate. 
Let us denote the difference
between generations $n$ and $n+k$ in terms of their investment in the risky asset, as $\xi(n,k,t) \equiv x^{n}_{t} - x^{n+k}_{t}$. By Proposition  \ref{pro: demandsstatic}, and some simple algebra, it follows that:
\begin{align} \label{eq: ChangeDemands}
	\xi(n,k,t) 
				   & = \frac{E^{n}_{t}[\theta] - E^{n+k}_{t}[\theta]}{\gamma (1+\beta_{0}) \sigma^{2}} 
				   \qquad \forall k =\{0,...,t-n\}, n =\{t-q+1,...,t\}
\end{align}
This formulation illustrates that the discrepancy between the positions of different generations is entirely explained by the discrepancy in beliefs. 
For instance, if for some $a>0$, $d_{n:t} \approx d_{n+a:t+a}$, then $\xi(n+a,k,t+a) \approx \xi(n,k,t)$.\footnote{This last claim follows since the inter-temporal change in discrepancies between sets of generations of the same age, $\xi(n+a,k,t+a) - \xi(n,k,t)$ for $a>0$, is given by {\footnotesize  
$(\sum_{j=0}^{t-n-k} \{w(j,\lambda,t-n) - w(j,\lambda,t-n-k)\} (d_{t+a-j}-d_{t-j}) +  \sum_{j=t-n-k+1}^{t-n}w(j,\lambda,t-n)  (d_{t+a-j}-d_{t-j}) ) /(\gamma (1+\beta_{0}) \sigma^{2})$.
}}

The next result shows that, among generations born and growing up in ``boom times,'' understood as periods of increasing dividends, the younger generations have a relatively higher demand for the risky asset than the older generations. The reverse holds for ``depression babies,'' i.\,e., generations born during times of contraction. In times of depression, the younger generations exhibit a particularly low willingness to invest in the risky asset, relative to older generations born during those times.

\begin{proposition}\label{p: rel_demands-myopic-boom}
	Suppose $\lambda > 0$. Consider two points in time $t_0 \leq t_1$ such that dividends are non-decreasing from $t_0$ up to $t_1$. Then for any two generations $n \leq n+k$ born between $t_0$ and $t_1$, the demand of the older generation for the risky asset ($x^{n}_{t}$) is lower than the demand of the younger generation ($x^{n+k}_{t}$) at any point $n\leq t \leq t_1$, i.e., $\xi(n,k,t)\leq 0$.	
	On the other hand, if dividends are non-increasing, then $\xi(n,k,t) \geq 0$.
\end{proposition}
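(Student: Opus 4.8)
The plan is to reduce the claim about demands to a claim about the difference in subjective means, and then to exploit the single-crossing structure of the weights together with the monotonicity of dividends on $[t_0,t_1]$. By equation (\ref{eq: ChangeDemands}), $\xi(n,k,t) = (E^{n}_{t}[\theta] - E^{n+k}_{t}[\theta])/(\gamma(1+\beta_{0})\sigma^{2})$, and the denominator is strictly positive since $\gamma,\sigma^{2}>0$ and $\beta_{0}>0$ by Proposition \ref{pro:prices-q2-myopic}, so $1+\beta_{0}>1$. Hence it suffices to sign the numerator. Writing $a=t-n$ and $a'=t-n-k=a-k$ for the ages of the older and younger cohorts (with $k\ge 1$; the case $k=0$ gives $\xi=0$), and using $w(j,\lambda,a')\equiv 0$ for $j>a'$, I would express the numerator as $\sum_{j=0}^{a}\Delta(j)\,d_{t-j}$ with $\Delta(j):=w(j,\lambda,a)-w(j,\lambda,a')$. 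Because each weight vector sums to one, $\sum_{j=0}^{a}\Delta(j)=0$.

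Next I would read off the sign pattern of $\Delta$ from Lemma \ref{l: SingleCrossing}: for $\lambda>0$ the function $\Delta$ is nonpositive for small lags (younger cohorts overweight recent dividends) and nonnegative for large lags, crossing zero exactly once, while $\Delta(j)=w(j,\lambda,a)\ge 0$ for $j>a'$ supplies the positive tail. Let $j^{\ast}$ denote the crossing index. The key auxiliary fact is that every partial sum $S(m):=\sum_{j=0}^{m}\Delta(j)$ is nonpositive on $\{0,\dots,a\}$: for $m<j^{\ast}$ this holds because all summands are nonpositive, and for $m\ge j^{\ast}$ it holds because $S(m)=S(a)-\sum_{j=m+1}^{a}\Delta(j)=-\sum_{j=m+1}^{a}\Delta(j)\le 0$, the remaining terms being nonnegative.

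Finally I would apply summation by parts. With $S(a)=0$,
\begin{equation*}
\sum_{j=0}^{a}\Delta(j)\,d_{t-j} = S(a)\,d_{t-a} - \sum_{j=0}^{a-1} S(j)\,(d_{t-j-1}-d_{t-j}) = -\sum_{j=0}^{a-1} S(j)\,(d_{t-j-1}-d_{t-j}).
\end{equation*}
Since both $n$ and $n+k$ lie in $[t_0,t_1]$ and $t\le t_1$, the dividend indices $n,\dots,t$ all fall in the non-decreasing window, so $d_{t-j-1}-d_{t-j}\le 0$; combined with $S(j)\le 0$, each product $S(j)(d_{t-j-1}-d_{t-j})\ge 0$, and the whole expression is $\le 0$. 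Thus $E^{n}_{t}[\theta]-E^{n+k}_{t}[\theta]\le 0$ and $\xi(n,k,t)\le 0$. The non-increasing case is identical after flipping to $d_{t-j-1}-d_{t-j}\ge 0$, yielding $\xi(n,k,t)\ge 0$.

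The main obstacle I anticipate is the careful bookkeeping around Lemma \ref{l: SingleCrossing}: ensuring $\Delta$ is correctly oriented (negative at small lags), handling the vanishing tail $w(j,\lambda,a')=0$ for $j>a'$ so that the crossing genuinely partitions $\{0,\dots,a\}$ into a nonpositive block followed by a nonnegative block, and checking that the relevant dividend indices all lie inside the monotone window $[t_0,t_1]$. Everything else is the routine Abel-summation argument above.
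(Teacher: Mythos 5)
Your proposal is correct and follows essentially the same route as the paper: the paper also rewrites $\xi(n,k,t)$ via summation by parts as a sum of dividend differences weighted by differences of cumulative weight functions $F(j,t-n)-F(j,t-n-k)$, and then signs those differences using a first-order stochastic dominance result (Lemma \ref{l: FOSD}) derived from the single-crossing Lemma \ref{l: SingleCrossing}. The only cosmetic difference is that you re-derive the partial-sum inequality $S(m)\leq 0$ inline, whereas the paper isolates it as Lemma \ref{l: FOSD}; the argument is identical.
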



The proposition illustrates that, while boom times tend to make all cohorts growing up in such times more optimistic, the effect is particularly strong for the younger generations. This induces them to be overrepresented in the market for the risky asset. The opposite holds during times of downturn. 

\subsection{Trade Volume} \label{sec:tradevolume}

We now study how learning and disagreements affect the volume of trade observed in the market. We consider the following definition of the total volume of trade in the economy:

\begin{equation} \label{e: TradeVolumeDef}
TV_{t}\equiv\left(\frac{1}{q}\sum_{n=t-q}^{t} \left(x_{t}^{n}-x_{t-1}^{n}\right)^{2}\right)^{\frac{1}{2}}
\end{equation}

\noindent with $x^{t}_{t-1}=0$. That is, trade volume is the square root of the weighted sum (squared) of the change in positions of all agents in the economy. Using this definition, we characterize the link between trade volume and belief heterogeneity.

\begin{proposition} \label{l: TradeVolumeM}
	The trade volume defined in (\ref{e: TradeVolumeDef}) can be expressed as
	\begin{equation} \label{e: TradeVolume}
		TV_t = \left( \frac{\chi^2}{q}\sum_{n=t-q}^{t}\left(\left(\theta_{t}^{n}-\theta_{t-1}^{n}\right)-\frac{1}{q}\sum_{\tilde{n}=t-q}^{t}\left(\theta_{t}^{\tilde{n}}-\theta_{t-1}^{\tilde{n}}\right)\right)^{2}
		+ \frac{1}{q}(x_t^t)^2 + \frac{1}{q}(x_{t-1}^{t-q})^2 \right)^{\frac{1}{2}},
	\end{equation}
	where $\chi=\frac{1}{\gamma\sigma^{2}\left(1+\beta_{0}\right)}$,  $\theta_{t-1}^{t}=\theta_t^{t-q}=0$.
\end{proposition}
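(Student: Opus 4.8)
The plan is to start from the closed-form demand in Proposition \ref{pro: demandsstatic} and show that, once the common (cohort-independent) part of each position is removed by market clearing, the change in a cohort's holding is proportional to the \emph{demeaned} revision of that cohort's subjective mean; the trade-volume norm in \eqref{e: TradeVolume} then reads off directly.

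First I would substitute the equilibrium price $p_{t+1}=\alpha+\sum_{k=0}^{q-1}\beta_{k}d_{t+1-k}$ from \eqref{e: EqPrices} and the EBL identity $E^{n}_{t}[d_{t+1}]=\theta^{n}_{t}$ into the demand $x^{n}_{t}=E^{n}_{t}[s_{t+1}]/(\gamma(1+\beta_{0})^{2}\sigma^{2})$ of \eqref{e: EqDemands}. Since $E^{n}_{t}[p_{t+1}+d_{t+1}]=\alpha+(1+\beta_{0})\theta^{n}_{t}+\sum_{k=1}^{q-1}\beta_{k}d_{t+1-k}$, the only cohort-specific piece is $(1+\beta_{0})\theta^{n}_{t}$: the constant $\alpha$, the already-realized lags $d_{t+1-k}$ for $k\ge 1$, and the cost $p_{t}R$ are identical across all $n$ trading at $t$. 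This yields the structural representation $x^{n}_{t}=\chi\,\theta^{n}_{t}+A_{t}$, where $\chi=1/(\gamma(1+\beta_{0})\sigma^{2})$ and $A_{t}$ does not depend on $n$. This affine-in-beliefs form with a common intercept is the heart of the argument.

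Second, I would difference across the two trading dates to get $x^{n}_{t}-x^{n}_{t-1}=\chi(\theta^{n}_{t}-\theta^{n}_{t-1})+(A_{t}-A_{t-1})$, so each position change equals $\chi$ times the cohort's own belief revision plus a single economy-wide shift $A_{t}-A_{t-1}$. I then impose market clearing \eqref{eq:MC} at $t$ and $t-1$: aggregate holdings equal total supply at both dates, which forces $\sum_{n=t-q}^{t}(x^{n}_{t}-x^{n}_{t-1})=0$. Averaging the differenced identity across cohorts and using this zero-sum condition pins the common shift down as $A_{t}-A_{t-1}=-\chi\,\overline{\Delta\theta}_{t}$, where $\overline{\Delta\theta}_{t}$ is the cross-sectional average belief revision over the cohorts entering the sum. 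Substituting back gives $x^{n}_{t}-x^{n}_{t-1}=\chi[(\theta^{n}_{t}-\theta^{n}_{t-1})-\overline{\Delta\theta}_{t}]$, exactly the demeaned object inside \eqref{e: TradeVolume}. Plugging this into the definition \eqref{e: TradeVolumeDef} and factoring $\chi$ out of the square root then delivers the claim.

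The step I expect to be the main obstacle is the bookkeeping at the two boundary cohorts. Generation $t$ enters (with the stated convention $x^{t}_{t-1}=0$) while generation $t-q$ exits (so $x^{t-q}_{t}=0$), so the sets of cohorts trading at $t$ and at $t-1$ are not the same; the entry term $x^{t}_{t}$ and the exit term $-x^{t-q}_{t-1}$ must be folded into the same demeaned-belief form as the continuing cohorts, even though the continuing-cohort identity $x^{n}_{t}-x^{n}_{t-1}=\chi\delta^{n}+(A_{t}-A_{t-1})$ does not literally apply to them. Moreover, the demeaning in the statement runs over the $q+1$ indices $n\in\{t-q,\dots,t\}$ while being normalized by $1/q$ rather than $1/(q+1)$, so the reconciliation of the common-intercept cancellation with the shifting index and this normalization is the delicate part. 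I would handle it by combining the two market-clearing identities with the conventions $x^{t}_{t-1}=0$ and $x^{t-q}_{t}=0$, which together fix how the boundary beliefs enter and make the $1/q$ normalization consistent with the $q+1$-term average.
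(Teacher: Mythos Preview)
Your proposal is correct and follows essentially the same route as the paper: write each cohort's demand as $x^{n}_{t}=\chi\,\theta^{n}_{t}+A_{t}$ with a cohort-independent intercept, difference, use the two market-clearing conditions to identify the common shift $A_{t}-A_{t-1}$ as minus the cross-sectional average belief revision, and substitute back to obtain the demeaned form. The paper handles the boundary cohorts exactly as you anticipate, by setting $\theta^{t}_{t-1}=\theta^{t-q}_{t}=0$ and applying the demand formula \eqref{eqn:TradeVolumeM-0} to $x^{t}_{t}$ and $x^{t-q}_{t-1}$ so that the entry and exit terms fold into the same $(q+1)$-term sum normalized by $1/q$.
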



Expression \eqref{e: TradeVolume} illustrates that the presence of EBL induces trade through changes in beliefs, which in our framework are driven by shocks to dividends. More specifically, when the change in a cohort's beliefs is different from the average change in beliefs, trade volume increases. That is, trade volume increases in the dispersion of changes in beliefs.

To understand the drivers of trade volume, we need to understand not only the demands of agents that enter and exit the market, but, most importantly, how beliefs across cohorts change in response to a given shock. 
From our previous analysis, 
it follows that an increase (decrease) in dividends induces trade when it makes young agents more optimistic (pessimistic) than old agents. This mechanism is solely due to the presence of EBL, since it is essential that each generation reacts differently to the same dividend. 

We formalize this insight in the following thought experiment capturing the reaction to a dividend shock that occurs after a long period of stability. 

\medskip

{\bf Thought Experiment.} Suppose that, for $t-t_0 > q$, 
$d_{t_{0}} = d_{t_{0}+1} = ... = d_{t-1} = \bar{d}$ and that $d_{t} \neq \bar{d}$. Hence, all generations alive at time $t-2$ and $t-1$ have only observed a constant stream of dividends $\bar{d}$ over their lifetimes so far. Therefore, $E_{t-2}^{n}\left[d_{t-1}\right]=E_{t-1}^{n}\left[d_{t}\right]=\bar{d}$ for all $n \in \{t-1-q,...,t-1\}$ and thus trade volume in $t-1$ is simply given by the demand of the youngest (entering) and the oldest (exiting) agents. 

What happens at time $t$, when a dividend $d_t\neq \bar{d}$ is observed? For each generation $n$ trading at time $t$ and at time $t-1$, i.\,e., for $n=\{t-q+1,..,t-1\}$, beliefs are given by $E_{t}^{n}\left[d_{t+1}\right] = w(0,\lambda,t-n) (d_t - \bar{d}) + \bar{d}$ 
and
$E_{t-1}^{n}\left[d_{t}\right] = w(0,\lambda,t-1-n) (d_{t-1} - \bar{d}) + \bar{d}$, 
which implies the following change in cohort $n$'s beliefs:
$E_t^n\left[d_{t+1}\right] - E_{t-1}^n\left[d_{t}\right] = w(0,\lambda,t-n) (d_t - \bar{d})$. Trade volume in $t$ is therefore:
\begin{equation} 
TV_t =  \left[ \frac{\chi^2 \left(d_t-\bar{d}\right)^2}{q}  \sum_{n=t-q+1}^{t-1}  \left(w(0,\lambda,t-n)-\frac{1}{q}\sum_{\tilde{n}=t-q+1}^{t-1} w(0,\lambda,t-\tilde{n}) \right)^2 + \frac{1}{q} (x_t^t)^2+\frac{1}{q} (x_{t-1}^{t-q})^2\right]^{\frac{1}{2}}.
\end{equation}

\medskip

This thought experiment pins down two aspects of the link between the volatility in beliefs and trade volume: 
First, the trade volume increases proportionally to the change in dividends, $|d_t-\bar{d}|$, independently of whether the change is positive or negative, and also proportionally to a function that reflects the dispersion of the weights agents assign to the most recent observation in their belief formation process. Second, the increase in trade volume generated by a given change in dividends depends on the level of recency bias of the economy, which is captured by $\lambda$. For example, as $\lambda \rightarrow \infty$, the dispersion in weights decreases as $w(0,\lambda,age)\rightarrow 1$ for all $age\in\{0,...,q-1\}.$ Thus, our results suggest that higher recency bias, $\lambda$, should generate lower trade volume responses for a given shock to dividends, and vice versa. 





\section{Market Participation} \label{sec: demographics}

The results derived so far illustrate a key feature of experience-based learning: The demographic structure of an economy, and in particular the cross-sectional composition of investors, affect equilibrium prices, demand, and trade volume in a predictable direction. 

In this section, we explore the link between market demographics and financial market outcomes by considering an \emph{unexpected} increase in the fraction of young market participants, e.\,g., due to a baby boom or a generation-specific event drawing a certain generation into the stock market.\footnote{We have also analyzed the implications  of a growing market population, as opposed to a one-time market demographic shock. In Online Appendix \ref{oa:demographics}, we show that population growth generates a positive trend in prices, which is independent of experience effects: 
The growing mass of agents increases the demand for the risky asset, and hence prices adjust to clear markets, since risky-asset supply is assumed to be constant. While the positive trend is independent of experience effects, experience-based learning does affect the path of the prices fluctuating around this trend. In particular, we find that the relative reliance of prices on the most recent dividend is increasing in the population growth rate.}  
The goal of this exercise is to understand how a larger fraction of young market participants affects market dynamics. 

For ease of illustration, we focus again on our $q=2$ economy. 
We denote the mass of young agents at any time $t$ by $y_t$, and the total mass of agents at $t$ by $m_t = y_t + y_{t-1}$.  We consider a one-time unexpected (exogenous) shock to the mass of young agents in the market at time $\tau$.
\footnote{In reality, the participation of young agents in the market could also be determined endogenously (e.\,g., by entry costs). While the forces described in this section would still be present in such scenario, other forces may be at play as well. The study of these interactions is out of the scope of this paper.}
For all $t<\tau$ and $t>\tau+1$, instead, $y_t=y$ and thus $m_t=2 y =m $. 

We know from our previous results that when the market has equal-sized cohorts, prices are given by 
$p_{t} = \alpha+\beta_{0}d_{t}+\beta_{1}d_{t-1}$, with $\{\alpha,\beta_{0},\beta_{1}\}$ given by \eqref{e: AlphaToyModel}-\eqref{e: Beta1ToyModel}. Here, prices follow this path for $t> \tau+1$ and, since the shock at time $\tau$ is unexpected, for $t < \tau$ as well. 
For these time periods, the market is as described in Section \ref{sec: ToyModel}. We are left to characterize demands and prices for $\tau$ and $\tau+1$, when the larger young generation enters the market and when this generation becomes old, respectively. We make the following  guesses: 
\begin{align}
	p_{\tau} & =  a_{\tau}+b_{0,\tau}d_{\tau}+b_{1,\tau}d_{\tau-1},\\
	p_{\tau+1} & =  a_{\tau+1}+ b_{0,\tau+1}d_{\tau+1}+b_{1,\tau+1}d_{\tau}.
\end{align}

We solve the problem by backwards induction. Note that the form of agents' demands remains unchanged. By imposing market clearing in $\tau+1$, with mass $y$ of young agents and $y_\tau$ of old agents, and using the method of undetermined coefficients we obtain
\begin{align}
		a_{\tau+1} & =  \alpha\frac{1}{R}\left[1+\frac{R-1}{m_\tau}\right], \nonumber 
		\\
		b_{0,\tau+1} & =  \beta_{0}\left[1+\frac{1}{R}\left(\frac{m_\tau-y_\tau}{m_\tau}+\frac{y_\tau}{m_\tau}\omega-\frac{y}{m}(1+\omega)\right)\right]+\frac{1}{R}\left(\frac{m_\tau-y_\tau}{m_\tau}+\frac{y_\tau}{m_\tau}\omega-\frac{y}{m}(1+\omega)\right), \nonumber \\
		{b}_{1,\tau+1} & = \beta_{1}\frac{y_\tau}{m_\tau}\frac{m}{y}, \nonumber
	\end{align}
where $\omega \equiv \frac{2^\lambda}{1+2^\lambda}$ and $m_\tau= y + y_\tau$.
Note that for $y_\tau=y$, the coefficients are as in the baseline model \eqref{e: AlphaToyModel}-\eqref{e: Beta1ToyModel}. The above expressions show that the total mass of agents $m_\tau$ only affects the price constant, while the price loadings depend 
on the fraction of young agents in the market, $y_\tau/m_\tau$. We impose market clearing in $\tau$, with mass $y_\tau$ of young agents and $y$ of old agents. Using the method of undetermined coefficients, we obtain 
	\begin{align}
		a_{\tau} & =  \frac{1}{R}\left[a_{\tau+1}-\frac{\gamma\left(1+b_{0,\tau+1}\right)^{2}\sigma^{2}}{m_\tau}\right], \nonumber 
		\\
		b_{0,\tau} & =  \frac{1}{R}\left(1+b_{0,\tau+1}\right)\left(\frac{y_\tau}{m_\tau}+\frac{m_\tau-y_\tau}{m_\tau}\omega\right)+\frac{1}{R^{2}}\left(1+\beta_{0}\right)\frac{y_\tau}{m_\tau}\left(1-\omega\right), \nonumber \\
		b_{1,\tau} & =  \frac{1}{R}\left(1+b_{0,\tau+1}\right)\frac{m_\tau-y_\tau}{m_\tau}\left(1-\omega\right). \nonumber
	\end{align}

\begin{figure*}
	\centering     
	\subfigure[Price Loading on $d_t$]{\label{fig2a}\includegraphics[width=0.48\textwidth]{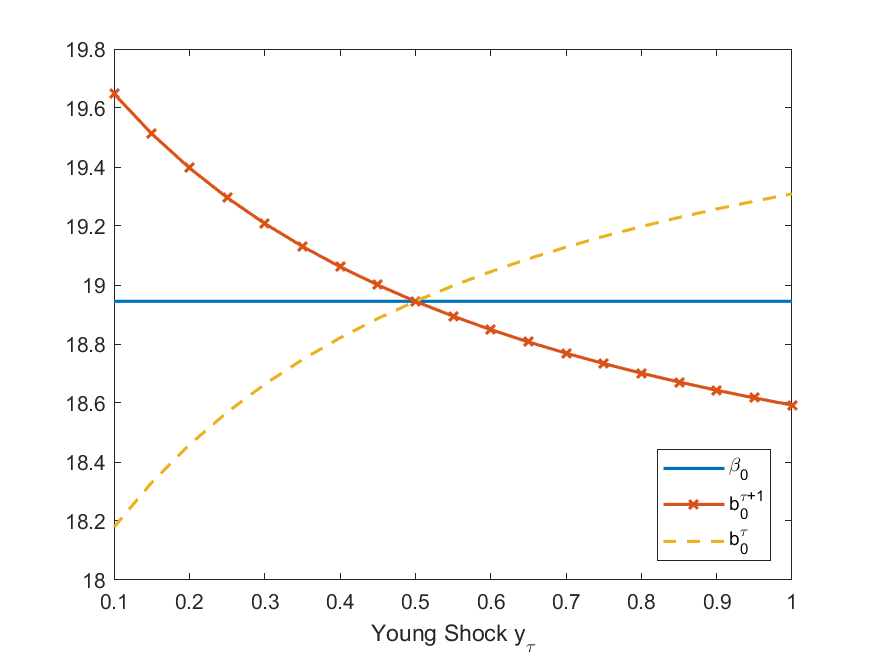}}\hfill
	\subfigure[Price Loading on $d_{t-1}$]{\label{fig3b}\includegraphics[width=0.48\textwidth]{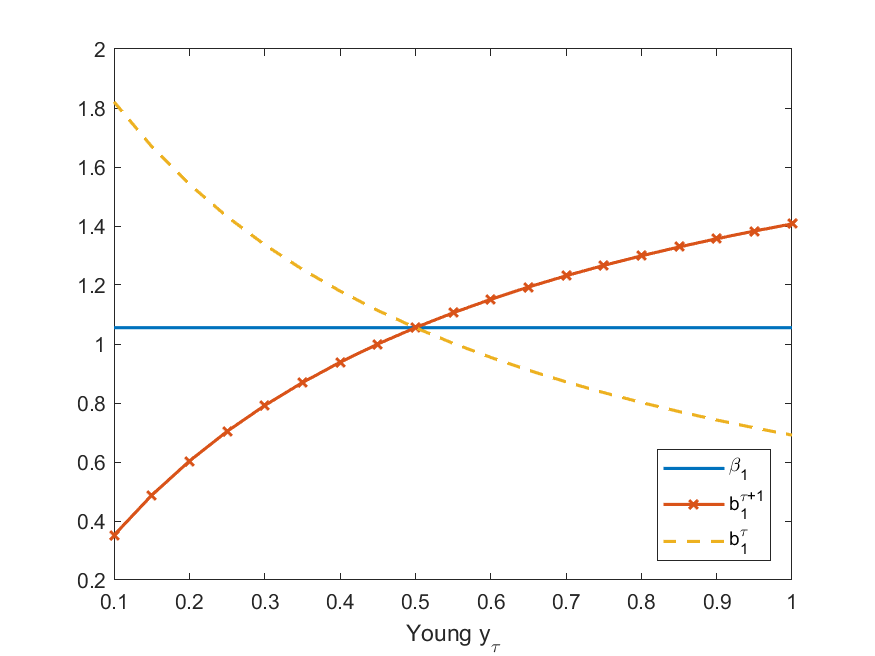}}\hfill
	\caption{Demographic Shocks and Price Coefficients.}
	\vspace{-0.2cm}
	\caption*{\textit{Notes}. This figure plots coefficients $\{\beta_0,\;b_{0,\tau},\;b_{0,\tau+1}\}$ in Panel (a) and $\{\beta_1,\;b_{1,\tau},\;b_{1,\tau+1}\}$ in Panel (b) as a function of the demographic shock $y_\tau$. The results are for $y=0.5$, $\lambda=3$, and $R=1.05$.}
	\label{f: DemographicsBetas}
	\vspace{0.4cm}
\end{figure*}

Figure \ref{f: DemographicsBetas} shows how the reliance of prices on past dividends changes with the fraction of young agents in the market at time $\tau$. We see that as the fraction of young people in the market increases ($y_{\tau}>0.5$), the more current dividends matter more relative to past dividends for the determination of prices; i.e., $b_0^{\tau}$ increases while $b_1^{\tau}$ decreases. Consistent with this, when the $\tau$-generation becomes old, prices depend less on contemporaneous dividends and more on past dividends; i.e., $b_0^{\tau+1}$ decreases while $b_1^{\tau+1}$ increases. Finally, an increase in the overall market population, and thus demand for the risky asset, generates a level increase in prices captured an increase in the price constant both at $\tau$ and $\tau +1$. All predictions are reversed when the fraction of young agents in the market decreases ($y_{\tau}<0.5$). These results are consistent with \citeN{collin2016asset}, who show both theoretically and empirically that the sensitivity of the price-dividend ratio to macro-shocks increases with the relative fraction of young market participants.

\section{Empirical Implications} \label{sec:EmpiricalImplications}

In this section, we analyze the empirical implications of our model. The analysis consists of two approaches.
First, we 
turn to asset-pricing features established in prior empirical literature: the predictability of stocks returns, 
the predictability of the dividend-price ratio, and the excess volatility puzzle. 
We show that the experience-based learning model is able to quantitatively match these empirical findings, and that it generates refined predictions about their relation with the demographic composition of investors.
Second, 
we test the novel predictions generated by our model regarding 
the implications of the demographic composition on the predictability of the price-dividend ratio, trade volume, and the cross-section of asset holdings.
We show that these predictions are in line with evidence from micro-level data in the Survey of Consumer Finances (SCF) and the Center for Research in Security Prices (CRSP).

\subsection{Quantitative Implications for Asset-Pricing Moments} \label{sec: quantitative}
We first show that experience-based learning can explain 
several key asset-pricing puzzles.
As the CARA-Normal framework is not well suited for a thorough calibration exercise, we follow the approach of \citeN{Campbell_Kyle1993} and \citeN{barberis2015x}, among others, to compute the moments of interest generated by our model and contrast them with the data. As in these papers, we define quantities in terms of differences rather than ratios, since variables in our model proxy for the log of prices and dividends in the data. Thus, in this section, capital letters $P$ and $D$ are used to denote prices and dividends, respectively, while small letters denote their logs, $p = \log(P)$ and $d = \log(D)$. For example, instead of stock returns we measure price changes $\Delta p$, and instead of the price-dividend ratio $P/D$ we study the difference $p - d$. 

A distinguishing feature of our model is that it establishes a link between the age profile of agents participating in the stock market and the factors that determine prices. Another feature of our model is the small number of parameters to be set for generating numerical results.\footnote{The link between demographics and price features is also studied in \cite{collin2016asset}.} 
Following \citeN{barberis2015x}, we choose the following parameter values for our numerical solutions: the gross risk-free interest rate is $R = 1.05$, the volatility of dividends is $\sigma = 0.25$, and the coefficient of risk-aversion is $\gamma = 2$. We show our estimates for $\lambda \in \{1,3\}$ and for $q \in \{2,40\}$.

\medskip

\textbf{Predictability of Excess Returns.}
A prominent stylized fact about stock-market returns, established by \citeN{CampbellShiller1988}, is that the dividend-price ratio predicts future returns with a positive sign.
Experience-based learning rationalizes such predictability and, at the same time, limits it to those dividend realizations experienced by the oldest cohort participating in the market.

In order to relate the predictability generated in our model to the existing empirical evidence, and to show how it varies with the demographic composition of investors, we calculate the following measure of co-movement between the analogues of the dividend-price ratio and returns, namely, between dividend-price differences $d_t-p_t$ and price changes $p_{t+z} - p_{t}$ over return horizon $z$:
\begin{align}
\it{B}_t^R(z) \equiv \frac{Cov(d_t-p_t,p_{t+z}-p_t)}{Var(d_t-p_t)}.
\end{align}
We compute $\it{B}_t^R(z)$ 
using equation (\ref{e: EqPrices}) from Proposition \ref{pro: prices_myopic}.

We first calculate $\it{B}_t^R(z)$ for different horizons $z$. 
Panel (a) of Figure \ref{f: PredReturns} plots $\it{B}_t^R(z)$ for $z$ ranging from $1$ to $42$, and different levels of recency bias, $\lambda \in \{1,3\}$, in an economy with $q=40$. Given the number of cohorts, the obtained co-movements can be interpreted as annual; that is, $z=1$ can be interpreted as a one year horizon. As the panel shows, the experience-based learning model generates a positive (and strong) relation between the dividend price ratio and returns, which increases with the return horizon. These predicted patterns are consistent with the empirical findings described in \citeN{cochrane2011presidential}.  
We note that the predictability of excess returns under EBL is an equilibrium phenomenon that stems solely from our learning mechanism and not from, say, a built-in dependence on dividends or past returns. Similar to prior theoretical approaches, such as the over-extrapolation model of \shortciteN{barberis2015x} and \shortciteN{barberis2016extrapolation}, our explanation relies on agents' overweighting recent realizations.

\begin{figure*}
	\centering     
	\subfigure[$B_t^R(z)$ with $q = 40$]{\label{fig3a}\includegraphics[width=0.32\textwidth]{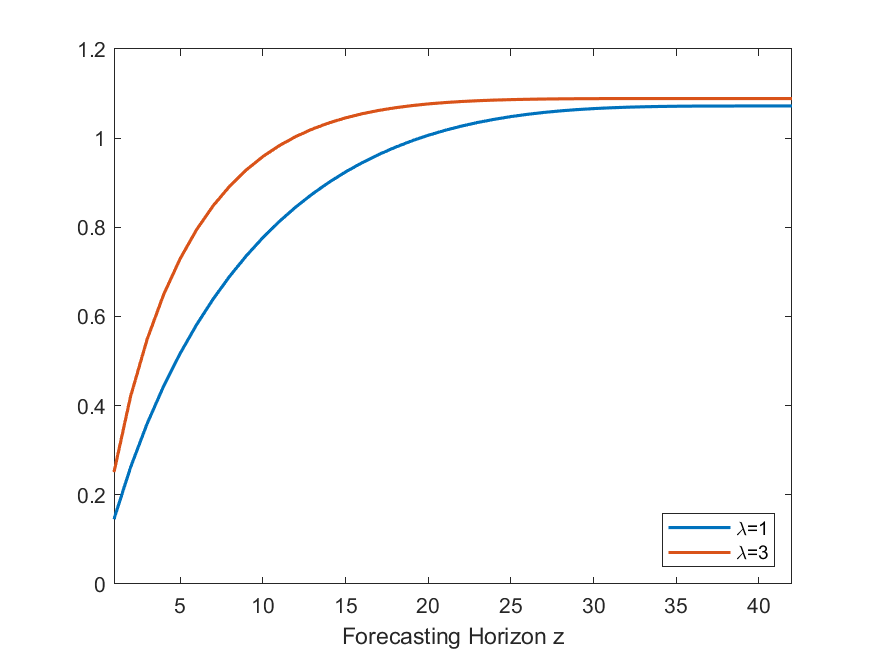}}\hfill
	\subfigure[$B_t^R(1)$ with $q = 2$]{\label{fig3b}\includegraphics[width=0.32\textwidth]{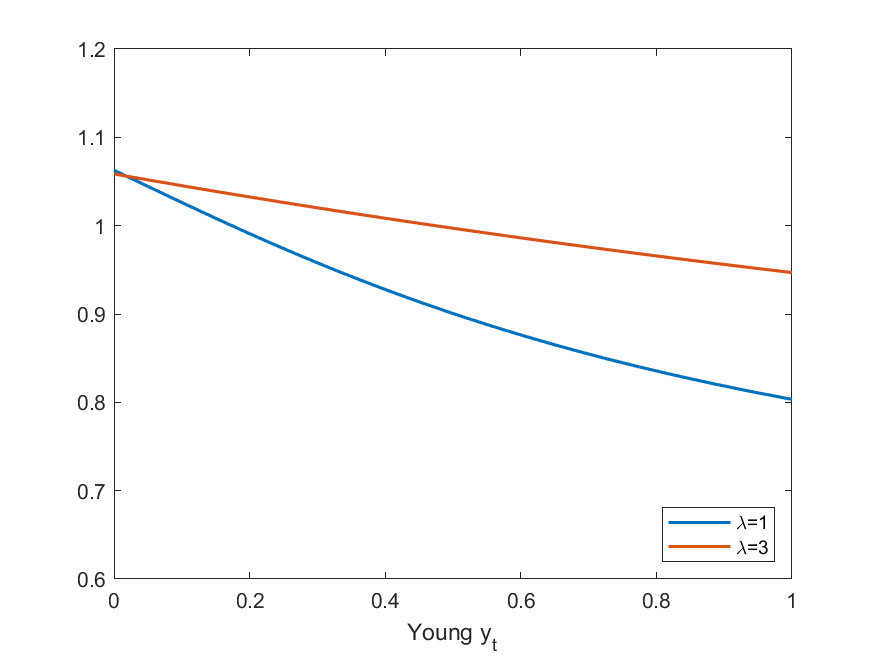}}\hfill
	\subfigure[$B_{t-1}^R(1)$ with $q = 2$]{\label{fig3c}\includegraphics[width=0.32\textwidth]{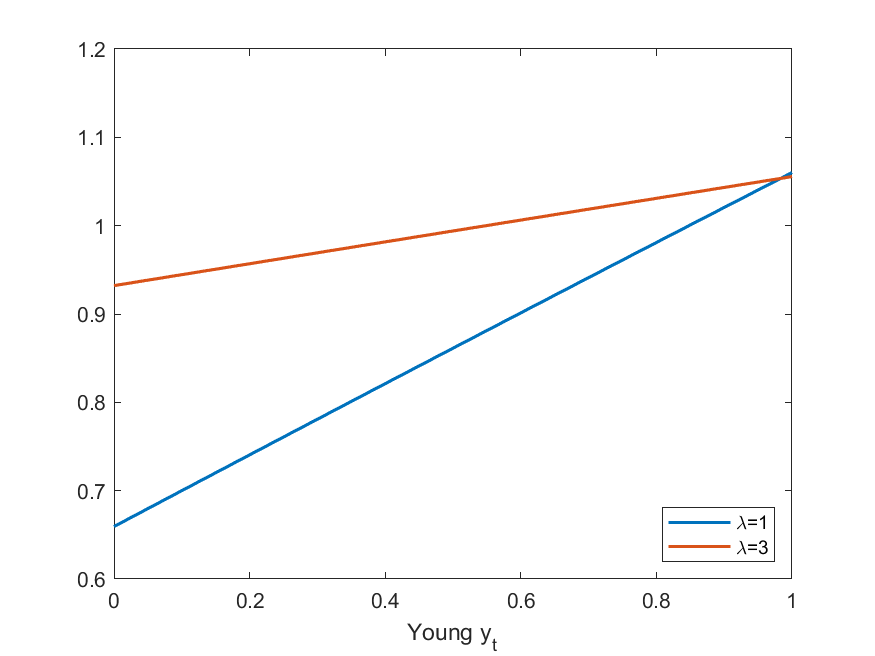}} 
	 \hspace*{\fill}%
	\caption{Predictive Power of $D_t-P_t$ for $P_{t+z}-P_t$.}
	\label{f: PredReturns}
	\vspace{-0.2cm}
	\caption*{\textit{Notes}. This figure plots the coefficient $B_t^{R}$ for two levels of recency bias, $\lambda \in \{1,3\}$. Panel (a) shows how $B_t^{R}(z)$ varies with the return horizon $z$, for $q=35$. Panel (b) shows how $B_t^{R}(1)$ varies with the fraction $y_{t}$ of young agents in the market at time $t$, and panel (c) with the fraction $y_{t+1}$ of young agents in the market at time $t+1$, in both cases for $q=2$. The results are calculated for $y=0.5$ and $R=1.05$.}
	\vspace{0.4cm}
\end{figure*}

Our model has the additional implication that different demographic structures generate different $\beta$'s, which directly determine the level of predictability (or extrapolation) of stock returns. 
We show this by studying how the coefficient $B^R_t(1)$ varies with the fraction of young market participants.
Here, we use the results from Section \ref{sec: demographics} on the effect of a one-time unexpected shock to the mass of young agents, so we focus on an economy with $q=2$. Note that with $q=2$ the co-movements cannot be interpreted as annual; in particular, $z=1$ captures approximately a 15-20 year horizon. We assume that the fraction of young agents equals $y$ at all times before and after $t$, but there is a one-time exogenous 
shock 
in $t$, resulting in 
$y_t\neq y$.
The resulting variation in next period's return predictability (based on this period's dividend-price ratio), $B^R_t(1)$, and in the current period's return predictability (based on last period's dividend-price ratio), $B^R_{t-1}(1)$, are shown in panels (b) and (c) of Figure \ref{f: PredReturns}.
For both graphs we fix $y$ at $0.5$, and we plot $y_t$ over a range 
$[0,1]$.

As the plots show, the predictability of next period's return, $B^R_t(1)$, decreases in the number of young market participants (panel b), while the predictability of this period's return, $B^R_{t-1}(1)$, increases in their number (panel c). The key channel is the differential impact on the variance of the dividend-price ratio. In both cases the covariance between future returns and the dividend-price ratio 
increases in the fraction of young agents; but only the variance of the dividend-price ratio 
at time $t$ increases such that it off-sets the increase in covariance. 
Thus, return predictability is affected by the demographic composition of market participants, and 
the effect 
is sensitive to the timing of the participation shock.  
With a larger generation of market participants coming in at $t$, the return experienced in that period is more predictable.




\medskip
\textbf{Predictability of Price-Dividend Ratio}. In addition to the predictability 
of returns, we can also compute the predictability of the price-dividend ratio implied by the model.
That is, we relate past P/D ratios to future P/D realizations, and analyze the persistence of the price-dividend ratio. 
In particular, we study how this predictability of P/D ratios varies with the investment horizon and with the fraction of young people in the market. Our measure of predictability is constructed as follows:
\begin{align}
B^{PD}_{t}(z) = \frac{cov\left(p_{t+z} - d_{t+z},p_{t}-d_{t}\right)}{var\left(p_{t}-d_{t}\right)} 
\end{align}

\begin{figure*}
	\centering     
	\subfigure[$B_t^{DP}(z)$ with $q = 40$]{\label{fig4a}\includegraphics[width=0.48\textwidth]{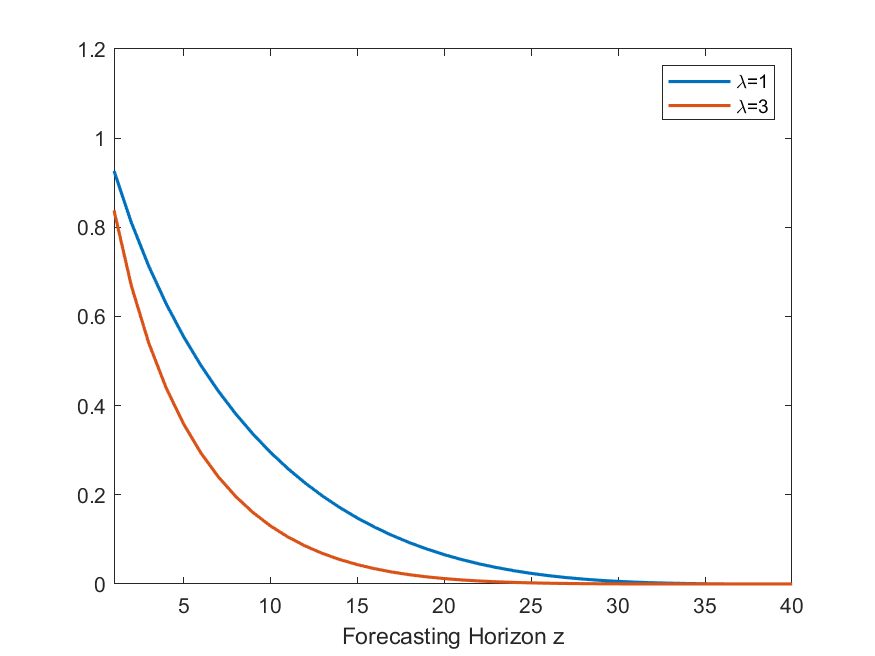}}\hfill
	\subfigure[$B_t^{DP}(1)$ with $q = 2$]{\label{fig4b}\includegraphics[width=0.48\textwidth]{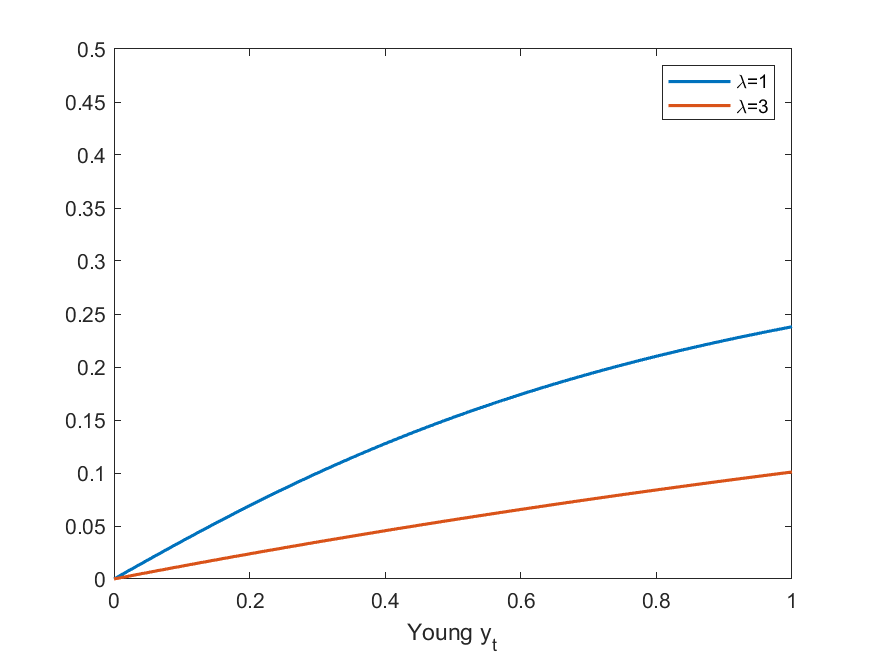}}\hfill
	\hspace*{\fill}%
\caption{$p_t - d_t$ autocorrelation.}
\vspace{-0.2cm}
\caption*{\textit{Notes}. This figure plots the coefficient $B_t^{PD}$ for two levels of recency bias, $\lambda \in \{1,3\}$. Panel (a) shows how $B_t^{PD}(z)$  varies with the investment horizon $z$ for $q=40$. Panel (b) shows how $B_t^{PD}(1)$  varies with the fraction $y_{t}$ of young agents in the market at time $t$ for $q=2$. The results are calculated for $y=0.5$ and $R=1.05$.}
\label{f: Participation_DPRatio}
\vspace{0.4cm}
\end{figure*}

We first calculate how $B_t^{PD}$ varies with the horizon $z$. Panel (a) of Figure \ref{f: Participation_DPRatio} displays how $B_t^{PD}$ varies for different horizons $z$, and for different levels of recency bias, $\lambda \in \{1,3\}$, in an economy with $q=40$. The large $q$ allows us to relate the obtained correlations to annual correlations. As in the data, we obtain that the $P/D$ is highly autocorrelated at short lags, with the autocorrelation being zero at longer horizons. From Panel (b) we see that as we reduce the number of cohorts in the market (or their horizon), autocorrelations are lower.  Furthermore, $B_t^{PD}$ decreases in the extent of recency bias present in the population for all $q$. 

We then turn to the demographic structure, using again the results from Section \ref{sec: demographics}. As shown in panel (b) of  Figure \ref{f: Participation_DPRatio}, $B_t^{PD}$ increases with the fraction of young agents in the market at time $t$, and the effect is weaker under higher recency bias. Furthermore, $B_t^{PD}$ increases with the number of cohorts in the market. A direct implication is that the dividend-price ratio is positively correlated only with lagged realizations where the number of periods lagged is below the number of cohorts in the market. 

\textbf{Price Dynamics.}
A third set of asset pricing implications are related to the dynamics of prices, and in particular the excess volatility puzzle. 
As is standard in the literature, we analyze the volatility of the log price growth and of the log price-dividend ratio, and the volatility of log prices relative to that of log dividends, both in the model and in the data. To do so, we use historical price and dividend data from Robert Shiller's website, where all log series are de-trended. Our stylized model generates ample volatility relative to our benchmark economy and to the data. The main reason being that agents' beliefs are extremely volatile when they do not put any weight on their prior belief ($\tau=0$ in \eqref{eq: PosteriorMeanFormula}), which would operate as an anchor. To highlight the quantitative effects of prior beliefs, we compute the model generated volatilities when we vary the importance that agents assign to their prior beliefs.\footnote{The solution to the model with prior beliefs is described in detail in Appendix \ref{app: PriorBeliefs}.} Table \ref{f: ExcessVolatility} presents our results for an economy with $\lambda=1$, $q=40$, and for different levels of prior relevance, captured by $\tau$.    


\begin{table}[t]
	\centering
	\includegraphics[width=0.7\linewidth]{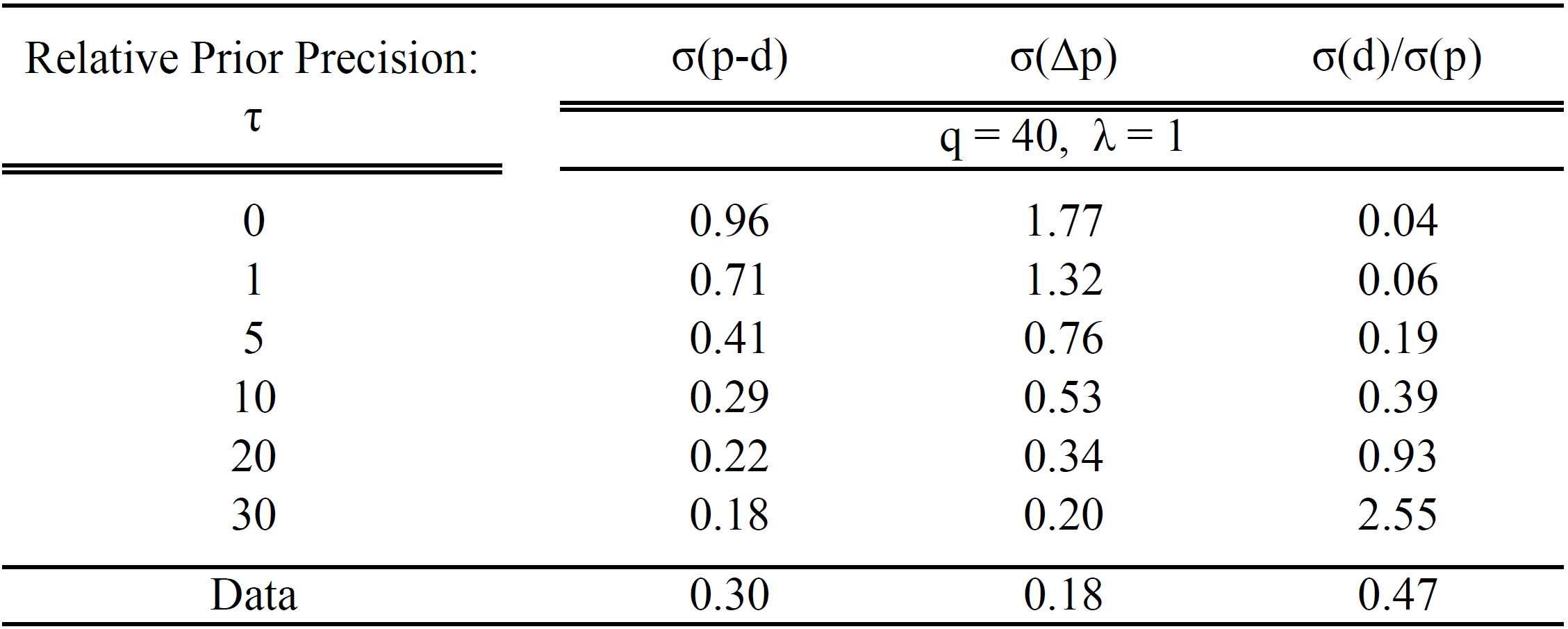}
		\caption{Excess Volatility}
	\label{f: ExcessVolatility}
\end{table}

We see that experience-based learning generates ample excess volatility in prices, returns, and price-dividend ratios. This can be seen by comparing the data with the model with no prior beliefs ($\tau=0$). However, if we allow agents to have prior beliefs, the model is able to generate moments more in line with the data. From these findings, we conclude that experience-based learning has the ability to generate volatility in line with the data.

\subsection{Demographics on Price-Dividend Predictability}

The predictability results in the previous subsection are consistent with the findings of \citeN{cassella2015}, who find a positive relation between their market-wide measure of return extrapolation and the relative participation of young versus old investors in the stock market. Our model of experience-based learning goes beyond a rationalization of the evidence on agents extrapolating from past dividends (cf.\;also \citeN{GreenwoodShleifer2014}). It puts structure on the extent of such extrapolation exhibited by different market participants and links it to market demographics.
We now bring this prediction to the data and show that is aligned with empirical observations. 

We want to test whether the predictive power of the lagged P/D ratios for the current one depends on the relative representation of younger versus older generations in the market in the manner predicted by the model.
Experience-based learning predicts that the correlation between future and current lags is higher when the current share of young market participants is large. 
Moreover, the model generates the heuristic that young people put little weight on observations of the ``distant" past (cf.\;Proposition \ref{p: rel_demands-myopic}).

In order to test these predictions, we regress the log of the P/D ratio onto lags of itself interacted with a dummy variable that indicates a larger presence of young people in the market. 
In order to model the dynamics
of the P/D process, we depart from the standard linear AR models and postulate a Markov-Switching Regime (MSR) model, which allows us to capture richer, non-linear dynamics in a tractable way.\footnote{For a more thorough discussion of MSR see \citeN{hamilton1989new}.} The regression model is thus given by 
\begin{align}
	p_{t+1}-d_{t+1} = \mu(S_{t+1}) + \sum_{j=1}^{3} (p_{t+1-j}-d_{t+1-j}) (\beta_{j} + \delta_{j} \times Y_{t+1}) + \sigma \epsilon_{t+1} , \label{eqn:MSR}
\end{align}
where $p_t$ and $d_t$ denote the log of dividends and prices at time $t$, respectively, $S_{t+1} \in \{0,1\}$ is an unobserved state that evolves according to a Markov transition kernel $Q$; $Y_{t}$ is a dummy variable that takes value 1 if the share of young generations participating in the market at time $t$ is large relative to the participation of older generations, and 0 otherwise; 
and we assume $\epsilon_{t+1} \sim N(0,1)$. 
The parameters, $(\{\mu(s)\}_{s \in \{0,1\}},\sigma,Q,\{\beta_{j},\delta_{j}\}_{j=1}^{3})$,
are jointly estimated using maximum likelihood (see, e.\,g., \citeN{HAMILTON1994} for details). 

\begin{table} 
	
	\caption{Markov-Switching Regime (MSR) Model}
	\vspace{-0.4cm}
		\begin{tablenotes}
			\item Estimation results for model specification \eqref{eqn:MSR}, where $p_t - d_t$ is the log of the price-to-dividend ratio, and regressed onto lags of itself interacted with a demographic dummy variable. $Y_t$ is the fraction of young people, which we define as an indicator equal to 1 when the fraction of investors under 50 is larger than 0.5 (in column 1), or as an indicator equal to 1 when the fraction of wealth of investors below 50 is larger than their 1960-2013 sample average (in column 2).The demographic data including age and wealth (liquid assets) of stock market participants is from the SCF, stock data from Robert Shiller's website.
		\end{tablenotes}
	\vspace{0.5em}
	\begin{center}
		\begin{tabularx}{380pt}{X c*{2}{c}}
			
			\hline\hline
			&\multicolumn{2}{c}{Dependent variable: ${p}_{t} - {d}_{t}$}\\
			\cline{2-3}
			&\multicolumn{1}{c}{(1)}&\multicolumn{1}{c}{(2)}\\
			&\multicolumn{1}{c}{$Y_t$ age-based     }&\multicolumn{1}{c}{$Y_t$ age/wealth based}\\
			
			\hline
			&        &    \\
			$\delta_{1}$   & 0.701**	& 0.475* \\
			&(0.154)	&(0.252) \\[2mm]
			$\delta_{2}$ &-0.013	&-0.115\\
			&(0.146)	&(0.366) \\[2mm]
			$\delta_{3}$ &-0.745**	&-0.329 \\
			&(0.115)	&(0.232) \\[2mm]
			$\beta_{1}$  &0.377**	&0.622** \\
			&(0.120)	&(0.159) \\[2mm]
			$\beta_{2}$ &-0.216**	&-0.074 \\
			&(0.088)	&(0.136) \\[2mm]
			$\beta_{3}$ &0.714**	&0.249**   \\
			&(0.093)	&(0.099) \\[2mm]
			$\mu(S_{1})$&5.089**	&5.741** \\
			&(1.554)	&(1.812) \\[2mm]
			$\mu(S_{2})$&19.450**	&18.350** \\
			&(3.070)	&(4.768)  \\[2mm]
			$\sigma$&3.812	&4.343 \\
			&(0.388)	&(0.600)  \\[2mm]
			$Q_{11}$&0.956	&0.978    \\
			&(0.026)	&(0.017)  \\[2mm]
			$Q_{21}$&0.365	    &0.206 \\
			&(0.204)	&(0.154)  \\
			
			\hline
			
			\textit{N}      &     51   &     51    \\
			\hline\hline
			
		\end{tabularx}
	\end{center}
	\begin{small}
		\begin{tablenotes}
			\item Standard errors in parentheses. * significant at 10\%; ** significant at 5\%.
		\end{tablenotes}
	\end{small}
	
	\label{tab:return}
\end{table}

We consider two dummies for the relative representation of younger generations in the market. First, we compute the ratio of investors who are less than 50 years old in the total population, and construct an indicator that equals 1 if their share is bigger than 50\% (or, for robustness, bigger than $55\%$ or $60\%$).
Second, we calculate young investors' share of liquid wealth, and use an indicator that equals 1 if their liquid-wealth share is above its sample average (or, for robustness above 90\% or 110\% of the sample average). Details on the variable construction and robustness checks are in Online-Appendix \ref{sec:Emperical Analysis}. 

The theoretical prediction of our model is that the correlation between future and current lags should be higher when the current share of young market participants is large. This translates into the hypothesis that $\delta_{1}>0$ in the estimation model in (\ref{eqn:MSR}). 

The estimated values are reported in Table \ref{tab:return}. In column (1), we use the fraction of young people in the population, and in column (2) the fraction of their wealth to proxy for the relative representation of younger people in the market.
In both cases, the estimates provide evidence in favor of the model hypothesis. We estimate a positive $\delta_{1}$ coefficient, which is either significant at the 5\% or at the 10\% level. Moreover, considering all three coefficients $(\delta_{i})_{i=1}^{3}$ jointly, a roughly ``decreasing" pattern emerges: $\delta_{1}$ is typically positive, $\delta_{2}$ is typically non-significant, and $\delta_{3}$ is negative or insignificantly negative, consistent with the heuristics that young people put little weight on observations of the ``distant" past. Thus, in periods when their share is relatively large, the correlation between future and distant past values is weakened.

\subsection{Cross-Section of Asset Holdings and Trade Volume} \label{sec: Facts}

We now turn to the novel empirical predictions of the experience-based learning model about the cross-section of equity holdings and stock turnover.  
We investigate two sets of predictions that are directly testable and jointly hard to generate by alternative models. 

The first prediction is that cross-sectional differences in the demand for risky securities reflect cross-sectional differences in lifetime experiences of risky payoffs. That is, cohorts with more positive lifetime experiences are predicted to invest more in the risky asset than cohorts with less positive experiences (Proposition \ref{pro: demandsstatic}). We test this both in terms of stock-market participation (extensive margin) and in terms of the amount of liquid assets invested in the stock market (intensive margin). The second prediction is that changes in the cross-section of experience-based beliefs generate trade (Proposition \ref{l: TradeVolumeM}).

To test these model predictions, we combine historical stock returns data from Robert Shiller's website with SCF data on stock holdings and CRSP data on stock turnover.
The key explanatory variable is a measure of cohorts' lifetime experiences of risky-asset payoffs. Theoretically, dividends in the Lucas-tree economy capture the performance of the risky asset, or the stock market. 
Empirically dividends payments do not necessarily reflect how well firms are doing. For example, firms have incentives to smooth dividends, and also to retain earnings rather than distribute them. In other words, dividends in our model do not translate one-to-one to the dividend payments recorded in CRSP. We therefore use an array of empirical measures to capture the performance of the risky asset in our model: (1) annual stock market returns, (2) real dividends, (3) real earnings, and (4) U.S. GDP. We obtain the first three series from Robert Shiller's website, and the nominal GDP data from the Federal Reserve Bank of St. Louis (for 1929-2016) and Historical Statistics of the United States Millennial Edition Online (for 1871-1928). We convert nominal GDP into real GDP using Shiller's consumer price index variable. 

Dividends in our model are best interpreted as the performance of the risky asset at medium frequencies. Therefore, we use the \citeN{christiano2003band} band-pass filter and remove stochastic cycles at frequencies lower than 2 years and higher than 8 years,\footnote{These are the default frequencies for the CF-filter. We also remove a linear trend of the series before applying the filter and, in addition, work with the natural logarithm of earnings and GDP to remove non-linearities in these series. In unreported analyses, we also use the natural logarithm of dividends and obtain very similar results.} 
for all non-stationary series (dividend, earnings, and GDP).

In order to construct the experienced returns, dividends, earnings, and GDP of different generations over the course of their lives, we apply the formula from equation (\ref{eq:EBL-w}). We calculate generation-specific weighted averages, employing both linearly declining weights ($\lambda=1$), and a steeper weighting function ($\lambda = 3$), corresponding to the range of empirical estimates in \citeN{Malmendier_Nagel2008}. 


\medskip

\textbf{Stock market participation.} We test the first prediction relating the differences in lifetime experiences between older and younger cohorts (i.\,e., those above 60 and those below 40 years of age)
to the differences in their stock-market investment. Our source of household-level micro data is 
the cross-sectional data on asset holdings and various household background characteristics in the 
Survey of Consumer Finances (SCF). 
We use all waves of the modern, triannual SCF, available from the Board of Governors of the Federal Reserve System since 1983.
We follow the variable construction of \citeN{Malmendier_Nagel2008} and extend their analysis to the most recently released data. In addition, we employ some waves of the precursor survey, available from the Inter-university Consortium for Political and Social Research at the University of Michigan
since 1947.
We use all survey waves that 
include age and stock-market participation.\footnote{Those are 1960, 1962, 1963, 1964, 1967, 1968, 1969, 1970, 1971, and 1977.} 

For the extensive margin of stock holdings, we construct an indicator of stock-market participation. It equals 1 when a household holds more than zero dollars worth of stocks. We define stock holdings as the sum of directly held stocks (including stock held through investment clubs) and the equity portion of mutual fund holdings, including stocks held in retirement accounts (e.g., IRA, Keogh, and 401(k) plans).\footnote{For 1983 and 1986, we need to impute the stock component of retirement assets from the type of the account or the institution at which they are held and allocation information from 1989. From 1989 to 2004, the SCF offers only coarse information on retirement assets (e.\,g., mostly stocks, mostly interest bearing, or split), and we follow a refined version of the Federal Reserve Board's conventions in assigning portfolio shares. See \citeN{Malmendier_Nagel2008} for more details.}

For the intensive margin of stock holdings, we calculate the fraction of liquid assets invested in stocks as the share of directly held stocks plus the equity share of mutual funds, using all surveys from 1960-2013 other than 1971. Liquid assets are defined as the sum of stock holdings, bonds, cash, and short-term instruments (checking and savings accounts, money market mutual funds, certificates of deposit). In these analyses of the intensive margin, we drop all households that have no money in stocks. 

For both the young and old age group, we calculate their experience and their stock-market investment as a weighted average across cohorts, with the weight variable provided in the SCF. The weighted estimates are representative of the U.S. population.\footnote{The 1983-2013 SCF waves oversample high-income households with significant stock holdings. The oversampling is helpful for our analysis of asset allocation, but could induce selection bias. By applying SCF sample weights, we undo the over-weighting of high-income households and also adjust for non-response bias.}

We present the results graphically. We plot the relation between stock holdings (extensive and intensive margin) and experienced returns (Figure \ref{fig:performance_returns}), dividends (Figure \ref{fig:performance_dividends}), earnings (Figure \ref{fig:performance_earnings}), and GDP (Figure \ref{fig:performance_GDP}). Graphs \ref{fig:performance_returns}.(a) and \ref{fig:performance_returns}.(c) 
update the evidence on the extensive margin and 
returns presented in \citeN{Malmendier_Nagel2008}.)

\begin{figure*}
	\centering     
	\subfigure[Stock-market participation  ($\lambda = 1$) ]{\label{fig:Stock.a}\includegraphics[width=0.48\textwidth]{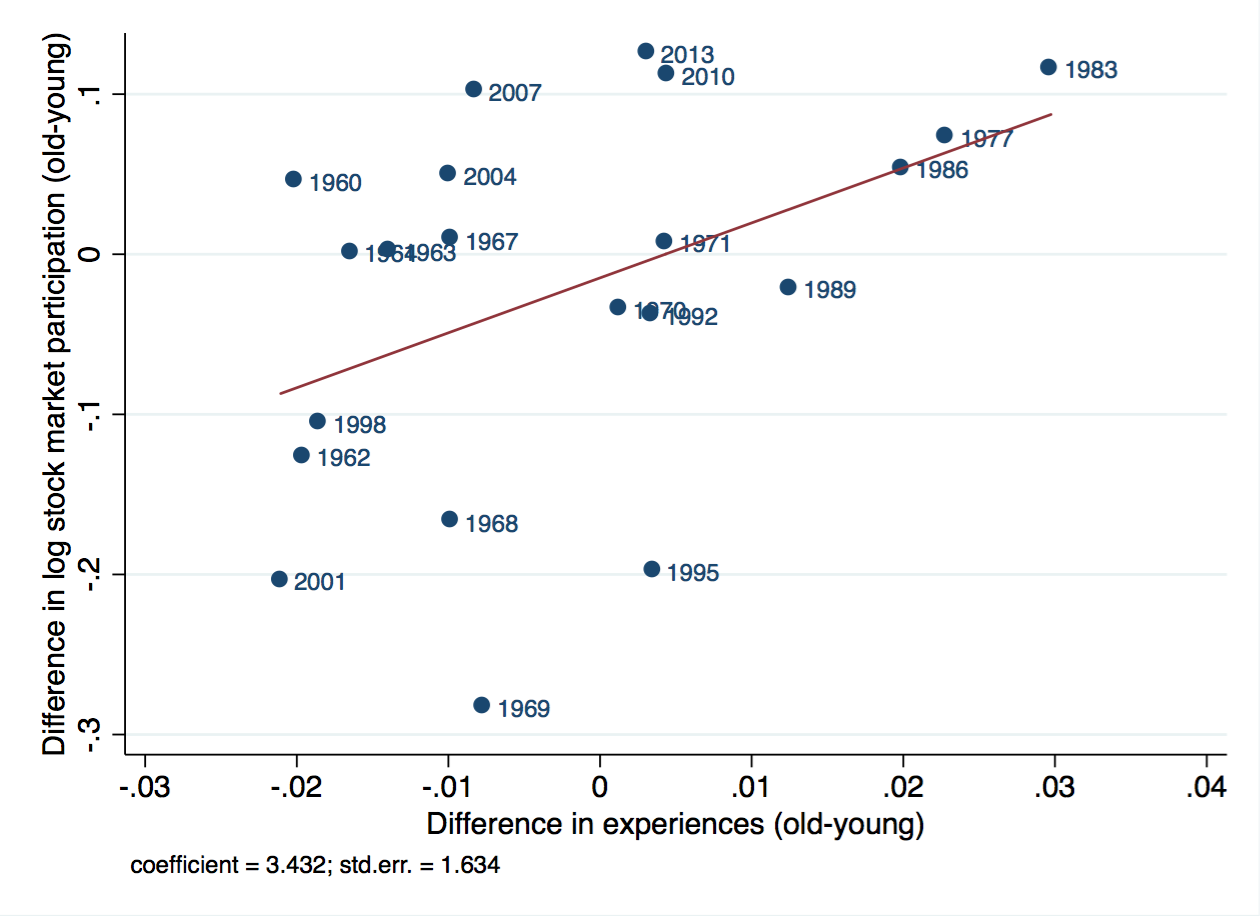}} 
	\subfigure[Fraction invested in stock ($\lambda = 1$)]{\label{fig:Stock.b}\includegraphics[width=0.48\textwidth]{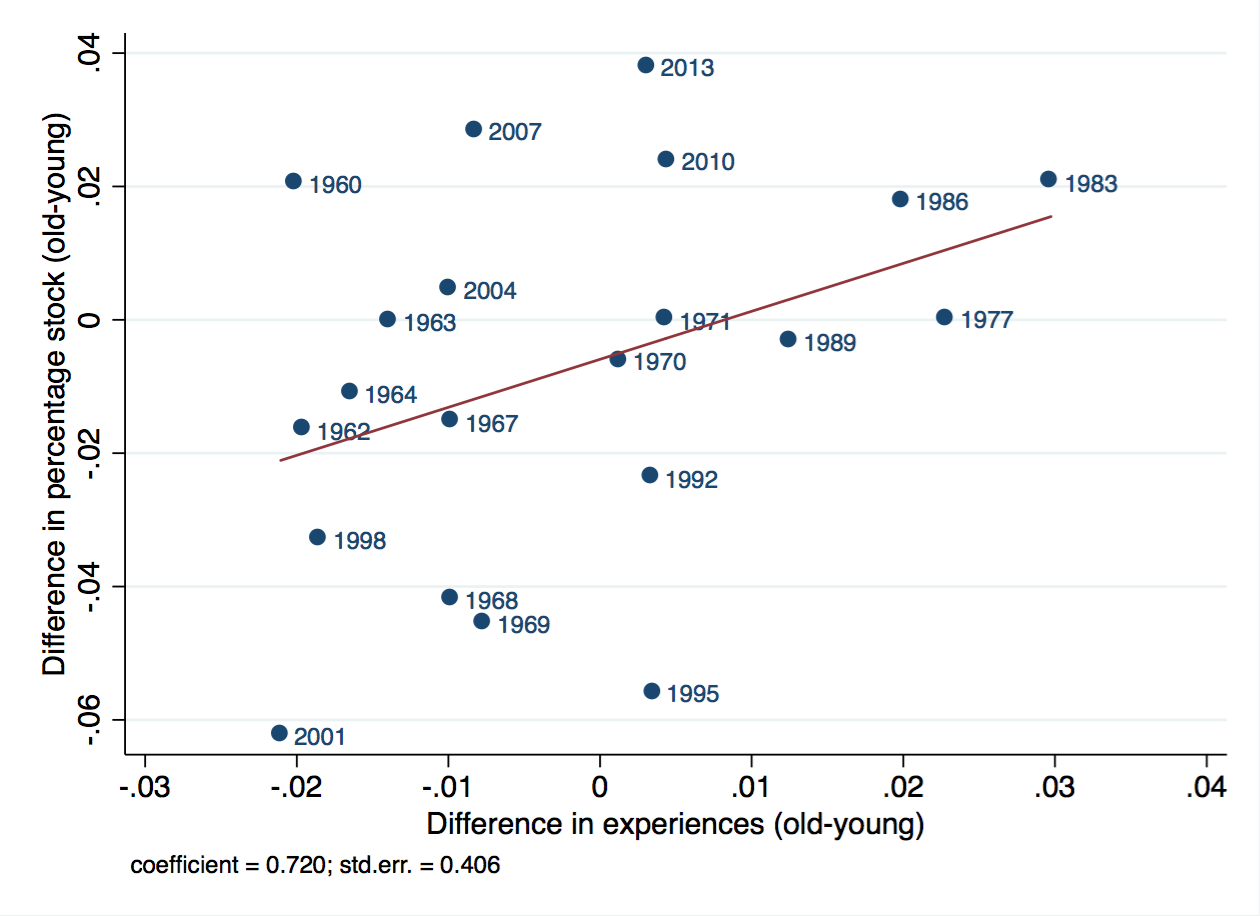}} %
	\subfigure[Stock-market participation  ($\lambda = 3$)]{\label{fig:Stock.c}\includegraphics[width=0.48\textwidth]{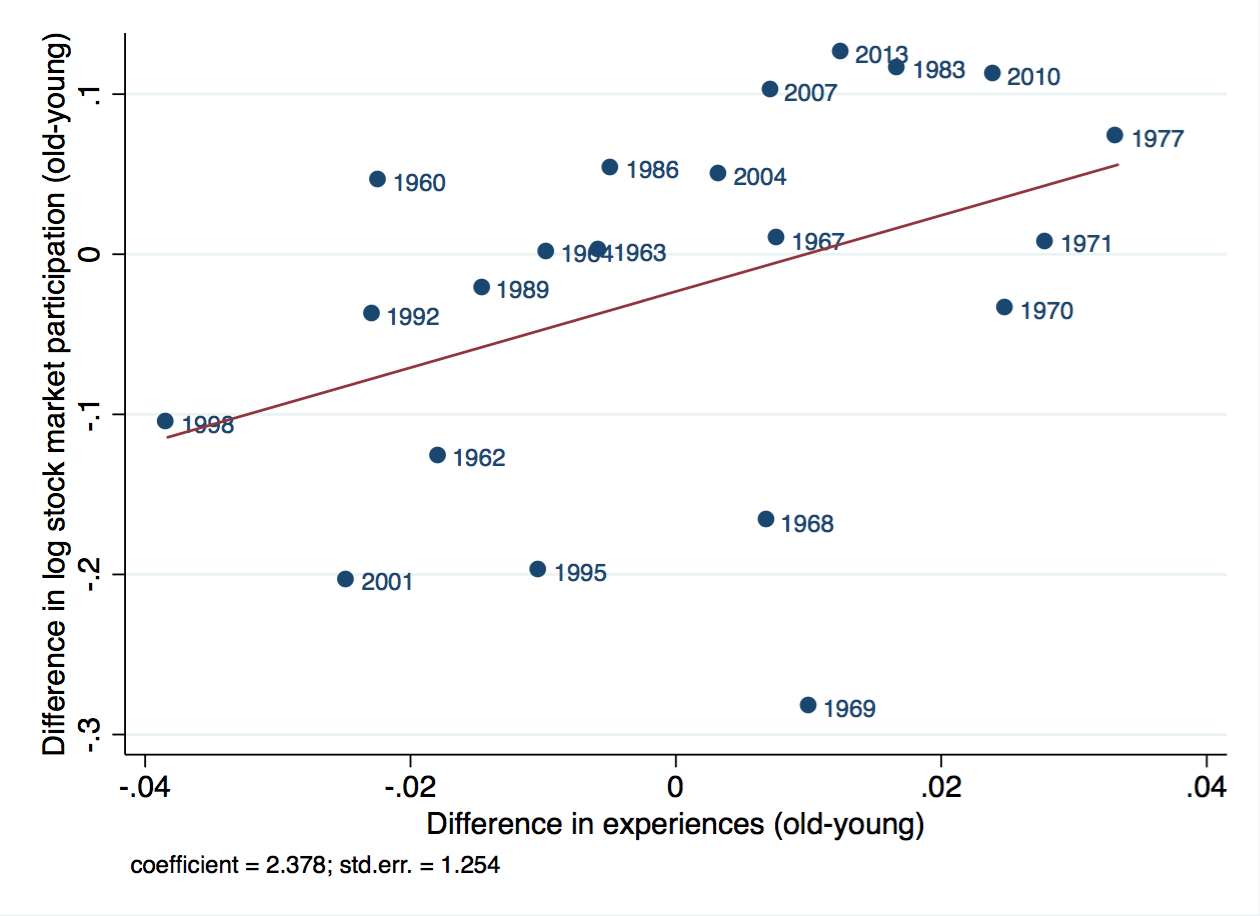}}
	\subfigure[Fraction invested in stock  ($\lambda = 3$)]{\label{fig:Stock.d}\includegraphics[width=0.48\textwidth]{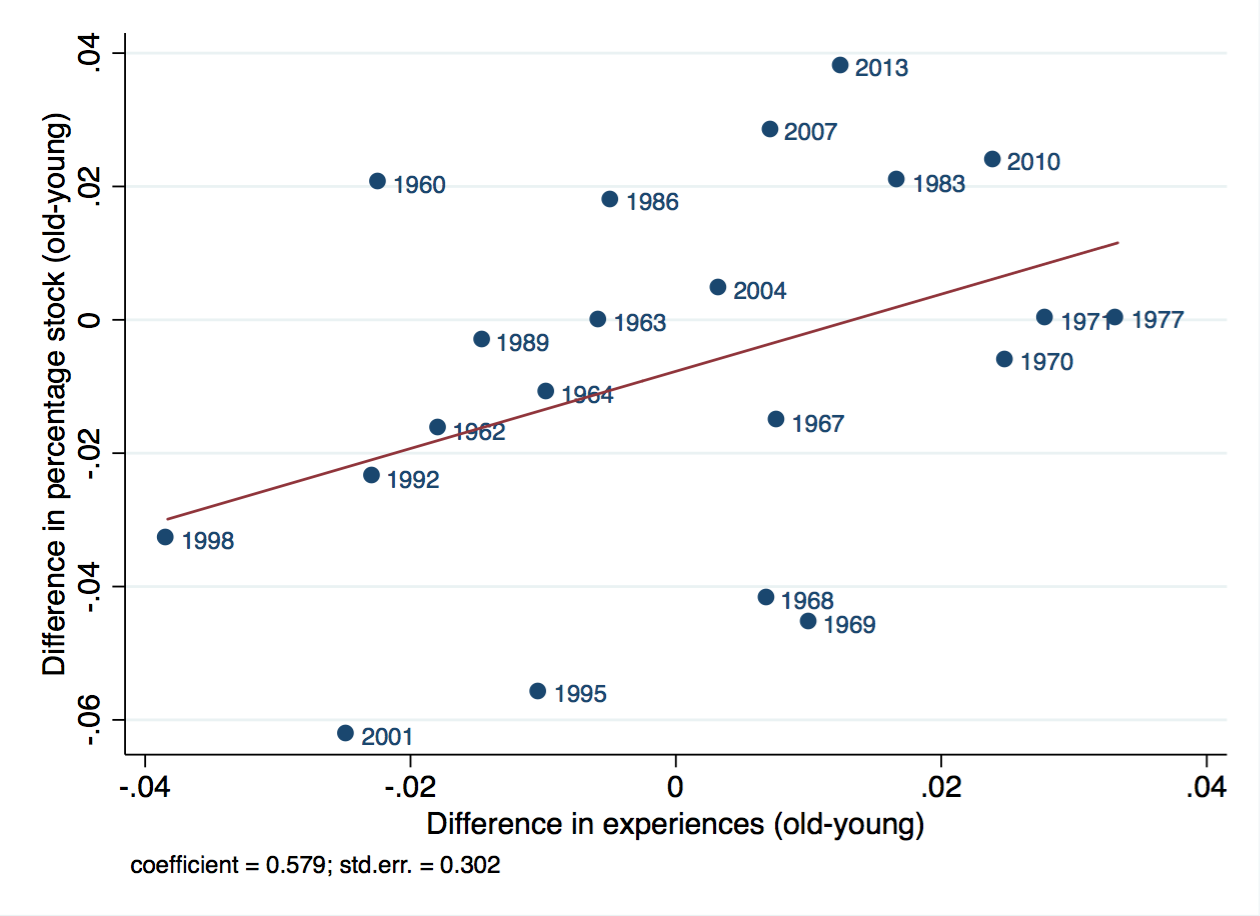}}
	\caption{Experienced Returns and Stock Holdings}
	\label{fig:performance_returns}
	\vspace{-0.2cm}
	\caption*{\textit{Notes}. \textit{Difference in experienced returns} is calculated as the lifetime average experienced returns of the S\&P500 Index as given on Robert Shiller's website, using declining weights with either $\lambda = 1$ or $\lambda = 3$ as in equation (\ref{eq:EBL-w}). \textit{Stock-market participation} is measured as the fraction of households in the respective age groups that hold at least \$1 of stock ownership, either as directly held stock or indirectly, e.\,g. via mutuals or retirement accounts. \textit{Fraction invested in stock} is the fraction of liquid assets stock-market participants invest in the stock market. We classify households whose head is above 60 years of age as ``old,'' and households whose head is below 40 years of age as ``young.''  Difference in stock holdings, the y-axis in graphs (a) and (c), is calculated as the difference between the logs of the fractions of stock holders among the old and among the young age group.  Percentage stock, the y-axis in graphs (b) and (d), is the difference in the fraction of liquid assets invested in stock.  The red line depicts the linear fit.}
\end{figure*}

\setcounter{subfigure}{0}
\begin{figure*}
	\centering     
	\subfigure[Stock-market participation  ($\lambda = 1$) ]{\label{fig:Stock.a}\includegraphics[width=0.48\textwidth]{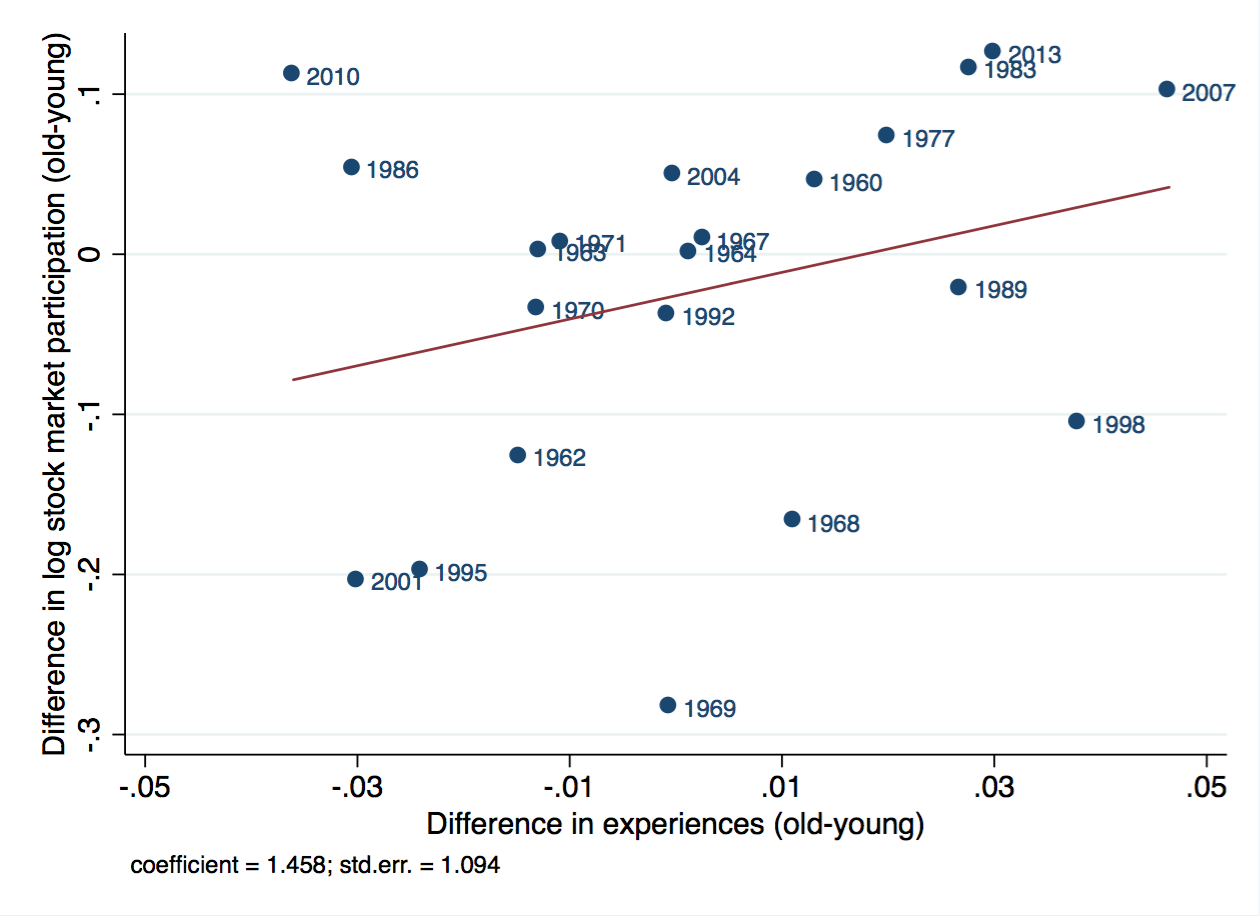}} 
	\subfigure[Fraction invested in stock ($\lambda = 1$)]{\label{fig:Stock.b}\includegraphics[width=0.48\textwidth]{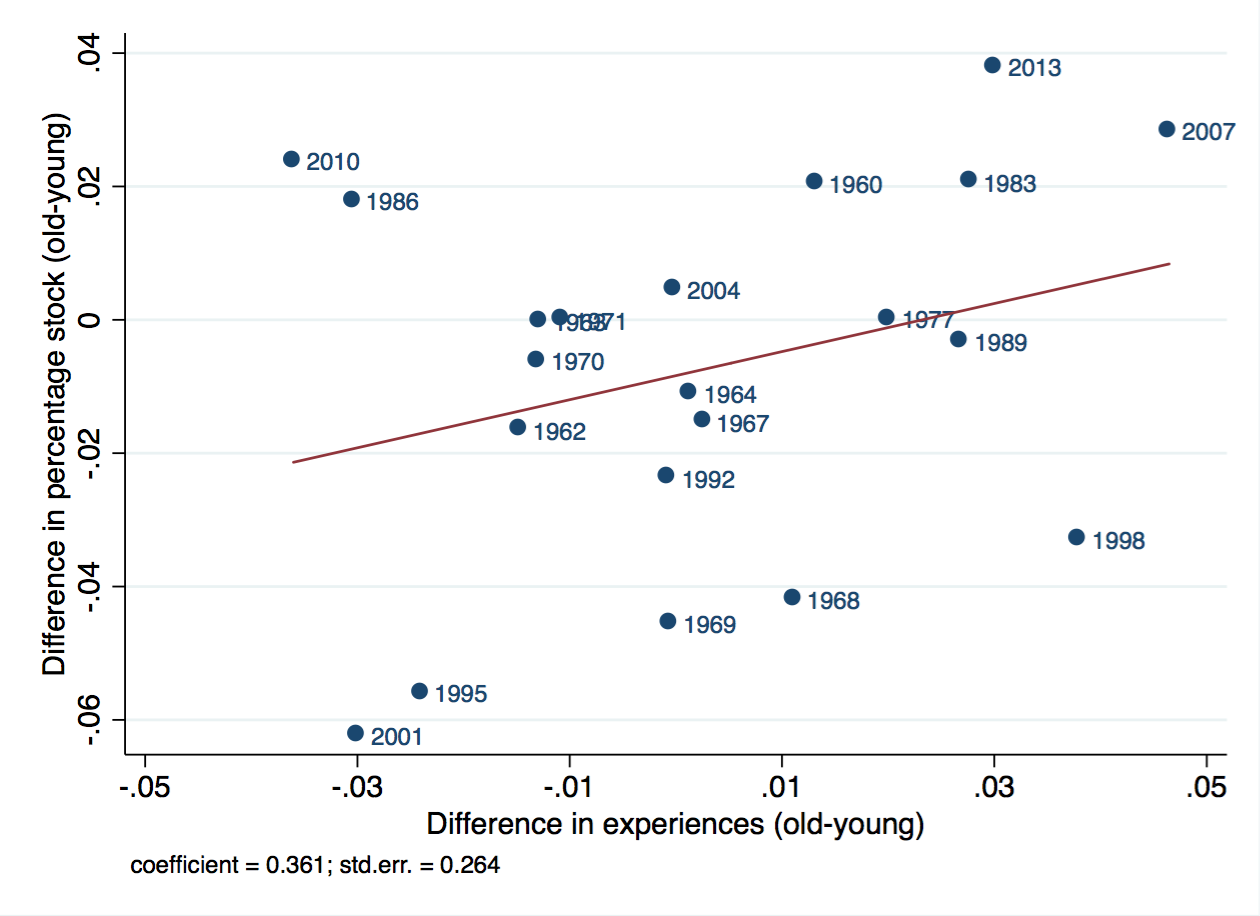}} %
	\subfigure[Stock-market participation  ($\lambda = 3$)]{\label{fig:Stock.c}\includegraphics[width=0.48\textwidth]{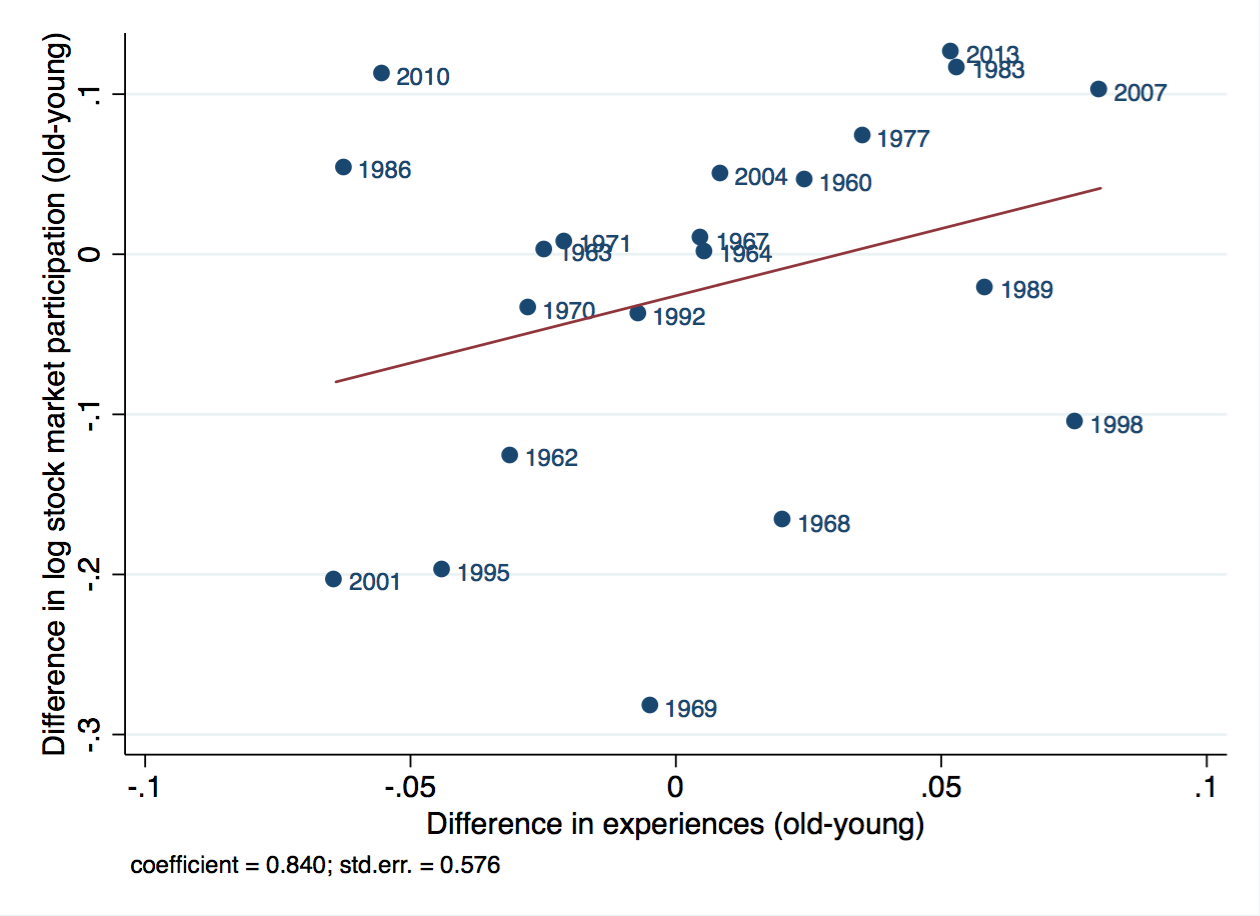}}
	\subfigure[Fraction invested in stock  ($\lambda = 3$)]{\label{fig:Stock.d}\includegraphics[width=0.48\textwidth]{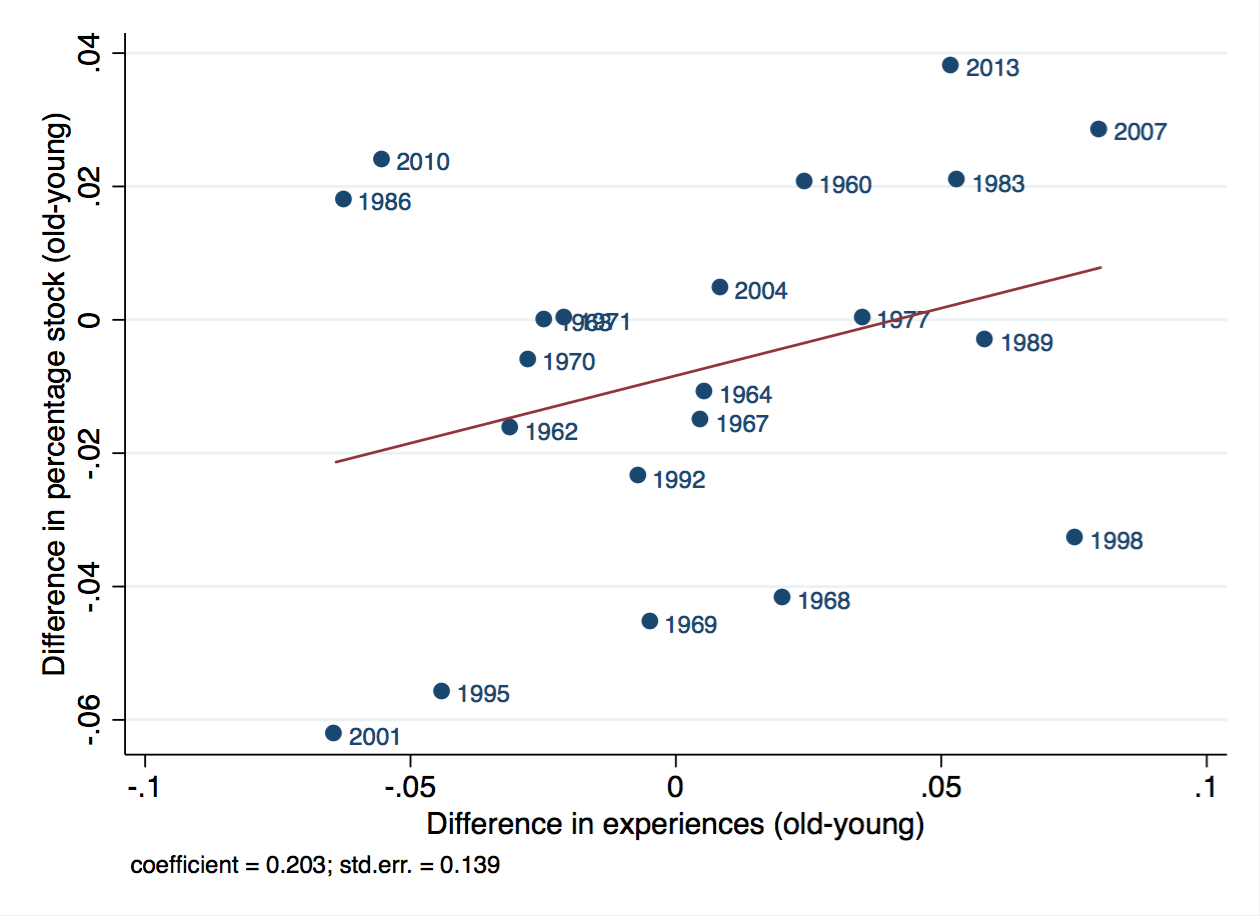}}
	\caption{Experienced Dividends and Stock Holdings}
	\label{fig:performance_dividends}
	\vspace{-0.2cm}
	\caption*{\textit{Notes}. \textit{Difference in experienced dividends} is calculated as the lifetime average experienced real dividends as given on Robert Shiller's website, using declining weights with either $\lambda = 1$ or $\lambda = 3$ as in equation (\ref{eq:EBL-w}). \textit{Stock-market participation} is measured as the fraction of households in the respective age groups that hold at least \$1 of stock ownership, either as directly held stock or indirectly, e.g. via mutuals or retirement accounts. \textit{Fraction invested in stock} is the fraction of liquid assets stock-market participants invest in the stock market. We classify households whose head is above 60 years of age as ``old,'' and households whose head is below 40 years of age as ``young.''  Difference in stock holdings, the y-axis in graphs (a) and (c), is calculated as the difference between the logs of the fractions of stock holders among the old and among the young age group.  Percentage stock, the y-axis in graphs (b) and (d), is the difference in the fraction of liquid assets invested in stock.  The red line depicts the linear fit.}
\end{figure*}

\setcounter{subfigure}{0}
\begin{figure*}
	\centering     
	\subfigure[Stock-market participation  ($\lambda = 1$) ]{\label{fig:Stock.a}\includegraphics[width=0.48\textwidth]{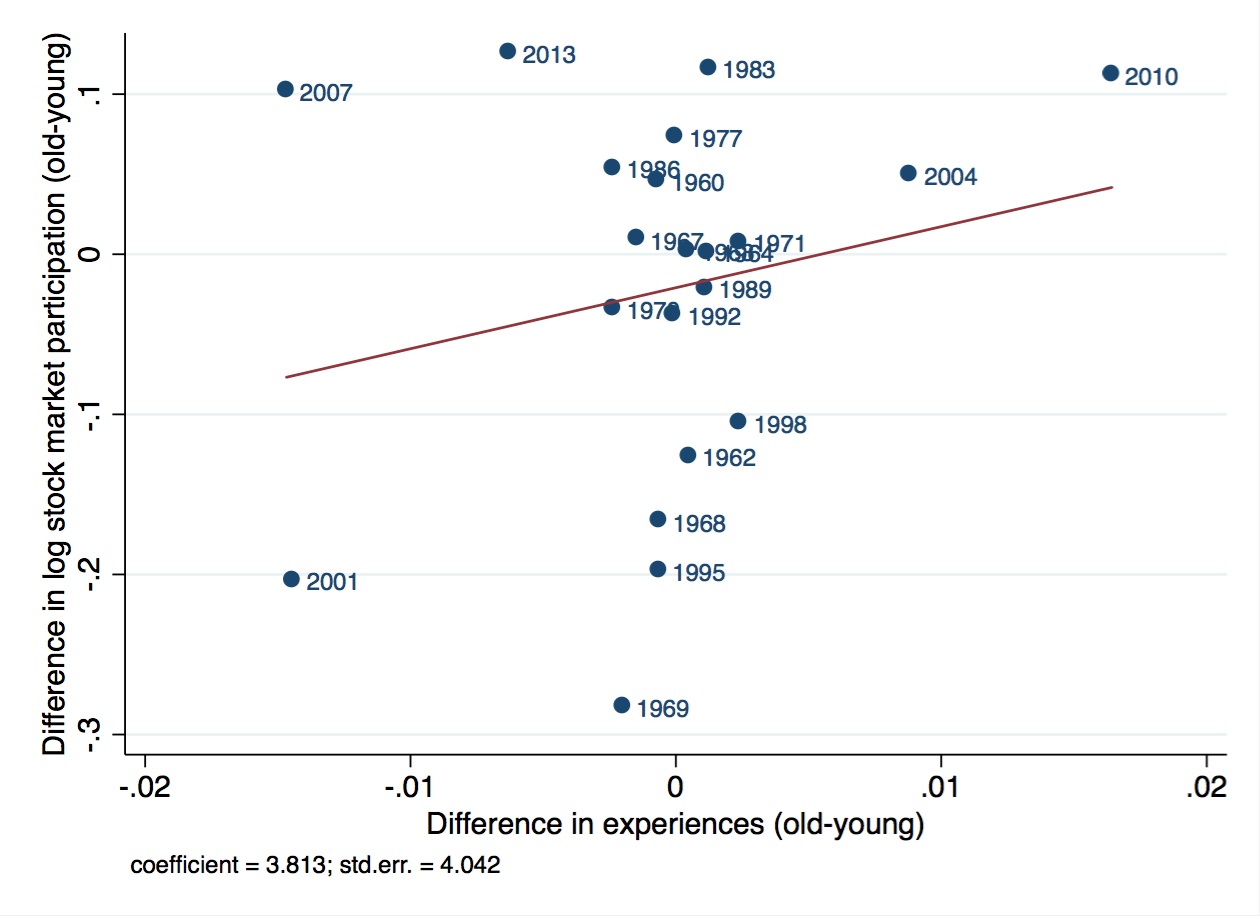}} 
	\subfigure[Fraction invested in stock ($\lambda = 1$)]{\label{fig:Stock.b}\includegraphics[width=0.48\textwidth]{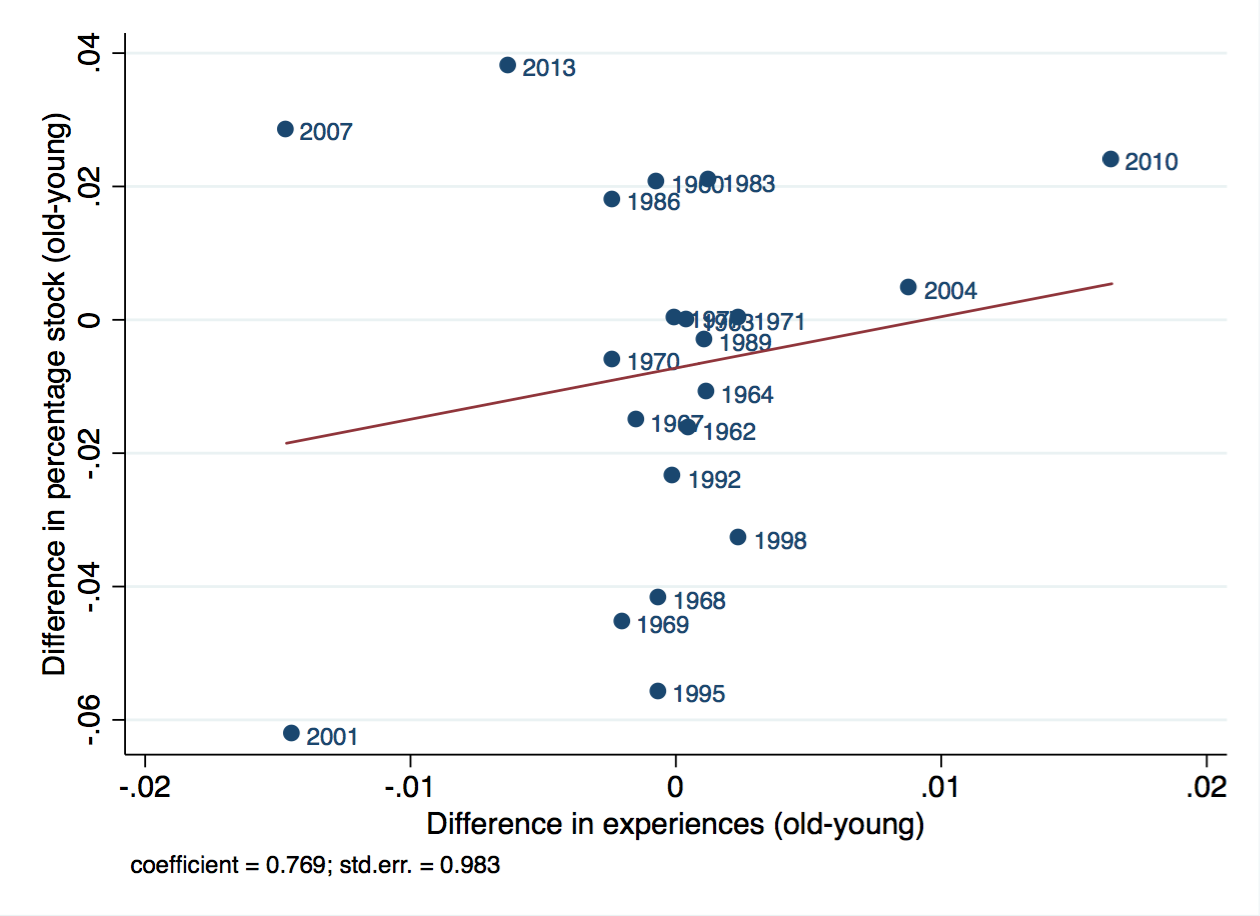}} %
	\subfigure[Stock-market participation  ($\lambda = 3$)]{\label{fig:Stock.c}\includegraphics[width=0.48\textwidth]{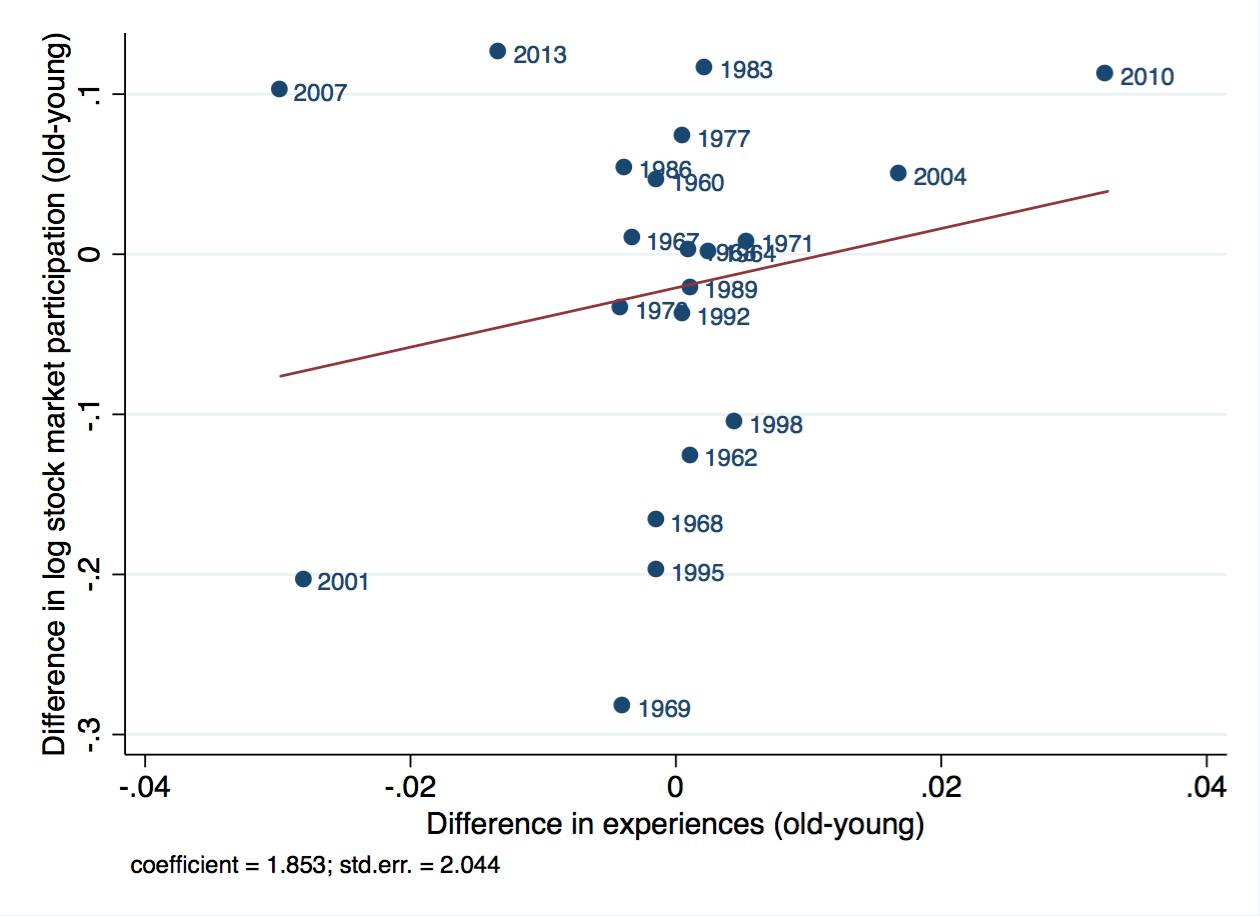}}
	\subfigure[Fraction invested in stock  ($\lambda = 3$)]{\label{fig:Stock.d}\includegraphics[width=0.48\textwidth]{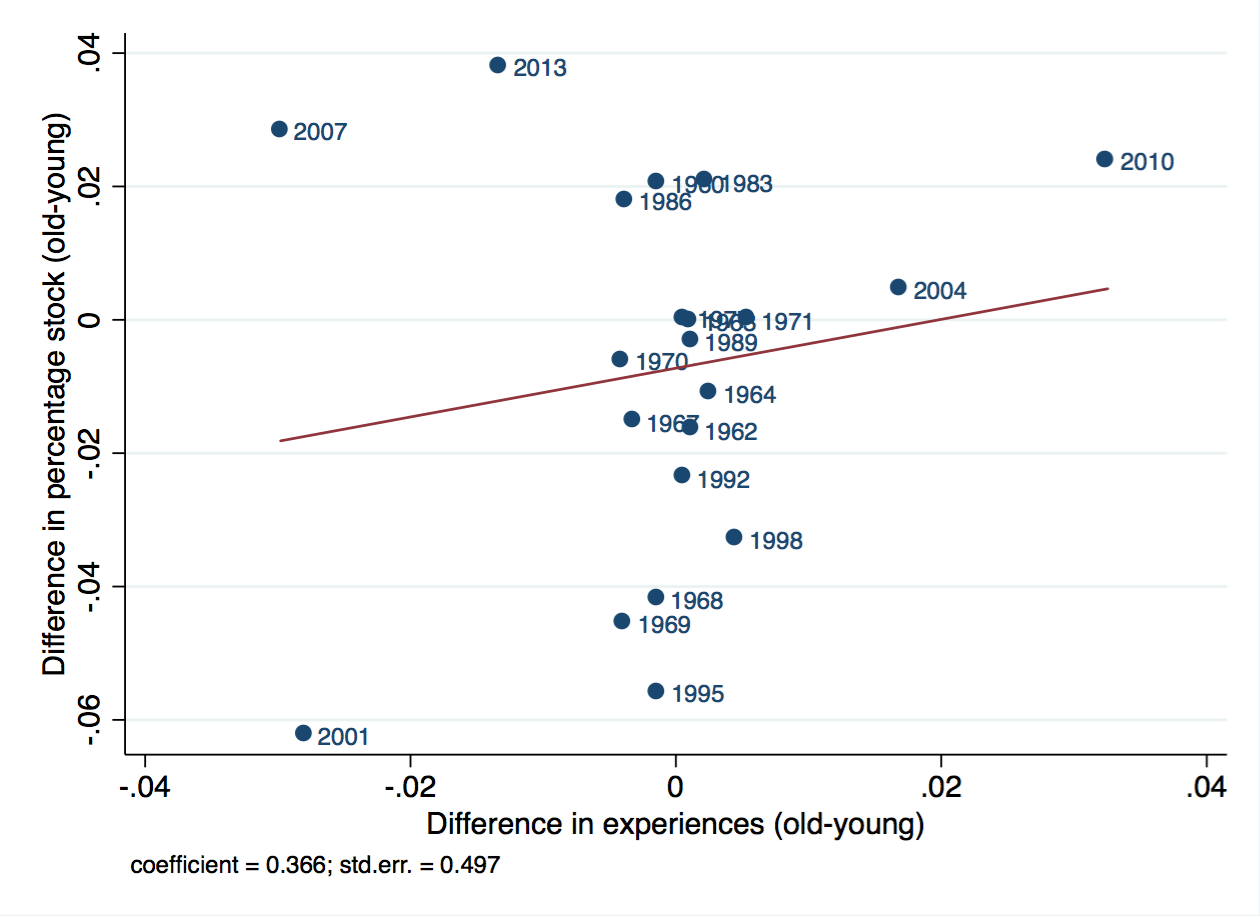}}
	\caption{Experienced Earnings and Stock Holdings}
	\label{fig:performance_earnings}
	\vspace{-0.2cm}
	\caption*{\textit{Notes}. \textit{Difference in experienced earnings} is calculated as the lifetime average experienced log real earnings as given on Robert Shiller's website, using declining weights with either $\lambda = 1$ or $\lambda = 3$ as in equation (\ref{eq:EBL-w}). \textit{Stock-market participation} is measured as the fraction of households in the respective age groups that hold at least \$1 of stock ownership, either as directly held stock or indirectly, e.g. via mutuals or retirement accounts. \textit{Fraction invested in stock} is the fraction of liquid assets stock-market participants invest in the stock market. We classify households whose head is above 60 years of age as ``old,'' and households whose head is below 40 years of age as ``young.''  Difference in stock holdings, the y-axis in graphs (a) and (c), is calculated as the difference between the logs of the fractions of stock holders among the old and among the young age group.  Percentage stock, the y-axis in graphs (b) and (d), is the difference in the fraction of liquid assets invested in stock.  The red line depicts the linear fit.}
\end{figure*}

\setcounter{subfigure}{0}
\begin{figure*}
	\centering     
	\subfigure[Stock-market participation  ($\lambda = 1$) ]{\label{fig:Stock.a}\includegraphics[width=0.48\textwidth]{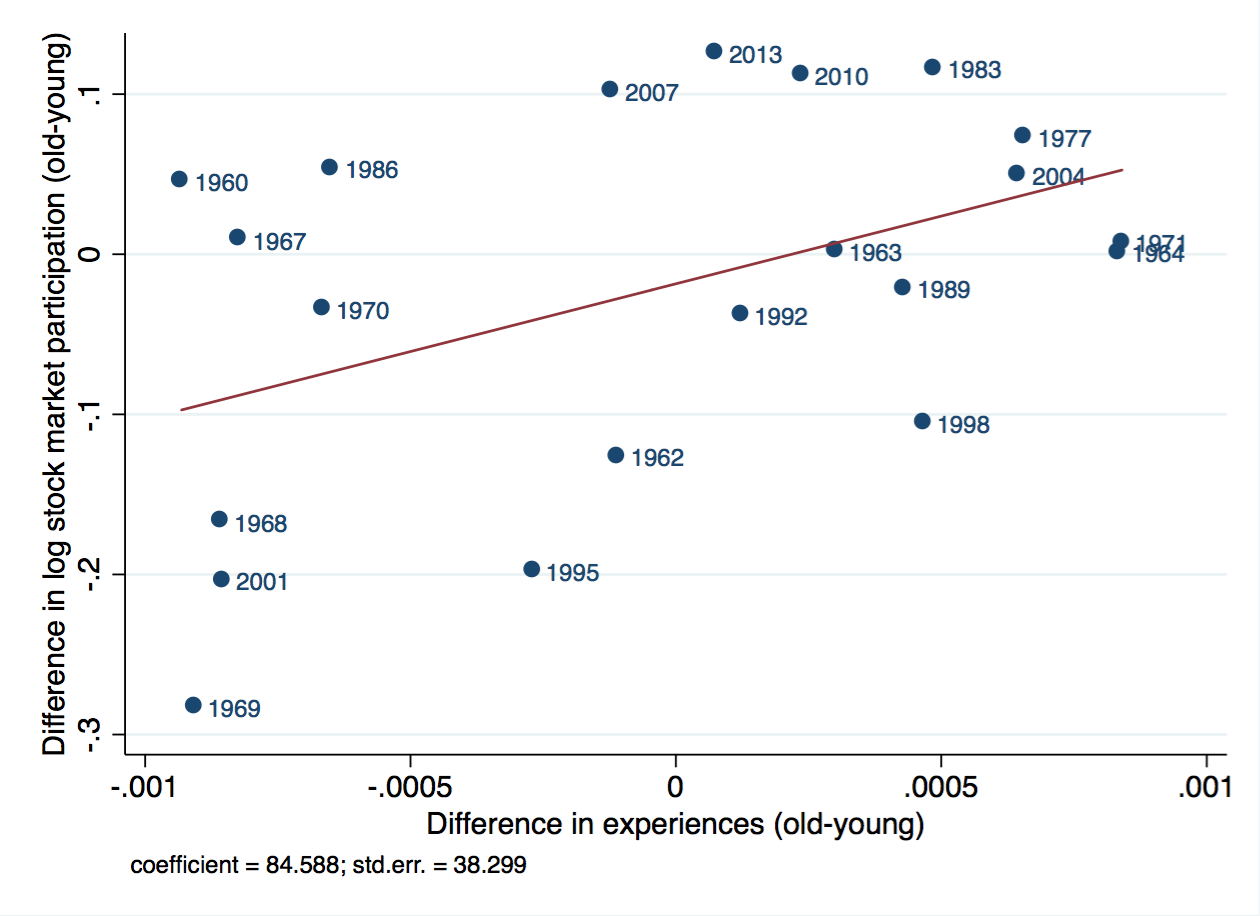}} 
	\subfigure[Fraction invested in stock ($\lambda = 1$)]{\label{fig:Stock.b}\includegraphics[width=0.48\textwidth]{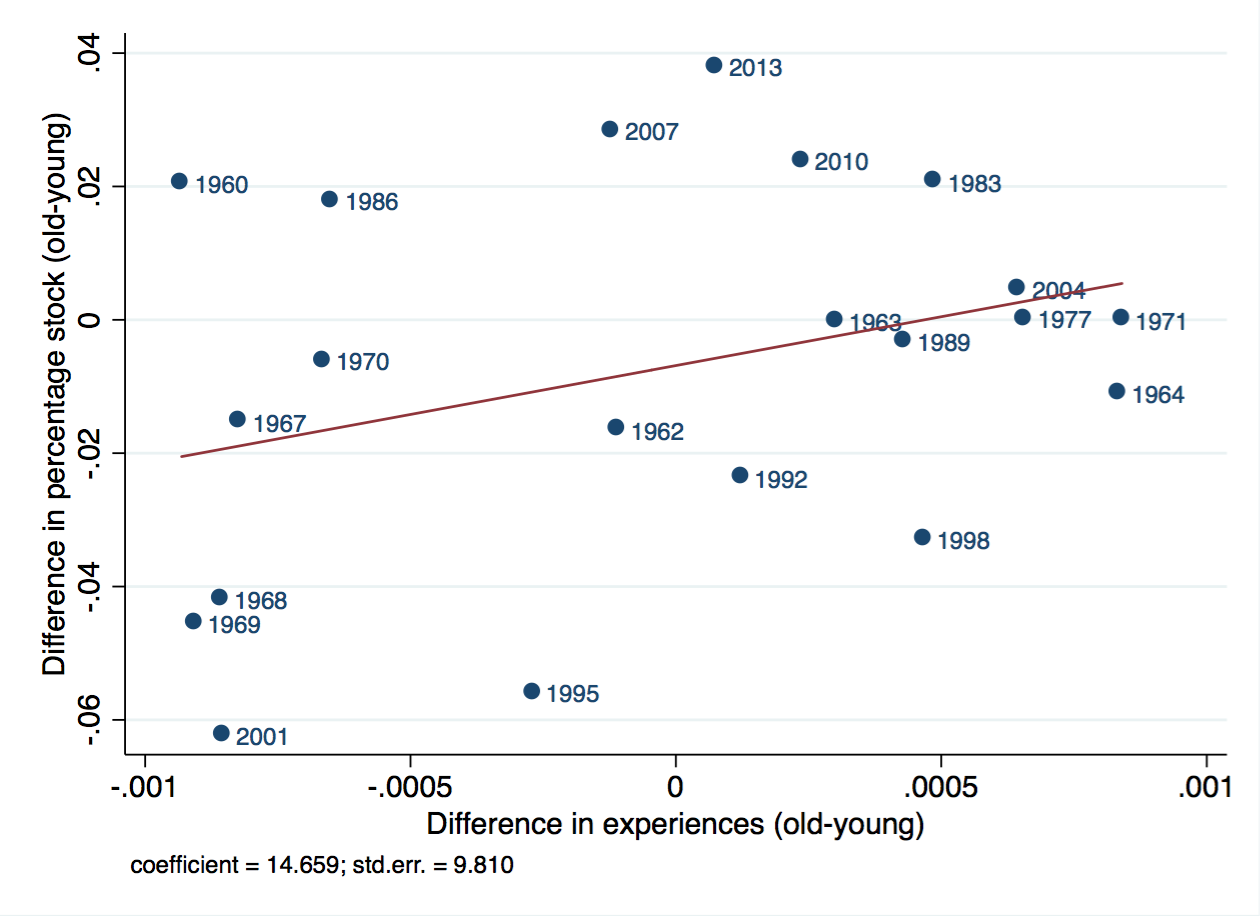}} %
	\subfigure[Stock-market participation  ($\lambda = 3$)]{\label{fig:Stock.c}\includegraphics[width=0.48\textwidth]{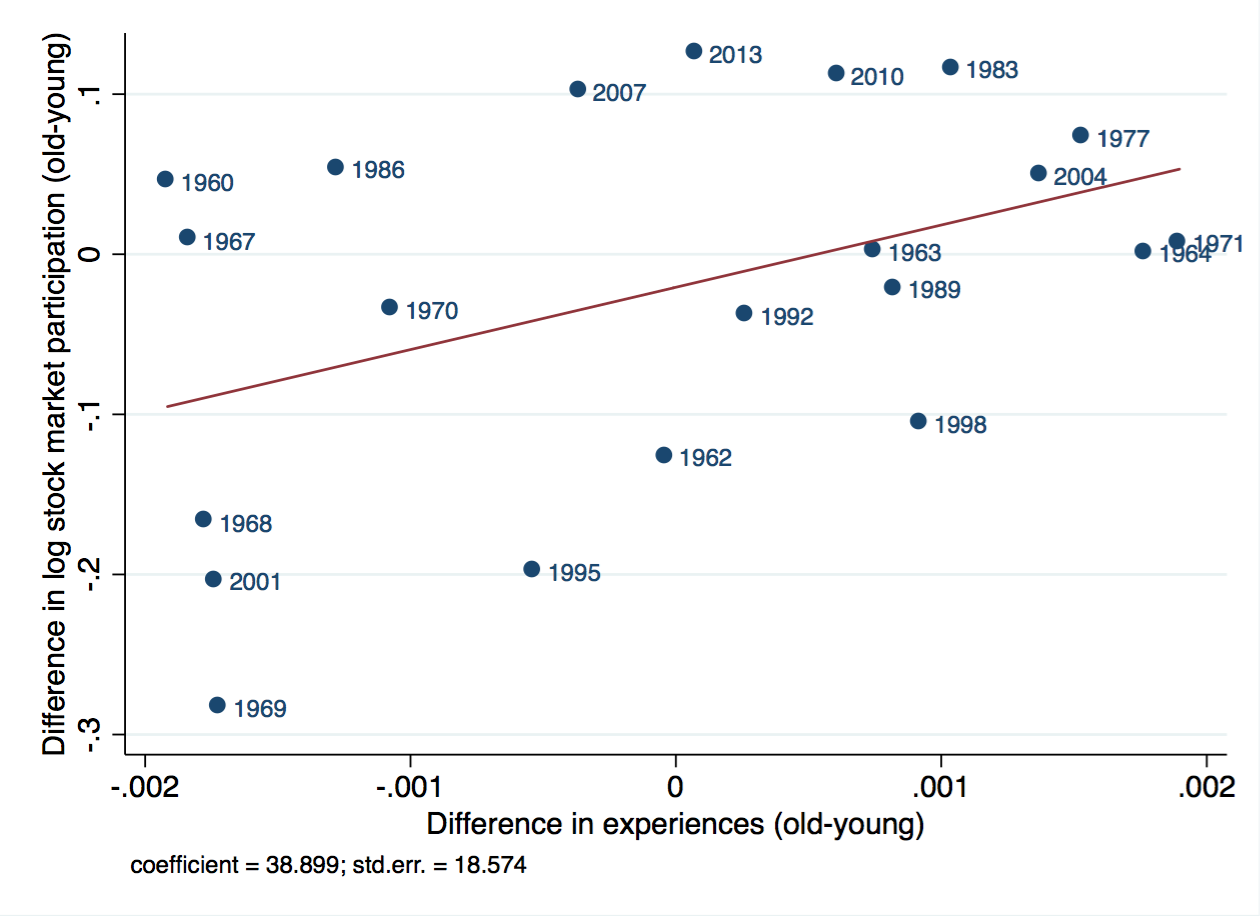}}
	\subfigure[Fraction invested in stock  ($\lambda = 3$)]{\label{fig:Stock.d}\includegraphics[width=0.48\textwidth]{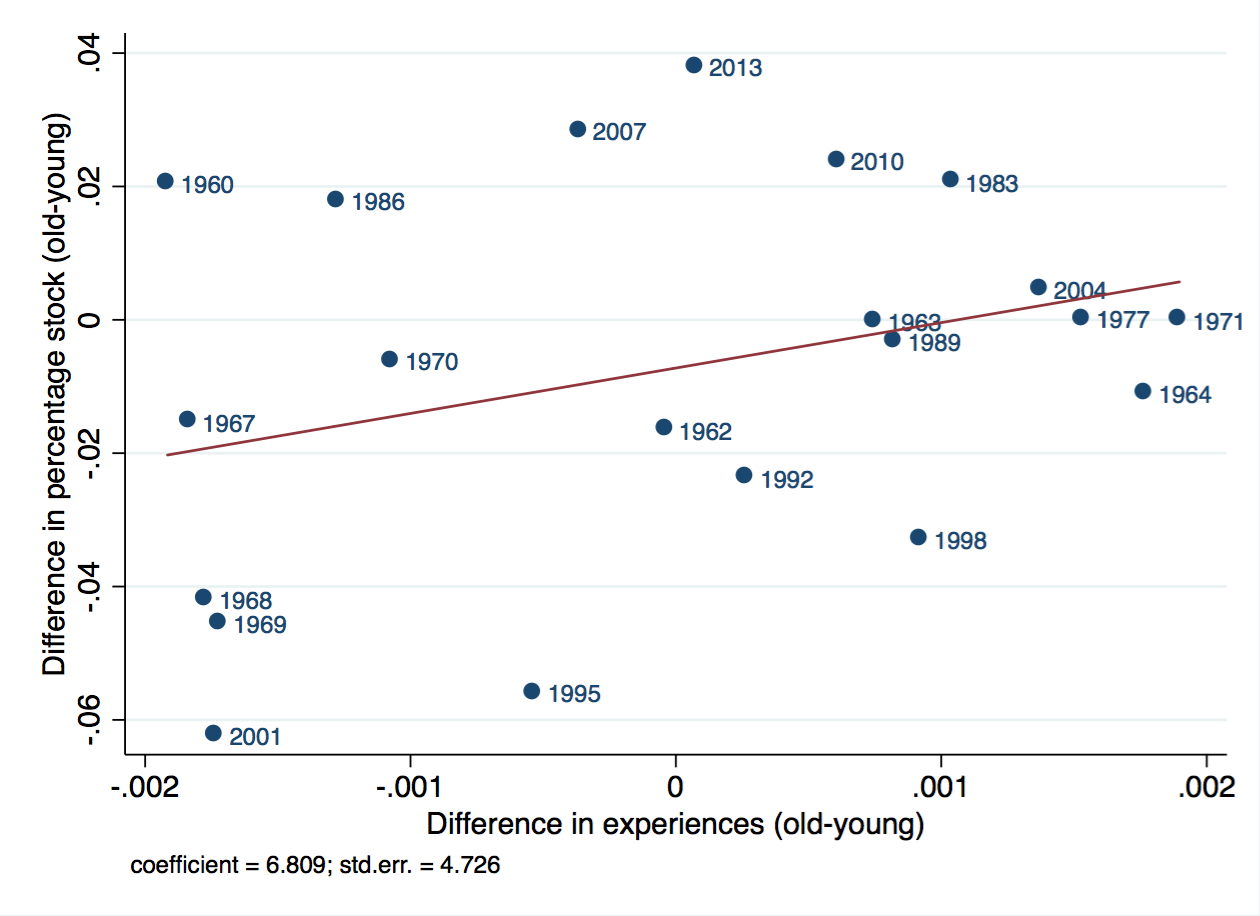}}
	\caption{Experienced Log GDP and Stock Holdings}
	\label{fig:performance_GDP}
	\vspace{-0.2cm}
	\caption*{\textit{Notes}. \textit{Difference in experienced GDP} is calculated as the lifetime average experienced log real GDP, using declining weights with either $\lambda = 1$ or $\lambda = 3$ as in equation (\ref{eq:EBL-w}). \textit{Stock-market participation} is measured as the fraction of households in the respective age groups that hold at least \$1 of stock ownership, either as directly held stock or indirectly, e.g. via mutuals or retirement accounts. \textit{Fraction invested in stock} is the fraction of liquid assets stock-market participants invest in the stock market. We classify households whose head is above 60 years of age as ``old,'' and households whose head is below 40 years of age as ``young.''  Difference in stock holdings, the y-axis in graphs (a) and (c), is calculated as the difference between the logs of the fractions of stock holders among the old and among the young age group.  Percentage stock, the y-axis in graphs (b) and (d), is the difference in the fraction of liquid assets invested in stock.  The red line depicts the linear fit.}
\end{figure*}

The results for all four performance measures and both for the extensive and intensive margin are in line with the predictions of our model. Starting from experienced returns with $\lambda=1$ in panel (a) of Figure 	\ref{fig:performance_returns}, we see that the older age-group is more likely to hold stock, compared to the younger age-group, when they have experienced higher stock-market returns in their lives. The opposite holds when the returns experienced by the younger generations are higher than those of the older generations. The slope coefficient of the linear line of fit is significant at 5\%.  The steepness of the weighting function, and hence the extent of imposed weight on recent data points, makes little difference, as the comparison with graph (b) for $\lambda=3$ reveals.

The analysis of the intensive margin of stock-market investment yields the same conclusion. Both graph (c) and graph (d) indicate that older generations invest a higher share of the their liquid assets in stock, compared to the younger generations, when their experienced returns have been higher than those of the younger age-group over their respective life-spans so far; and vice versa when they have experienced lower returns than the younger cohorts. Here, the slope coefficient is significant at 10\%. 

Figures \ref{fig:performance_dividends} to \ref{fig:performance_GDP} present the corresponding results for experienced dividends, earnings, and GDP. For all measures, we observe a positive relation of differences in experienced performance and stock investments between the young and the old. 
The fact that we obtain very similar findings for a wide array of performance measures 
lends support to the link between our theoretical model and the empirical facts, and ameliorates concerns about dividends 
not translating one-to-one into an empirical performance measure.

\medskip
\textbf{Trade volume.} We now turn to the second prediction, which relates trade volume to the dispersion of changes in disagreement among investors. We calculate changes in the level of disagreement as the cross-cohort standard deviation of the change in experienced performance between the current year and the previous year. We weight the cohorts by their sizes when computing the standard deviation.\footnote{For this, we obtain data on U.S. population by age between 1985 and 2015 from US Census Bureau.}

\setcounter{subfigure}{0}
\begin{figure}
	\centering     
	\subfigure[Linear weights ($\lambda = 1$) ]{\label{fig:Stock.a}\includegraphics[width=0.45\textwidth]{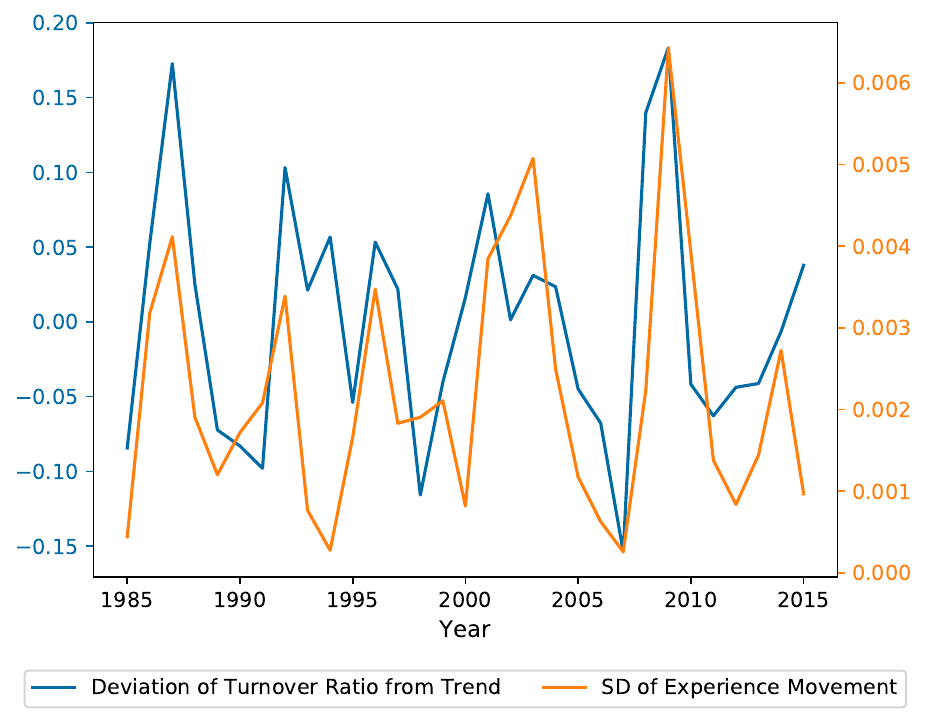}} 
	\subfigure[Superlinear weights ($\lambda = 3$)]{\label{fig:Stock.b}\includegraphics[width=0.45\textwidth]{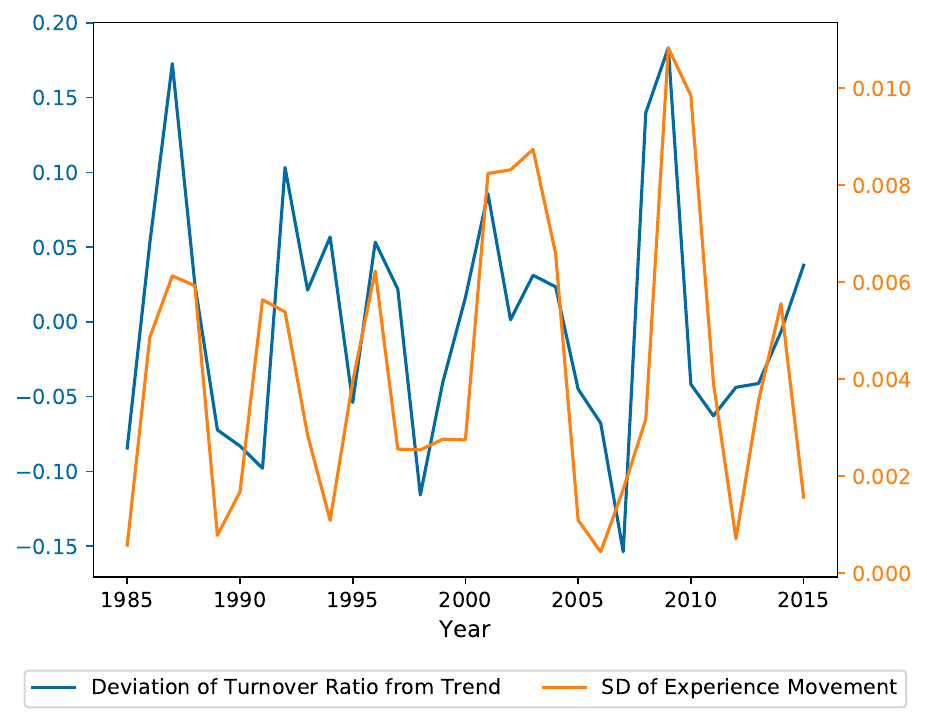}}	
	\caption{Trading Volume and Standard Deviation of Changes in Experienced Returns}
	\vspace{-0.2cm}
	\caption*{\textit{Notes}.
		Trading volume, shown in (dark) blue, is calculated as the market-capitalization weighted average monthly turnover ratio (shares traded divided by shares outstanding) across all firms in January and in December of the preceding year. We log, linearly detrend, and CF-filter the yearly variable to obtain the deviation of turnover ratio from the trend. Returns are defined as inflation-adjusted change in price from the prior year divided by inflation-adjusted price in the prior year. Returns are linearly detrended and CF filtered. After creating the experience variables for returns, we take the change of the experience variable for individuals of a given age from the experience of 
		those individuals in the prior year. We then calculate the current-year age-cohort population weighted standard deviation of this difference variable for each year as our measure of experience-based disagreement.
	}
	\label{F:turnover}
	\vspace{0.3cm}
\end{figure}
As a measure of abnormal trade volume, we calculate the deviation of the turnover ratio from its trend. Following prior literature (\citeN{statman2006investor}, \citeN{lo2000trading}), we first compute firm-level turnover ratio, i.\,e., the number of shares traded over the number of shares outstanding, on a monthly basis. We require that firms be listed on the NYSE or AMEX. We exclude NASDAQ-listed firms because the dealer market has volume measurement conventions that differ from exchange-traded securities (\citeN{atkins1997market}, \citeN{statman2006investor}). Then, we aggregate these numbers into a market-wide turnover ratio, weighting firms by their market capitalization.\footnote{This measure is equivalent to dollar turnover ratio, i.\,e., the ratio of the dollar value of all shares traded and the dollar value of the market.} Since the turnover ratio is non-stationary, we proceed in the same way as above and apply the \citeN{christiano2003band} to the logarithm of the turnover ratio series, 
so that we keep frequencies between 2 and 8 years.
We examine the co-movement between the aforementioned measure of disagreement, i.\,e., the standard deviation of the change in experienced stock returns, and the above measures of (abnormal) trade volume. 

Figure \ref{F:turnover} displays the trade volume in dark (blue) color, and changes in the experience-based disagreement about returns between cohorts in light (orange) color over time. Graph (a) shows the results when we apply linear weights for the calculation of experienced returns, and graph (b) displays the case with super-linear weights ($\lambda = 3$). Since we work with annual data for our disagreement variable, we choose the average of the turnover ratio in December of a given year and in January of the following year as our measure for trading volume of the given year. That is, Figure \ref{F:turnover} compares the variation (standard deviation) in changes in experienced returns in a given year to trading volume in December of that year and January of the following year. We choose 1985 as the starting year for this analysis, since individual investors were trading substantially less frequently when trading cost were significantly higher up to the mid-1980s, making it less likely that (individual) investors trade repeatedly based on experienced performance.
\begin{table}[t]
	\centering
	\caption{Trading Volume and Changes in Experience-Based Disagreement}

	\begin{tabular}{llcccc} \hline
		\multicolumn{2}{l}{Experiences constructed using:} & Returns & Dividends & Log Earnings & Log GDP \\ \hline
		$\lambda = 1$ & Correlation & 0.5976 & 0.1788 & 0.3225 & 0.1780 \\
		& ($p$-value) & (0.0004) & (0.3358) & (0.0768) & (0.3379) \\
		$\lambda = 3$ & Correlation & 0.4904 & 0.1489 & 0.3099 & 0.1886 \\
		& ($p$-value) & (0.0051) & (0.4240) & (0.0898) & (0.3096) \\ \hline
	\end{tabular}
	\vspace{0.1cm}
	\caption*{\textit{Notes}.
		The table displays the pairwise correlations (and corresponding $p$-values in parentheses) of trading volume and eight measures of the change in experience-based disagreement. Trading volume is calculated using the market-capitalization weighted average turnover ratio (shares traded divided by shares outstanding) across all firms for January of the current year and December of the preceding year (averaged). We log, linearly detrend, and CF-filter the yearly variable.
		Experience-based disagreement is calculated separately for returns, dividends, earnings, and GDP, where returns are defined as the inflation-adjusted change in price from the prior year divided by inflation-adjusted price in the prior year, and dividends, earnings, and GDP are inflation adjusted. Returns, dividends, log earnings, and log GDP are linearly detrended and CF-filtered, and experience is calculated both with linear weights ($\lambda$=1) and with superlinear weights ($\lambda=3$).
		For each meaasure, we calculate the change in experience for individuals of a given age from the experience of the same individuals in the prior year. We then calculate the current-year age-cohort population-weighted standard deviation of the changes in experiences. 
	}
	\label{T:corr}
\end{table}
Consistent with the predictions of our model, we observe a clear co-movement between disagreement among cohorts and trading volume. Table \ref{T:corr} reveals that the co-movement is statistically significant at 1\%. 
The table presents the correlation between trading volume and our measures of changes in return disagreement, as well as the correlations when disagreement is measured using our alternative performance measures, i.\,e., using again dividends, earnings, or GDP. In each case, the correlation coefficient is again positive, albeit (marginally) significant only for changes in disagreement in experienced earnings.

The relationship between trade volume and changes in disagreement in experience-based beliefs about future returns in Table \ref{T:corr} and Figure \ref{F:turnover}, as well as the directionally similar correlations with the disagreement about other proxies for returns, corroborate the empirical relevance of our model for a better understanding of investor behavior. The pattern is consistent with experience-based learning and suggests that our novel explanation is worth considering. Moreover, as long as we assume that people trade based on their beliefs, it is unlikely that our channel is spurious. At the same time, 
other variables might also affect both the change in beliefs and the fluctuations in the trade volume. For example, if fluctuation in trade volume is caused both by variability in change in the beliefs (our model), and by another business-cycle macro variable, and both factors are positively correlated, we might still obtain a graph similar to Figure 9.
The claim in this section is not that there are no such factors, nor even that we can attribute most or all of the correlation depicted in Figure 9 to belief-based learning. Instead, the conclusion is that
all empirical findings in this section are consistent with experience-based learning
and suggest experience-based learning as a novel and relevant factor that helps explain these empirical regularities jointly.

For a more detailed and careful empirical analysis it will be useful to analyze long-term individual-level panel data, which allows to link cumulative experiences and new experiences to trading decisions in the corresponding year.

\section{Conclusion} \label{sec: Conclusions}

\indent

In this paper, we have proposed 
an
OLG equilibrium framework to study the effect of personal experiences on market dynamics. We incorporate the two main empirical features of experience effects,  the over-weighing of lifetime experiences and recency bias, into the belief formation process of agents. We show that experience-based learning not only generates several well-known asset pricing puzzles, that have been observed in the data, but it also produces new testable predictions about the relation between demographics, prices trading behavior, and the cross-section of asset holdings, which are in line with the data. We highlight two channels through which shocks have long-lasting effects on economic outcomes. The first is the belief formation process: all agents update their beliefs about the future after experiencing a given shock.  The second is the cross-sectional heterogeneity in the population: different experiences generate belief heterogeneity. We illustrate how the demographic composition of an economy can have important implications for the extent to which prices depend on past dividends. We consider this paper to be a first step into the exploration of the role of demographics in understanding market dynamics.



\newpage 

\singlespacing
\bibliographystyle{junpan}
\bibliography{MacroFinance}


\pagebreak
\begin{appendices}
	\setcounter{figure}{0}

	\section{Proofs for Results in Section \ref{sec:Baseline}}\label{app:Beliefs}
	
	{\small
	\begin{proof}[Proof of Lemma \ref{l: SingleCrossing}]
		Let $\Delta(k) \equiv w(k,\lambda, age) - w(k,\lambda, age')$ for all $k \in \{ 0,...,age\}$. We need to show that $\exists k_0\in \{0,...,age'\}$ such that $\Delta(k)<0$ for all $k\leq k_0$, and $\Delta(k)\geq 0$ for all $k>k_0$, with the last inequality holding strictly for some $k$.
		
		For $k>age'$, $\Delta(k)>0$ since $w(k,\lambda,age') \equiv 0 $, and hence $\Delta(k) = w(k,\lambda, age)>0$, for all $k \in \{age'+1,...,age\}$.
		
		For $k\leq age'$, we note that $
		\Delta(k) > 0 \iff Q(k) := \frac{w(k,\lambda,age)}{w(k, \lambda, age')} > 1$. Hence, it remains to be shown that $\exists k_0\in \{0,...,age'\}$ such that $Q(k)<1$ for all $k\leq k_0$, and $Q(k)\geq 1$ for all $k>k_0$. 	
		Since the normalizing constants used in the weights $w(k,\lambda,age)$ are independent of $k$ (see the definition in (\ref{eq:EBL-w})), we absorb them in a constant $c \in \mathbb{R^{+}}$ and rewrite
		\begin{equation}
		Q(k) = c \cdot \frac{ (age+1-k)^\lambda}{(age'+1-k)^\lambda} = c \cdot \Big[ \frac{ age+1-k}{age'+1-k} \Big]^\lambda = c \cdot \alpha(k)^\lambda\,~\forall k \in \{0,...,age'\}.
		\end{equation}
		
		The function $x\mapsto \alpha(x) = \frac{age+1-x}{age'+1-x}$ has derivative $\alpha'(x) = \frac{age - age'}{(age'+1-x)^2} > 0$ for $x \in [0,age'+1)$, and hence $Q(\cdot)$ is strictly increasing over $\{0,...,age'\}$.
		Thus, to complete the proof, we only have to show that $Q(k)<1$ or, equivalently, $\Delta(k)<0$ for some $k \in \{0,...,age'\}$. We know that $\sum_{k=0}^{age} \Delta(k) = 0$ because $\sum_{k=0}^{age} w(k,\lambda, age) = \sum_{k=0}^{age'} w(k,\lambda,age')=1$, and we also know that  
		$\sum_{k=age'+1}^{age} \Delta(k) > 0$ since $\Delta(k) = w(k,\lambda, age)>0$ for all $k \in \{age'+1,...,age\}$. Hence, it must be that $\Delta(k)<0$ for some $k<age'$.
	\end{proof}
}
	
	\section{Proofs for Results in  Section \ref{sec:results}}\label{app:results}
	
	
	Proposition \ref{pro: demandsstatic}  directly follows from the following Lemma.
	
	\begin{lemma}
		\label{l: static}
		Let $z \sim N(\mu,\sigma^{2})$, then for any $a > 0$, 
		\begin{align*}
		x^{\ast} = \arg\max_{x} E[- \exp \{ - a x z \} ] =&\frac{\mu} {a \sigma^{2}} 
		\end{align*}
		and
		\begin{align*}
		\max_{x} E[- \exp \{ - a x z \} ] =& - \exp \left\lbrace -  \frac{1}{2} ( \sigma a  x^{\ast} )^{2}   \right\rbrace  =  - \exp\left( - \frac{1}{2}\frac{ \mu^{2} }{ \sigma^2 }  \right).
		\end{align*}
	\end{lemma}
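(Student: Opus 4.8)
My plan is to reduce the optimization to a one-dimensional calculus problem by first computing the expectation in closed form via the Gaussian moment generating function. For $z\sim N(\mu,\sigma^{2})$ and any real $t$, the standard identity $E[\exp(tz)] = \exp\left(t\mu + \tfrac{1}{2}t^{2}\sigma^{2}\right)$ holds. First I would apply this with $t = -ax$, which gives
\begin{equation*}
E\left[-\exp\{-axz\}\right] = -\exp\left(-ax\mu + \tfrac{1}{2}a^{2}x^{2}\sigma^{2}\right).
\end{equation*}
This converts the problem into maximizing a scalar function of $x$ whose dependence on $x$ lives entirely in the quadratic exponent. I would derive the MGF identity either by direct integration (completing the square in the Gaussian density) or simply invoke it as the well-known MGF of a normal.

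The next step is to exploit monotonicity: since $y \mapsto -\exp(y)$ is strictly increasing, maximizing $-\exp(g(x))$ is equivalent to \emph{minimizing} the exponent $g(x) \equiv -ax\mu + \tfrac{1}{2}a^{2}x^{2}\sigma^{2}$. I would set $g'(x) = -a\mu + a^{2}x\sigma^{2} = 0$ to obtain the first-order condition, solving for $x^{\ast} = \mu/(a\sigma^{2})$. Because $g''(x) = a^{2}\sigma^{2} > 0$ (using $a>0$ and $\sigma^{2}>0$), $g$ is strictly convex, so this stationary point is the unique global minimizer of $g$ and hence the unique global maximizer of the objective. This convexity check is the one place where I must be careful to record that $a>0$ is used, and it is what guarantees the argmax is attained and unique rather than being an inflection or a minimum of the objective.

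Finally I would substitute $x^{\ast}$ back into the objective to obtain the value. Plugging in gives $g(x^{\ast}) = -\tfrac{\mu^{2}}{\sigma^{2}} + \tfrac{1}{2}\tfrac{\mu^{2}}{\sigma^{2}} = -\tfrac{1}{2}\tfrac{\mu^{2}}{\sigma^{2}}$, so the maximum equals $-\exp\!\left(-\tfrac{1}{2}\tfrac{\mu^{2}}{\sigma^{2}}\right)$. To match the intermediate form stated in the lemma, I would note $(\sigma a x^{\ast})^{2} = \sigma^{2}a^{2}\cdot \tfrac{\mu^{2}}{a^{2}\sigma^{4}} = \tfrac{\mu^{2}}{\sigma^{2}}$, so $-\exp\{-\tfrac{1}{2}(\sigma a x^{\ast})^{2}\}$ is exactly this value. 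There is no genuine obstacle here; the entire content is the Gaussian MGF computation, and the only thing requiring a word of justification is the equivalence between maximizing the CARA objective and minimizing its quadratic exponent, together with the convexity that makes the first-order condition sufficient.
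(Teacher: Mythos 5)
Your proof is correct and follows essentially the same route as the paper's: compute the expectation via the Gaussian moment generating function, reduce the problem to optimizing the quadratic exponent, solve the first-order condition, and substitute back — your explicit convexity/second-order check is a small addition the paper leaves implicit. One wording slip: $y \mapsto -\exp(y)$ is strictly \emph{decreasing}, not increasing, which is precisely why maximizing $-\exp(g(x))$ amounts to minimizing $g(x)$; your conclusion is unaffected.
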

	
	{\small
		\begin{proof}[Proof of Lemma \ref{l: static}] Since $z \sim N(\mu,\sigma^2)$, we can rewrite the problem as follows:
		\begin{align*}
		x^*&=\arg\max_x -\exp\left(-axE[z]+\frac{1}{2}a^2x^2V[z]\right) \\
		&=\arg\max_x \;\;ax\mu-\frac{1}{2}a^2x^2\sigma^2
		\end{align*} 
		From FOC, $x^*=\frac{\mu}{a\sigma^2}$. Plugging $x^*$ into $-\exp\left(-a x^*\mu+\frac{1}{2}a^2(x^*)^2\sigma^2\right)$ the second result follows.
	\end{proof}

}
		{\small
	\begin{proof}[Proof of Proposition \ref{pro: prices_myopic}] 
		We show the result for the guess $p_{t}=\alpha+\beta_{0}d_{t}+...+\beta_{K}d_{t-K}$ with $K=q$. This case shows the logic of the proof; the proof for the case starting from an arbitrary lag $K \geq q$ is analogous but more involved, and omitted for simplicity.
		
		From Lemma \ref{l: static}, agents' demand for the risky asset is given by $x_{t}^{n}=\frac{E_{t}^{n}\left[s_{t+1}\right]}{\gamma V\left[s_{t+1}\right]}$. Plugging in our guess for prices, and for $\beta_0\neq -1$, we obtain:
\begin{equation}
	x_{t}^{n}=\frac{\left(1+\beta_{0}\right)\theta_{t}^{n}+\alpha+\beta_{1}d_{t}+...+
		\beta_{q}d_{t-q+1}
		-p_{t}R}{\gamma\left(1+\beta_{0}\right)^{2}\sigma^{2}}
\end{equation}
By market clearing, $\frac{1}{q}\sum_{n=t-q+1}^{t}x_{t}^{n} = 1$, which implies that
\begin{align*}
	&\frac{\left(1+\beta_{0}\right)\frac{1}{q}\sum_{n=t-q+1}^{t}\theta_{t}^{n}}{\gamma\left(1+\beta_{0}\right)^{2}\sigma^{2}}+\frac{\alpha+\beta_{1}d_{t}+...+\beta_{q}d_{t-q+1}-p_{t}R}{\gamma\left(1+\beta_{0}\right)^{2}\sigma^{2}}= 1 .
\end{align*}
By straightforward algebra and the definition of $\theta^{n}_{t}$, it follows that 
\begin{align*}	
	\left(1+\beta_{0}\right)\frac{1}{q}\sum_{n=t-q+1}^{t}\left[\sum_{k=0}^{t-n}w\left(k,\lambda,t-n\right)d_{t-k}\right]+\left[\alpha-\gamma\left(1+\beta_{0}\right)^{2}\sigma^{2}\right]+\beta_{1}d_{t}+...+\beta_{q}d_{t-q+1}=p_{t}R.
\end{align*}
Plugging in (again) our guess for $p_t$ and using the method of undetermined coefficients, we find the expressions for $\alpha$ and the $\beta$'s:
\begin{align}
	-\frac{\gamma\left(1+\beta_{0}\right)^{2}\sigma^{2}}{R-1} &= \alpha \label{eq: alpha_interm}\\
	\left(1+\beta_{0}\right)\frac{1}{q}\sum_{n=t-q+1}^{t-k}w\left(k,\lambda,t-n\right)+\beta_{k+1}&=\beta_{k}R\quad\quad \forall k\in\{0,1,...,q-1\}\\
	0&=\beta_{q}R. \label{eq: 0=bq}
\end{align}

Let $w_{k}$ be the average of the weights assigned to dividend $d_{t-k}$ by each generation in the market at time $t$, i.e., $w_{k}=\frac{1}{q}\sum_{n=t-q+1}^{t}w\left(k,\lambda,t-n\right)$.
Given that a weight of zero is assigned to dividends that a generation did not observe, i.e., for $k>t-n$, we can rewrite $w_{k}=\frac{1}{q}\sum_{n=t-q+1}^{t-k}w\left(k,\lambda,t-n\right)$. Also using $\beta_{q}=0$ from equation (\ref{eq: 0=bq}) we obtain:
\begin{align}
	\left(1+\beta_{0}\right)w_{k}+\beta_{k+1}&=\beta_{k}R\quad\quad \forall k\in\{0,1,...,q-2\}  \label{eq: betak}\\
	\left(1+\beta_{0}\right)w_{q-1}&=\beta_{q-1}R  \label{eq: betaq}
\end{align}
By solving this system of equations, we obtain the expressions in the proposition. In particular, $\left(1+\beta_{0}\right) (w_{q-2}+w_{q-1}/R )=\beta_{q-2}R$ for $k=q-2$, $\left(1+\beta_{0}\right) (w_{q-3} + w_{q-2}/R+ w_{q-1}/R^{2} )=\beta_{q-3}R$ for $k=q-3$, and so on. This allow us to express (\ref{eq: betak}) and (\ref{eq: betaq}) as
\begin{align}
	(1+\beta_{0}) \sum_{j=0}^{k-1} w_{q-(k-j)}/R^{j}  = \beta_{q-k}R\quad \textit{for}~k= 1,...,q. \label{eq:1+beta0}
\end{align}
The last expression (\ref{eq:1+beta0}) implies $\beta_{0} = \frac{\sum_{j=0}^{q-1}  w_{j}/R^{j}  }{R - \sum_{j=0}^{q-1} w_{j}/R^{j}} = \frac{\sum_{j=0}^{q-1}  w_{j}/R^{j+1}  }{ 1- \sum_{j=0}^{q-1} w_{j}/R^{j+1}} $ (from plugging in $k=q$), which in turn, plugged into (\ref{eq: alpha_interm}) allows us to obtain the expression for $\alpha$ from (\ref{eqn:alpha}) in Proposition \ref{pro: prices_myopic}. And expression (\ref{eq:1+beta0}) implies $\beta_{k}= \frac{\sum_{j=0}^{q-1-k}  w_{k+j}/R^{j+1}}{1- \sum_{j=0}^{q-1} w_{j}/R^{j+1}}$ (from substituting $k$ with $q-k$, and using the expression for $\beta_{0}$) as expressed in equation  (\ref{eqn:beta_k}) of the Proposition. The latter also subsumes equation (\ref{eq: betaq}), solved for $\beta_{q-1}$, and the above formula for $\beta_{0}$, and hence holds for $k=0,...q-1$.
\end{proof}

	\begin{proof}[Proof of Proposition \ref{pro:prices-q2-myopic}] 
		For this proof, we use equations (\ref{eq: betak}) and (\ref{eq: betaq}). In addition, note that by construction, $w_k < w_{k-1}$ for $\lambda>0$ since for all generations, $w(k,\lambda,age)$ is decreasing in $k$ and more agents observe the realization of $d_{t-(k-1)}$ than $d_{t-k}$. Given this, it follows that since $\beta_0>0$ then $\beta_{q-1}>0$ and
		\begin{equation}
		\beta_{q-1}=\frac{1}{R}\left(1+\beta_{0}\right)w_{q-1}< \frac{1}{R}[\left(1+\beta_{0}\right)w_{q-2}+\beta_{q-1}]=\beta_{q-2}
		\end{equation}
		In addition, if $\beta_k<\beta_{k-1}$, then:
		\begin{equation}
		\beta_{k-1}=\frac{1}{R}[\left(1+\beta_{0}\right)w_{k-1}+\beta_k] < \frac{1}{R}[\left(1+\beta_{0}\right)w_{k-2}+\beta_{k-1}]=\beta_{k-2}
		\end{equation}
		Thus, the proof that $\beta_k<\beta_{k-1}$ for all $k\in\{1,...,q-1\}$ follows by induction.	
	\end{proof}

	\begin{proof}[Proof of Lemma \ref{lem: compstatics}] To show that $\beta_0$ is increasing in $\lambda$, let $G_q(\lambda)=\sum_{k=0}^{q-1} w_{k}/R^{k+1}$. We thus have $\beta_0=\frac{G_q(\lambda)}{1-G_q(\lambda)}$, and it suffices to show that $G_q'(\lambda)>0\;\;\forall q>0$ and  $\forall \lambda>0$. After some algebra, the terms in $G_q(\cdot)$ can be re-organized as follows:
		\begin{equation}
		G_q(\lambda)=\sum_{age=0}^{q-1} \frac{1}{q} \sum_{k=0}^{age} w(k,\lambda,age)/R^{k+1}
		\end{equation}
	Note that for any $age\in\{0,...,q-1\}$: (i) $\sum_{k=0}^{age} w(k,\lambda,age)=1$ and (ii) for any $\lambda_1,\lambda_2$ such that $\lambda_1>\lambda_2>0$,  $\sum_{k=j}^{age} w(k,\lambda_1,age)<\sum_{k=j}^{age} w(k,\lambda_2,age)$. Thus, the weight distribution given by $\lambda_{2}$ first-order stochastically dominates the weight distribution given by $\lambda_{1}$. Since $1/R>1/R^2>1/R^3>...>1/R^{q-1}$, stochastic dominance implies that for all $age\in \{0,...,q-1\}$, $\sum_{k=0}^{age} c^{k+1} w(k,\lambda_1,age)>\sum_{k=0}^{age} c^{k+1} w(k,\lambda_2,age)$, and thus $G_q(\lambda_1)>G_q(\lambda_2)$. 
	
	\smallskip 
	
	To show the limit results, note that $\lim_{\lambda \rightarrow \infty} w(0,\lambda,age) = 1$, while $\lim_{\lambda \rightarrow \infty} w(k,\lambda,age) = 0$ for all $k>0$.
	\end{proof}

	\begin{proof}[Proof of Proposition \ref{p: rel_demands-myopic}]
		From Propositions \ref{pro: demandsstatic} and \ref{pro: prices_myopic}, we know that, for any $t$, any generations $m \geq n$ both in $\{t-q+1, ..., t\}$ and any $ k \in \{ 0,....,q-1\}$,
		\begin{align*}
			\frac{\partial (x^{n}_{t} - x^{m}_{t}) }{\partial d_{t-k}} = \frac{(1+\beta_{0}) }{\gamma V[s_{t+1}]} \frac{\partial (\theta^{n}_{t} - \theta^{m}_{t}) }{\partial d_{t-k}}.
		\end{align*}
		We note that, for any $n \in \{t-q+1,...,t\}$, $\frac{ \partial \theta^{n}_{t}} {\partial d_{t-k}} = w(k, \lambda , n-t)$ if $k \in \{ 0,...,t-n \}$, and $\frac{ \partial \theta^{n}_{t}} {\partial d_{t-k}} = 0$ if $k \in \{ t-n+1,...,q-1\}$. (Observe that $t -n \leq q-1$.) Hence, it suffices to compare $w(k, \lambda , t-n)$ with $w(k, \lambda , t-m)$ for any $k\in\{ 0,...,q-1  \}$. (As usual, here we adopt the convention that for any $age$, $w(k,\lambda,age) =  0$ for all $k \geq age$.)		
		From Lemma \ref{l: SingleCrossing}, there exists a $k_{0}$ such that $w(k,\lambda,t-n) < w(k,\lambda,t-m)$ for all $k \in \{ 0,..., k_{0}\}$ and $w(k,\lambda,t-n) \geq w(k,\lambda,t-m)$ for the rest of the $k$'s, $k \in \{k_0+1,...,q-1\}$. 		
	\end{proof}
}

\medskip

\noindent The proof of Proposition  \ref{p: rel_demands-myopic-boom} relies on the following first-order stochastic dominance result: 


\begin{lemma} \label{l: FOSD} For any $a \in \{0,1,...\}$, $a' < a$ and any $m \in \{0,...,a\}$, let $F(m,a)\equiv \sum_{j=0}^m w(j,\lambda,a)$. 
		Suppose the conditions of Lemma \ref{l: SingleCrossing} hold; then $F(m,a)\leq F(m,a')$ for all $m\in\{0,...,a\}$.
\end{lemma}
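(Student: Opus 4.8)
The plan is to recognize this as the standard fact that two probability-weight vectors with equal total mass whose difference single-crosses are ordered by first-order stochastic dominance, and to read the crossing structure off directly from Lemma \ref{l: SingleCrossing}. Set $\Delta(j) \equiv w(j,\lambda,a) - w(j,\lambda,a')$ for $j \in \{0,\dots,a\}$ (recalling the convention $w(j,\lambda,a') = 0$ for $j>a'$), and observe that $F(m,a) - F(m,a') = \sum_{j=0}^{m}\Delta(j) =: S(m)$. The goal is then exactly to show $S(m)\le 0$ for every $m \in \{0,\dots,a\}$.

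First I would record the two ingredients. By the normalization $\sum_{j=0}^{a} w(j,\lambda,a) = \sum_{j=0}^{a'} w(j,\lambda,a') = 1$, the total difference vanishes, so $S(a) = 0$. By Lemma \ref{l: SingleCrossing}, applied with $age=a$ and $age'=a'$ and using $\lambda>0$, there is a threshold $k_0$ with $\Delta(j)<0$ for $j\le k_0$ and $\Delta(j)\ge 0$ for $j>k_0$; for the lags $j \in \{a'+1,\dots,a\}$ beyond the younger agent's horizon this is automatic, since there $\Delta(j) = w(j,\lambda,a) > 0$, so the positive-after-crossing half of the sign pattern persists all the way to $j=a$.

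The argument is then that $S(\cdot)$ is a single valley: strictly decreasing while the increments $\Delta$ are negative, and non-decreasing thereafter. Concretely, for $0 \le m \le k_0$ each increment $\Delta(m)<0$, so $S(m)\le S(0) = \Delta(0) < 0$. For $m>k_0$ each increment $\Delta(m)\ge 0$, so $S$ is non-decreasing on $\{k_0,\dots,a\}$; since it terminates at $S(a)=0$, every earlier term in that block satisfies $S(m) \le S(a) = 0$. The two ranges together give $S(m)\le 0$ throughout, i.e.\ $F(m,a)\le F(m,a')$, which is the claim.

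The computation is entirely routine; the only real content is the single-crossing input. The one point to handle with care is the endpoint bookkeeping: one must use the zero-padding convention for $w(\cdot,\lambda,a')$ so that both the total-mass cancellation $S(a)=0$ and the sign pattern of $\Delta$ extend over the full index range $\{0,\dots,a\}$, rather than only over $\{0,\dots,a'+1\}$ where Lemma \ref{l: SingleCrossing} is literally stated.
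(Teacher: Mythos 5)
Your proof is correct and follows essentially the same route as the paper's: split the index range at the single-crossing threshold from Lemma \ref{l: SingleCrossing}, use the negativity of the differences $w(j,\lambda,a)-w(j,\lambda,a')$ on the first block, and on the second block use the non-negativity of those differences together with the total-mass identity $F(a,a)=F(a',a')=1$ (your statement that the partial sums are non-decreasing and terminate at zero is just the paper's tail-sum argument rephrased). Your explicit handling of the zero-padding convention beyond lag $a'+1$ is a welcome clarification, but it is not a departure from the paper's argument.
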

	
	\begin{proof}[Proof of Lemma \ref{l: FOSD}]
	From Lemma \ref{l: SingleCrossing}, we know that there exists a unique $j_0$ where $w(j_0,\lambda,a')-w(j_0,\lambda,a)$ ``crosses" zero. Thus, for  $m\leq j_0$, the result is true because $w(j,\lambda,a')>w(j,\lambda,a)$ for all $j\in \{0,...m\}$. For $m> j_0$, the result follows from the fact that $w(j,\lambda,a')<w(j,\lambda,a)$ for all $j\in \{m,...a\}$ and $F(a,a)=F(a',a')=1$. 
\end{proof}

{\small
	
	\begin{proof}[Proof of Proposition \ref{p: rel_demands-myopic-boom}] 
		
		We first introduce some notation. For any $j \in \{ t-n-k+1,...,t-n\}$, let $w(j,\lambda,t-n-k) = 0$; i.\,e., we define the weights of generation $n+k$ for time periods before they were born to be zero. Thus,  $\sum_{j=0}^{t-n-k}w(j,\lambda,t-n-k)  d_{t-j} = \sum_{j=0}^{t-n}w(j,\lambda,t-n-k)  d_{t-j}$. In addition, we note that $(w(j,\lambda,t-n-k))_{j=0}^{t-n}$ and $(w(j,\lambda,t-n))_{j=0}^{t-n}$ are sequences of positive weights that add to one. 
		
		Let for any $m \in \{0,...,t-n\}$,
		\begin{align*}
		F(m,t-n-k) = \sum_{j=0}^{m} w(j,\lambda,t-n-k) \mbox{ and } F(m,t-n) = \sum_{j=0}^{m} w(j,\lambda,t-n).
		\end{align*}
		These quantities, as functions of $m$, are non-decreasing and $F(t-n,t-n-k) = F(t-n,t-n) = 1$. Moreover, $F(m+1,t-n-k)-F(m,t-n-k) = w(m+1,\lambda,t-n-k)$ and $F(m+1,t-n)-F(m,t-n) = w(m+1,\lambda,t-n)$. Finally, we set $F(-1,t-n) = F(-1,t-n-k) = 0$.   
		
		By these observations, by the definition of $\xi(n,k,t)$, and by straightforward algebra, it follows that,
		{\small
			\begin{align*}
			& \xi(n,k,t) \\
			= &  \frac{\sum_{m=0}^{t-n} (F(m,t-n)-F(m-1,t-n) )  d_{t-m} - \sum_{m=0}^{t-n} (F(m,t-n-k)-F(m-1,t-n-k) )  d_{t-m}}{\gamma (1+\beta_{0}) \sigma^{2}}\\
			= & \frac{\sum_{j=0}^{t-n-1} (d_{t-j}-d_{t-j-1}) (F(j,t-n) - F(j,t-n-k)) }{\gamma (1+\beta_{0}) \sigma^{2}}.  
			\end{align*}}
		
		If the weights are non-decreasing, then $d_{t-j}-d_{t-j-1} \geq 0$ for all $j=0,...,t-n-1$, and it suffices to show that $F(j,t-n) \leq F(j,t-n-k)$ for all $j=0,...,t-n-1$. This follows from applying Lemma  \ref{l: FOSD}  with $a=t-n > t-n-k=a'$.
		
		If the weights are non-increasing, then $d_{t-j}-d_{t-j-1} \leq 0$, and the sign of $\xi(n,k,t)$ changes accordingly.
	\end{proof}
	
	
	\begin{proof}[Proof of Proposition \ref{l: TradeVolumeM}]
		By Propositions \ref{pro: demandsstatic} and \ref{pro: prices_myopic}, it follows that for any $t$ and $n \leq t$,
		
		\begin{align}
		x_{t}^{n} 
		 = \frac{1}{\gamma\sigma^{2}\left(1+\beta_{0}\right)^{2}}
		\left(  \alpha_0 (1-R) + (1+\beta_0)\theta^n_t - R \beta_{0}d_{t}  + \sum_{k=1}^{q-1}\beta_k ( d_{t+1-k} - R d_{t-k})  \right). \label{eqn:TradeVolumeM-0} 
		\end{align}
		
		Thus, for $n\in\{t-q+1,...,t-1\}$,
		\begin{align} 
		x_{t}^{n}-x_{t-1}^{n}= \frac{ (1+\beta_{0})(\theta_{t}^{n}-\theta_{t-1}^{n}) + \mathcal{T}(d_{t:t-q}) 
		}{\gamma\sigma^{2}\left(1+\beta_{0}\right)^{2}}  \label{eqn:TradeVolumeM-1}
		\end{align}
		where $\mathcal{T}(d_{t:t-q}) \equiv \sum_{k=1}^{q-1}\beta_k ( d_{t+1-k} - d_{t-k} - R (d_{t-k}-d_{t-1-k})) - R \beta_{0}(d_{t}-d_{t-1})$. Note that $\mathcal{T}(d_{t:t-q})$ is not cohort specific, i.e., does not depend on $n$. 
		
		The fact that $x_t^t-x_{t-1}^t=x_t^t$ and $x_t^{t-q}-x_{t-1}^{t-q}=-x_{t-1}^{t-q}$, and market clearing imply 
				\begin{align} 
		q^{-1} \left( \sum_{n=t-q}^{t} x^{n}_{t} - x^{n}_{t-1} \right) =0. 
		\end{align}
		This expression and the expression in \eqref{eqn:TradeVolumeM-1} imply that
	\begin{align*}
	\frac{1}{q} \left( \sum_{n=t-q+1}^{t-1}
	\frac{ (1+\beta_{0})(\theta_{t}^{n}-\theta_{t-1}^{n})}{\gamma\sigma^{2}\left(1+\beta_{0}\right)^{2}}  +  x^{t}_{t} - x^{t-q}_{t-1}\right) = -\frac{1}{q}\sum_{n=t-q}^{t}\frac{ \mathcal{T}(d_{t:t-q}) 
	}{\gamma\sigma^{2}\left(1+\beta_{0}\right)^{2}} = -\frac{ \mathcal{T}(d_{t:t-q}) 
}{\gamma\sigma^{2}\left(1+\beta_{0}\right)^{2}}.
\end{align*} 		
Letting $\theta^{t}_{t-1} = \theta^{t-q}_{t} = 0$, it follows that 
	\begin{align*}
	\frac{1}{q} \left( \sum_{n=t-q}^{t}
 (1+\beta_{0})(\theta_{t}^{n}-\theta_{t-1}^{n})  \right) = -\mathcal{T}(d_{t:t-q}).
\end{align*} 
Thus, we can express the change in individual demands for those agents with $n=\{t-q+1,...,t-1\}$ in expression \eqref{eqn:TradeVolumeM-1} as follows:
\begin{equation}
x_{t}^{n}-x_{t-1}^{n}=\chi \left[\left(\theta_{t}^{n}-\theta_{t-1}^{n}\right)-\frac{1}{q}\sum_{n=t-q}^{t}\left(\theta_{t}^{n}-\theta_{t-1}^{n}\right)\right],~\forall n \in \{ t,...,t-q  \}
\end{equation}
where $\chi\equiv\frac{1}{\gamma\sigma^{2}\left(1+\beta_{0}\right)}$. By squaring and summing at both sides and including the demands on the youngest ($n=t$) and oldest ($n=t-q$) market participants the desired result follows. 		

	\end{proof}
}
	
	\pagebreak

	\section{Incorporating Prior Beliefs} \label{app: PriorBeliefs}

	In this section, we show how the model can be extended to allow agents to have prior beliefs; that is, $\tau>0$ in \eqref{eq: PosteriorMeanFormula}.  We will prove results analogous to those in Proposition \ref{pro: prices_myopic}. 
	
	As a reminder, we now suppose that all cohorts are born with prior belief $N(m,\sigma_m^2)$, and update their beliefs during their lifetime as follows:
\begin{equation}
\theta_t^n = (1 - \omega_{t-n}) m + \omega_{t-n} \left[\sum_{k=0}^{t-n}w\left(k,\lambda,t-n\right)d_{t-k}\right]
\end{equation}
where $\omega_{t-n}$ is given by $$\omega_{t-n} = \frac{t-n+1}{\tau + (t-n+1)},$$ and where $\tau$ captures the relative importance of prior beliefs to experience-based beliefs. For example, if agents are Bayesian from experience as described in Section \ref{sec:EBL}, then $\tau = \frac{\sigma^{2}}{\sigma_m^{2}}$. For the purpose of our analysis, however, all that is important is how results vary with $\tau$.

We continue to guess that prices are affine in past dividends, $$p_{t}=\alpha +\beta_{0} d_{t}+...+\beta_{K}d_{t-K}$$ with $K=q$, as in the baseline model. From Lemma \ref{l: static}, agents' demand for the risky asset is given by $x_{t}^{n}=\frac{E_{t}^{n}\left[s_{t+1}\right]}{\gamma V\left[s_{t+1}\right]}$. Plugging in our guess for prices, and for $\beta_0\neq -1$, we obtain:

\begin{equation}
x_{t}^{n}=\frac{\left(1+\beta_{0}\right)\theta_{t}^{n}+\alpha+\beta_{1}d_{t}+...+
	\beta_{q}d_{t-q+1}
	-p_{t}R}{\gamma\left(1+\beta_{0}\right)^{2}\sigma^{2}}
\end{equation}

By market clearing, $\frac{1}{q}\sum_{n=t-q+1}^{t}x_{t}^{n} = 1$, which implies that

\begin{align*}
\frac{\left(1+\beta_{0}\right)\frac{1}{q}\sum_{n=t-q+1}^{t}\theta_{t}^{n}}{\gamma\left(1+\beta_{0}\right)^{2}\sigma^{2}}+\frac{\alpha+\beta_{1}d_{t}+...+\beta_{q}d_{t-q+1}-p_{t}R}{\gamma\left(1+\beta_{0}\right)^{2}\sigma^{2}}= 1 .
\end{align*}

By straightforward algebra and the definition of $\theta^{n}_{t}$, it follows that 

\begin{align*}	
\left(\frac{1}{\gamma\left(1+\beta_{0}\right)^{2}\sigma^{2}}\right)\left[ \left(1+\beta_{0}\right)\frac{1}{q}\sum_{n=t-q+1}^{t}	\theta_{t}^{n}+ \alpha+\beta_{1}d_{t}+...+\beta_{q}d_{t-q+1} - p_{t} R\right] & =  1 \\
\left(1+\beta_{0}\right)\frac{1}{q}\sum_{n=t-q+1}^{t}	\left[(1-\omega_{t-n}) m + \omega_{t-n} \sum_{k=0}^{t-n}w\left(k,\lambda,t-n\right)d_{t-k}\right]+\alpha-\gamma\left(1+\beta_{0}\right)^{2}\sigma^{2} ... &\\
+\beta_{1}d_{t}+...+\beta_{q}d_{t-q+1} & =  p_{t}R. \\
\left[\left(1+\beta_{0}\right)\frac{1}{q}\sum_{n=t-q+1}^{t} (1- \omega_{t-n}) m +\alpha-\gamma\left(1+\beta_{0}\right)^{2}\sigma^{2}\right]  + ...& \\  \left(1+\beta_{0}\right)\frac{1}{q}\sum_{n=t-q+1}^{t}	\sum_{k=0}^{t-n}  \omega_{t-n}  w\left(k,\lambda,t-n\right)d_{t-k}
+\beta_{1}d_{t}+...+\beta_{q}d_{t-q+1} & =  p_{t}R.
\end{align*}

Plugging in (again) our guess for $p_t$ and using the method of undetermined coefficients, we find the expressions for $\alpha$ and the $\beta$'s:

\begin{align}
\frac{\gamma\left(1+\beta_{0}\right)^{2}\sigma^{2}+\left(1+\beta_{0}\right)\frac{1}{q}\sum_{n=t-q+1}^{t} (1- \omega_{t-n}) m}{1-R} &= \alpha \label{eq: alpha_interm}\\
\left(1+\beta_{0}\right)\frac{1}{q}\sum_{n=t-q+1}^{t-k} \omega_{t-n} w\left(k,\lambda,t-n\right)+\beta_{k+1}&=\beta_{k}R\quad\quad \forall k\in\{0,1,...,q-1\}\\
0&=\beta_{q}R. \label{eq: 0=bq}
\end{align}

Where $w_{k}$ is now the average of the weights assigned to dividend $d_{t-k}$ by each generation in the market at time $t$, i.e., $w_{k}=\frac{1}{q}\sum_{n=t-q+1}^{t} \omega_{t-n} w\left(k,\lambda,t-n\right)$. 

Introducing prior beliefs requires two adjustments. First, the constant in prices, $\alpha$, now increases to incorporate the demand driven by prior belief, $m$. Second, all the weights that an agent with age $t-n$ gives to past dividends are now adjusted by $\omega(t-n)$, which keeps track of the importance that these agents assign to their experience-based learning. Such adjustment affects the $\beta's$ in the pricing equation. Given these adjustments, the model is isomorphic to the baseline model. 
	
	\pagebreak 
	
	\section{Population Growth} \label{oa:demographics}
	
	In addition to considering the effects of a one-time shock to population structure, we also explore the implications of population growth. 
	
	In this section of the Online Appendix, we consider an OLG model two-period lived agents where the mass of young agents born every period grows at rate $g$. For this growth setting, we need to set an initial date for the economy, which we define to be $t=0$. Let $y_{t}$ denote the mass of young agents born at time $t$; then $y_{t+1}=\left(1+g\right)y_{t}=y_0(1+g)^t$. We further denote the total mass of people at any point in time $t>0$ as $n_{t}$, and hence $n_{t}=y_t+y_{t-1}=\left(2+g\right)y_{t-1}$. It is easy to check that $n_t=\left(1+g\right)n_{t-1}$; that is, total population grows at rate $g$.
	
	The framework is otherwise as in the `toy model" in Section \ref{sec: ToyModel} of the main paper. The main difference is that now population is growing over time. As a result, we make a different guess for the price function:
	\[p_{t}=\alpha_{0}\left(1+g\right)^{-t}+\beta_{0}d_{t}+\beta_{1}d_{t-1}	\]		
	We verify this guess using our market clearing condition, which requires the demand of the young and the old  to add up to total supply of the asset, one:
	{\small
		\begin{align*}
		1&=y_{t}\frac{E_{t}^{t}\left[p_{t+1}+d_{t+1}\right]-Rp_{t}}{\gamma V\left[p_{t+1}+d_{t+1}\right]}+y_{t-1}\frac{E_{t}^{t-1}\left[p_{t+1}+d_{t+1}\right]-Rp_{t}}{\gamma V\left[p_{t+1}+d_{t+1}\right]} \iff \\
		1 & =  \frac{y_{0}\left(1+g\right)^{t-1}}{\gamma\left(1+\beta_{0}\right)^{2}\sigma^{2}}\left[\left(1+\beta_{0}\right)\left[\left(1+g\right)E_{t}^{t}\left[d_{t+1}\right]+E_{t}^{t-1}\left[d_{t+1}\right]\right]+\left(2+g\right)\left[\alpha_{0}\left(1+g\right)^{-\left(t+1\right)}+\beta_{1}d_{t}-Rp_{t}\right]\right]\\
		\end{align*}}
	and after simple algebra,
	\begin{align*}
		Rp_t  =  \left(1+\beta_{0}\right)\left\{ \frac{1+g}{2+g}d_{t}+\frac{1}{2+g}\left[\left(1-\omega\right)d_{t-1}+\omega d_{t}\right]\right\} + \frac{\alpha_{0}}{\left(1+g\right)^{t+1}}+\beta_{1}d_{t}-\frac{\gamma \sigma^{2} (1+\beta_{0})^{2}}{y_{0}\left(2+g\right)\left(1+g\right)^{t-1}}
	\end{align*}
	We plug in $p_t=\alpha_{0}\left(1+g\right)^{-t}+\beta_{0}d_{t}+\beta_{1}d_{t-1}$ and we use the method of undetermined coefficients to obtain:		
	\begin{align*}
	\alpha_{0} & =  -\frac{\gamma\left(1+\beta_{0}\right)^{2}\sigma^{2}}{R-\frac{1}{1+g}}\frac{\left(1+g\right)}{y_{0}\left(2+g\right)} \\
	R\beta_{0} &=\left(1+\beta_{0}\right)\left(\frac{1+g}{2+g}+\frac{1}{2+g}\omega\right) +\beta_{1} \\
	R\beta_{1} &=\left(1+\beta_{0}\right)\frac{1-\omega}{2+g}
	\end{align*}
	
	Let $\alpha_t \equiv \alpha_0 (1+g)^{-t}$ and $\gamma \equiv \frac{y_t}{n_t}$ denote the fraction of young agents, which is easy to verity is constant over time. Then, we can rewrite the above equations as
	
	\begin{align*}
	\alpha_{t} & =  -\frac{\gamma\left(1+\beta_{0}\right)^{2}\sigma^{2}}{R-\frac{1}{1+g}}\frac{1+g}{n_t} \\
	R\beta_{0} &=\left(1+\beta_{0}\right)\left(\gamma+(1-\gamma)\omega\right) +\beta_{1} \\
	R\beta_{1} &=\left(1+\beta_{0}\right)(1-\gamma)(1-\omega).
	\end{align*}
	
The latter expressions reveal that the total mass of agents in the market is reflected only in the price constant, while the fraction of young people in the market determines the dividend loadings $\beta_0$ and $\beta_1$. Overall, we see that adding population growth generates to our model generates a positive trend in prices. The relative reliance of prices on the most recent experiences (dividends) is increasing  in the population growth rate.

\pagebreak 
\counterwithin{table}{section}
\renewcommand\thetable{OA.\arabic{table}}
\section{Empirical Analysis}\label{sec:Emperical Analysis}

We use two alternative approaches to measure the fraction of younger agents (below 50 years of age) in the market. First, we compute an indicator variable that equals one when the fraction of young agents in the market is above 0.5 and zero otherwise, 
$\mathbb{I}\{\text{Fraction of young investors}_t > 0.5\}$.
Here, the fraction of young investors is based on their relative cohort sizes, with
\begin{align*}
\text{Fraction of young investors}_t =  \frac{\sum_{j}\mathbb{I}(\text{age}_{j,t}<50) \cdot w^{\text{scf}}_{j,t}}{\sum_{j=1} w^{\text{scf}}_{j,t}},
\end{align*}
where $\text{age}_{j,t}$ is the age of household head $j$ in year $t$, and $w^{\text{scf}}_{j,t}$ is the weight given to household head $j$ in year $t$ in the Survey of Consumer Finance to compensate for unequal probabilities of household selection in the original design and for unit nonresponse (failure to obtain an interview). 

Our second proxy captures the wealth of younger generations of investors. We construct an indicator variable that equals 1 when the fraction of liquid wealth owned by agents below 50 is above the 1960-2013 sample average of their liquid wealth and zero otherwise, $\mathbb{I}(\text{Fraction of young investors' wealth}_t > \text{Sample average})$,\footnote{The SCF documents the age and wealth (liquid assets) information of each respondent in 1960, 1962, 1963, 1964, 1967, 1968, 1969, 1970, 1971, 1977, 1983, 1986, 1989, 1992, 1995, 1998, 2001, 2004, 2007, 2010, and 2013. We use linear interpolation to fill the missing years and construct a yearly sample from 1960 to 2013. The liquid assets variable is defined to be the sum of assets in an investor's checking, savings, and money-market accounts, as well as any call accounts at brokerages and prepaid cards.} 
i.\,e., \begin{align*}
\text{Fraction of young investors' wealth}_t =&  \frac{\sum_{j} \mathbb{I}(\text{age}_{j,t}<50) \cdot w^{scf}_{j,t} \cdot \text{Wealth}_{j,t}} {\sum_{j} w^{scf}_{j,t} \cdot \text{Wealth}_{j,t}} 
\end{align*}

For robustness, we also consider thresholds $0.55$ and $0.60$ for the age-based first proxy, and $0.9\times \text{Sample average}$ and $1.1\times \text{Sample average}$ for the age- and wealth-based second proxy. Results are presented in Online-Appendix Table \ref{tab:onapp_robustness}. We estimate a positive $\delta_1$ coefficient, which is significant when requiring a fraction of 0.6 for the age-based coefficient and when requiring wealth above 0.9 of the sample average for the age- and wealth-based coefficient.

\begin{table} 
	\centering
	\caption{Markov-Switching Regime (MSR) model} 
	\label{tab:onapp_robustness}
	
	\begin{small}
		\begin{tablenotes}
		\item Robustness checks of the estimation results in Table \ref{tab:return}. $p_t - d_t$ is the log of the price-to-dividend ratio, and regressed on its lagged values interacted with the demographic indicator variable $Y_t$ for the fraction of young investors. We use different thresholds to construct $Y_t$. In column (1), $Y_t$ equals 1 when the fraction of investors below 50 is larger than 0.55, and in column (2), the threshold is 0.60. In column (3), $Y_t$ equals 1 when the fraction of wealth of investors below 50 is larger than 90\% of their 1960-2013 sample average, and in column (4), the threshold is 110\% of the sample average. As in Table \ref{tab:return}, the demographic data including age and wealth (liquid assets) of stock-market participants is from the SCF, stock data from Robert Shiller's website.
		\end{tablenotes}
	\end{small}
	\begin{threeparttable}
		
		\vspace{0.5em}
		{ \def\sym#1{\ifmmode^{#1}\else\(^{#1}\)\fi} 
			\centering 
						\begin{tabularx}{\textwidth}{X c*{4}{c}}
		  \hline\hline
			                             & \multicolumn{4}{c}{Dependent variable: $p_{t} - d_{t}$}\\
		  \cline{2-5}	
			&\multicolumn{2}{c}{$Y_t$ age-based}&\multicolumn{2}{c}{$Y_t$ age/wealth-based}\\
			&\multicolumn{1}{c}{(1)}&\multicolumn{1}{c}{(2)}&\multicolumn{1}{c}{(3)}&\multicolumn{1}{c}{(4)}\\
				
				\hline    
				
				&  &  &  & \\
				$\delta_{1}$ &0.430	&1.134**	&0.681**	&-0.339	\\
				&(0.288)	&(0.188)	&(0.209)	&(0.268)	\\[2mm]
				$\delta_{2}$ &0.016	&-0.194		&-0.082		&-0.137	\\
				&(0.304)	&(0.322)	&(0.210)	&(0.422)	\\[2mm]
				$\delta_{3}$ &-0.460**		&-0.778**	&-0.629**	&0.428*	\\
				&(0.225)	&(0.282)	&(0.161)	&(0.220)	\\[2mm]
				$\beta_{1}$ &0.623**	&0.297**	&0.417**	&1.084**	\\
				&(0.259)	&(0.134)	&(0.177)	&(0.081)	\\[2mm]
				$\beta_{2}$&-0.099	&-0.132		&-0.168	&-0.223**	\\
				&(0.219)	&(0.140)	&(0.166)	&(0.100)	\\[2mm]
				$\beta_{3}$&0.298**	&0.452**	&0.580**	&-0.044	\\
				&(0.151)	&(0.106)	&(0.140)	&(0.064)	\\[2mm]
				$\mu(S_{1})$&4.925**		&5.780**	&5.734**	&5.430**	\\
				&(1.541)	&(2.915)	&(1.513)	&(1.683)	\\[2mm]
				$\mu(S_{2})$&19.590**		&14.100**	&20.310**	&19.670**	\\
				&(3.399)	&(2.959)	&(3.144)	&(3.102)	\\[2mm]
				$\sigma$&3.948	&3.259	&3.792	&3.737	\\
				&(0.392)	&(0.393)	&(0.397)	&(0.375)	\\[2mm]
				$Q_{11}$&0.953	&0.779		&0.956	&0.931	\\
				&(0.030)	&(0.152)	&(0.026)	&(0.035)	\\[2mm]
				$Q_{21}$&0.374	&0.115	&0.365		&0.473	\\
				&(0.214)	&(0.068)	&(0.210)	&(0.220)	\\
				\hline
				\textit{N} &51	&51	&51	&51 \\
				\hline\hline
				
			\end{tabularx}
		}
		\begin{small}
			\begin{tablenotes}
				\item Standard errors in parentheses. * significant at 10\%; ** significant at 5\%.
			\end{tablenotes}
		\end{small}		
	\end{threeparttable}
\end{table}

\end{appendices}
\end{document}